\newcommand{\fixme}[1]{\footnote{\textcolor{red}{\textbf{FIXME!!!} #1}}}
\newcommand{\eat}[1]{}
\newcommand{\forreview}[1]{}
\newcommand{\confonly}[1]{}
\newcommand{\fullversion}[1]{#1}
\newcommand{\commented}[1]{}
\newcommand{\todo}[1]{\vspace{1ex}\noindent{}\textbf{Todo}: #1}
\newcommand{\naive}{na\"{\i}ve\xspace}
\newcommand{\myparagraph}[1]{\vspace{1ex}\noindent\textbf{#1.}\hspace{1em}}
\newcommand{\myurl}[1]{\url{#1}}
\newcommand{\goodgap}{\hfill}
\newcommand{\size}[1]{|#1|}
\newcommand{\union}{\cup}
\newcommand{\intersect}{\cap}
\newcommand{\conj}{\wedge}
\newcommand{\floor}[1]{\lfloor #1 \rfloor}
\newcommand{\twoldots}{\,\ldotp\ldotp\xspace}
\newcommand{\issubgraph}{\sqsubseteq}
\newcommand{\str}[1]{\texttt{#1}}
\newcommand{\set}[1]{\{\, #1 \,\}}
\newcommand{\triple}[1]{\langle\,#1\,\rangle}
\newcommand{\sumv}[1]{\|#1\|_1}
\newcommand{\qgram}{$q$-gram\xspace}
\newcommand{\qgrams}{$q$-grams\xspace}
\newcommand{\qchunk}{$q$-chunk\xspace}
\newcommand{\kwise}{$k$-wise\xspace}
\newcommand{\ppjoin}{\textsf{PPJoin}\xspace}
\newcommand{\adaptprefix}{\textsf{AdaptSearch}\xspace}
\newcommand{\allpairs}{\textsf{AllPairs}\xspace}
\newcommand{\partalloc}{\textsf{PartAlloc}\xspace}
\newcommand{\pivotal}{\textsf{Pivotal}\xspace}
\newcommand{\fnn}{\textsf{FNN}\xspace}
\newcommand{\gph}{\textsf{GPH}\xspace}
\newcommand{\pars}{\textsf{Pars}\xspace}
\newcommand{\pkwise}{\textsf{pkwise}\xspace}
\newcommand{\ringalg}{\textsf{Ring}\xspace}
\newtheorem{theorem}{Theorem}
\newtheorem{definition}{Definition}
\newtheorem{corollary}{Corollary}
\newtheorem{lemma}{Lemma}
\newtheorem{example}{Example}
\newtheorem{problem}{Problem}
\long\def\symbolfootnote[#1]#2{\begingroup%
\def\thefootnote{\fnsymbol{footnote}}\footnotetext[#1]{#2}\endgroup}
\begin{document}

\title{%
  Pigeonring: A Principle for Faster \\ Thresholded Similarity Search
}


\numberofauthors{2}

\author{
  \alignauthor
  Jianbin Qin \\
  \affaddr{The University of Edinburgh} \\
  \affaddr{United Kingdom} \\
  \email{jqin@inf.ed.ac.uk}
  \alignauthor
  Chuan Xiao \Envelope\\ 
  \affaddr{Nagoya University} \\
  \affaddr{Japan} \\
  \email{chuanx@nagoya-u.jp}
}

\maketitle


\begin{abstract}
  The pigeonhole principle states that if $n$ items are contained in $m$ boxes, 
  then at least one box has no more than $n / m$ items. It is utilized 
  to solve many data management problems, especially for thresholded similarity 
  searches. Despite many pigeonhole principle-based solutions proposed in the 
  last few decades, the condition stated by the principle is weak. It only 
  constrains the number of items in a single box. By organizing the boxes in a 
  ring, we propose a new principle, called the pigeonring principle, which 
  constrains the number of items in multiple boxes and yields stronger conditions. 
  
  To utilize the new principle, we focus on problems defined in the form 
  of identifying data objects whose similarities or distances to the query is 
  constrained by a threshold. Many solutions to these problems utilize the 
  pigeonhole principle to find candidates that satisfy a filtering condition. 
  By the new principle, stronger filtering conditions can be 
  established. We show that the pigeonhole principle is a special case of the 
  new principle. This suggests that all the pigeonhole principle-based solutions 
  are possible to be accelerated by the new principle. A universal filtering 
  framework is introduced to encompass the 
  solutions to these problems based on the new principle. 
  Besides, we discuss how to quickly find candidates specified by the new 
  principle. The implementation requires only minor modifications on top of 
  existing pigeonhole principle-based algorithms. Experimental results on real 
  datasets demonstrate the applicability of the new principle as well as 
  the superior performance of the algorithms based on the new principle. 
\end{abstract}


\section{Introduction} \label{sec:intro}
The pigeonhole principle (a.k.a. Dirichlet's box principle or Dirichlet's 
drawer principle) 
is a simple but a powerful tool in combinatorics. It has been utilized to 
solve a variety of data management problems, especially for search 
problems involving approximate match and threshold constraints, such as 
Hamming distance search and set similarity search. The pigeonhole principle 
has many forms. For data management, the most frequently used form 
is stated below (though sometimes it is not explicitly claimed the principle 
is utilized): 

\emph{If no more than $n$ items are put into $m$ boxes, then at least one 
box must contain no more than $n / m$ items.}



Many solutions to these data management problems adopt the principle to 
develop filtering techniques. Although the principle has become a 
prevalent tool for such tasks, we observe an \emph{inherent} drawback 
of these solutions which lies in the principle itself: the 
constraint is applied on the $m$ boxes \emph{individually}, as shown 
in the following example.
\begin{example}
  Suppose we have $m = 5$ boxes and search for the results such that the total 
  number of items is no more than $n = 5$. By the pigeonhole principle, the 
  constraint for filtering is: for every result, there exists a box which 
  contains no more than $n/m = 1$ item. Such filter is easily passed if only 
  a box fulfills this requirement. Let $b_i$ denote the number of items in 
  the $i$-th box~\footnote{For ease of computing modulo operation, the 
  subscript $i$ starts with $0$ in this paper, unless otherwise specified. 
  In addition, we let subscript $i = i \bmod m$ whenever $i \geq m$.}. 
  Consider the two box layouts $(b_0, \ldots, b_{m-1})$ 
  in Figure~\ref{fig:idea-pigeonring}: $(2, 1, 2, 2, 1)$ and $(2, 0, 3, 1, 2)$. 
  Both have a total of $8 > 5$ items, but pass the filter as both have 
  at least one $b_i \leq 1$. 
\end{example}
This example showcases that the constraint by the pigeonhole principle is 
\emph{weak}, rendering the filtering power very limited. 

In this paper, we seek stronger constraints by aggregated conditions on 
\emph{multiple} boxes. By placing the $m$ boxes $b_0, \ldots, b_{m-1}$ 
(without loss of generality,) clockwise in a \emph{ring} where $b_0$ is next 
to $b_{m-1}$, and going clockwise on the ring, we observe: 

\emph{If no more than $n$ items are put into $m$ boxes, then for every length 
$l$ in $[1 \twoldots m]$, there exist $l$ consecutive boxes which contain a 
total of no more than $l \cdot n / m$ items.}

We call it the basic form of the \emph{pigeonring principle}. Consider the 
above example. For every result, there must be two consecutive boxes which 
contain a total of no more than $2n/m = 2$ items, three consecutive 
boxes which contain a total of no more than $3n/m = 3$ items, and so 
on. For the layout $(2, 1, 2, 2, 1)$ which passes the pigeonhole 
principle-based filter, when $l = 2$, we have $b_0 + b_1 = 3$, $b_1 + b_2 = 3$, 
$b_2 + b_3 = 4$, $b_3 + b_4 = 3$, and $b_4 + b_0 = 3$. Since there are no 
two consecutive boxes with a sum of $\leq 2$ items, it is filtered. 

On the basis of the basic form of the pigeonring principle, we discover its 
strong form: 

\emph{If no more than $n$ items are put into $m$ boxes, then there exists 
at least one box such that for every $l \in [1 \twoldots m]$, starting from 
this box and going clockwise, the $l$ consecutive boxes contain a total of no 
more than $l \cdot n / m$ items.} 

In short, there exists $i \in [0 \twoldots m - 1]$, such that $b_i \leq n/m$, 
$b_i + b_{i+1} \leq 2n/m$, $b_i + b_{i+1} + b_{i+2} \leq 3n/m$,  
$\ldots$ For the two layouts $(2, 1, 2, 2, 1)$ and $(2, 0, 3, 1, 2)$, when 
$l = 2$, since we cannot find any $i$ such that $b_i \leq 1$ and $b_i + b_{i+1} \leq 2$, 
both are filtered. Despite being exemplified by real life objects, the new 
principle also holds when $n$ is a real number. To the best of our knowledge, 
we are the first to discover this property.

\begin{figure} [t]
  \centering
  \subfigure[]{
    \begin{tikzpicture} [scale = 0.8]
  \def\bucketzero[#1,#2,#3,#4,#5]#6{%
  \node[draw, cylinder, alias=cyl, shape border rotate=90, aspect=#3, %
  minimum height=#1, minimum width=#2, outer sep=-0.5\pgflinewidth, 
  fill=white] (#4) at #5 {};%
  \node[anchor=south, yshift=+1mm] at (#4.north) {\large #6}; 
  }
  \def\bucketone[#1,#2,#3,#4,#5]#6{%
  \node[draw, cylinder, alias=cyl, shape border rotate=90, aspect=#3, %
  minimum height=#1, minimum width=#2, outer sep=-0.5\pgflinewidth, 
  fill=white] (#4) at #5 {};%
  \node[circle, draw=black, inner color=red!80, outer color=red!20] at #5 {};
  \node[anchor=south, yshift=+1mm] at (#4.north) {\large #6}; 
  }
  \def\buckettwo[#1,#2,#3,#4,#5]#6{%
  \node[draw, cylinder, alias=cyl, shape border rotate=90, aspect=#3, %
  minimum height=#1, minimum width=#2, outer sep=-0.5\pgflinewidth, 
  fill=white] (#4) at #5 {};%
  \node[circle, draw=black, inner color=red!80, outer color=red!20, xshift=-2mm] at #5 {};
  \node[circle, draw=black, inner color=red!80, outer color=red!20, xshift=+2mm] at #5 {};  
  \node[anchor=south, yshift=+1mm] at (#4.north) {\large #6}; 
  }
  \def\bucketthree[#1,#2,#3,#4,#5]#6{%
  \node[draw, cylinder, alias=cyl, shape border rotate=90, aspect=#3, %
  minimum height=#1, minimum width=#2, outer sep=-0.5\pgflinewidth, 
  fill=white] (#4) at #5 {};%
  \node[circle, draw=black, inner color=red!80, outer color=red!20, yshift=+1.710mm] at #5 {};
  \node[circle, draw=black, inner color=red!80, outer color=red!20, xshift=-2mm, yshift=-1.955mm] at #5 {};  
  \node[circle, draw=black, inner color=red!80, outer color=red!20, xshift=+2mm, yshift=-1.955mm] at #5 {};    
  \node[anchor=south, yshift=+1mm] at (#4.north) {\large #6}; 
  }
  
  \begin{scope} 
    \draw [dashed] (3,-1) ellipse (3cm and 1cm); 
    
    \buckettwo[30,30,1,x,(3,0)] {$b_0$};
    \bucketone[30,30,1,x,(6,-0.7)] {$b_1$};  
    \buckettwo[30,30,1,x,(4.5,-2)] {$b_2$};
    \buckettwo[30,30,1,x,(1.5,-2)] {$b_3$};
    \bucketone[30,30,1,x,(0,-0.7)] {$b_4$};  
    
    \draw [->, line width=0.5pt] (4.1,0.4) -- (5.1,0.15); 
    \draw [->, line width=0.5pt] (6,-1.7) -- (5.5,-2); 
    \draw [->, line width=0.5pt] (3.5,-2.4) -- (2.5,-2.4);   
    \draw [->, line width=0.5pt] (0.5,-2) -- (0,-1.7); 
    \draw [->, line width=0.5pt] (0.9,0.15) -- (1.9,0.4); 
    
  \end{scope}
  
\end{tikzpicture}

    \label{fig:idea-pigeonring-hole}
  }
  \goodgap  
  \subfigure[]{
    \begin{tikzpicture} [scale = 0.8]
  \def\bucketzero[#1,#2,#3,#4,#5]#6{%
  \node[draw, cylinder, alias=cyl, shape border rotate=90, aspect=#3, %
  minimum height=#1, minimum width=#2, outer sep=-0.5\pgflinewidth, 
  fill=white] (#4) at #5 {};%
  \node[anchor=south, yshift=+1mm] at (#4.north) {\large #6}; 
  }
  \def\bucketone[#1,#2,#3,#4,#5]#6{%
  \node[draw, cylinder, alias=cyl, shape border rotate=90, aspect=#3, %
  minimum height=#1, minimum width=#2, outer sep=-0.5\pgflinewidth, 
  fill=white] (#4) at #5 {};%
  \node[circle, draw=black, inner color=red!80, outer color=red!20] at #5 {};
  \node[anchor=south, yshift=+1mm] at (#4.north) {\large #6}; 
  }
  \def\buckettwo[#1,#2,#3,#4,#5]#6{%
  \node[draw, cylinder, alias=cyl, shape border rotate=90, aspect=#3, %
  minimum height=#1, minimum width=#2, outer sep=-0.5\pgflinewidth, 
  fill=white] (#4) at #5 {};%
  \node[circle, draw=black, inner color=red!80, outer color=red!20, xshift=-2mm] at #5 {};
  \node[circle, draw=black, inner color=red!80, outer color=red!20, xshift=+2mm] at #5 {};  
  \node[anchor=south, yshift=+1mm] at (#4.north) {\large #6}; 
  }
  \def\bucketthree[#1,#2,#3,#4,#5]#6{%
  \node[draw, cylinder, alias=cyl, shape border rotate=90, aspect=#3, %
  minimum height=#1, minimum width=#2, outer sep=-0.5\pgflinewidth, 
  fill=white] (#4) at #5 {};%
  \node[circle, draw=black, inner color=red!80, outer color=red!20, yshift=+1.710mm] at #5 {};
  \node[circle, draw=black, inner color=red!80, outer color=red!20, xshift=-2mm, yshift=-1.955mm] at #5 {};  
  \node[circle, draw=black, inner color=red!80, outer color=red!20, xshift=+2mm, yshift=-1.955mm] at #5 {};    
  \node[anchor=south, yshift=+1mm] at (#4.north) {\large #6}; 
  }
  
  \begin{scope} 
    \draw [dashed] (3,-1) ellipse (3cm and 1cm); 
    
    \buckettwo[30,30,1,x,(3,0)] {$b_0$};
    \bucketzero[30,30,1,x,(6,-0.7)] {$b_1$};  
    \bucketthree[30,30,1,x,(4.5,-2)] {$b_2$};
    \bucketone[30,30,1,x,(1.5,-2)] {$b_3$};
    \buckettwo[30,30,1,x,(0,-0.7)] {$b_4$};  
    
    \draw [->, line width=0.5pt] (4.1,0.4) -- (5.1,0.15); 
    \draw [->, line width=0.5pt] (6,-1.7) -- (5.5,-2); 
    \draw [->, line width=0.5pt] (3.5,-2.4) -- (2.5,-2.4);   
    \draw [->, line width=0.5pt] (0.5,-2) -- (0,-1.7); 
    \draw [->, line width=0.5pt] (0.9,0.15) -- (1.9,0.4); 
        
  \end{scope}
  
\end{tikzpicture}

    \label{fig:idea-pigeonring-ring}
  }
  \caption{Illustration of the pigeonring principle ($n = 5$, $m = 5$).}
  \label{fig:idea-pigeonring}    
\end{figure}

To utilize the pigeonring principle, we focus on the problems which have 
the following form: $f$ is a function that maps a pair of objects to a 
real number. Given a query object $q$, find all objects $x$ in a database 
such that $f(x, q)$ is not greater (or not smaller) than a threshold $\tau$. 
We call it a $\tau$-selection problem. It covers many problems, especially 
for various similarity searches to cope with specific data types and 
similarity measures. These problems are important for numerous applications, 
including search and retrieval tasks, data cleaning, data integration, 
etc. The \naive algorithm for a $\tau$-selection problem needs to access 
every object in the database, and thus cannot scale well to large datasets. 
For the sake of efficiency, many exact solutions~\cite{DBLP:conf/icde/LiuST11,DBLP:conf/ssdbm/ZhangQWSL13,DBLP:conf/cvpr/NorouziPF12,DBLP:conf/icde/QinWXWLI18,DBLP:conf/vldb/ArasuGK06,DBLP:journals/pvldb/DengLWF15,DBLP:journals/pvldb/LiDWF12,DBLP:journals/tkde/WangQXLS13,DBLP:journals/tods/Qin0XLLW13,DBLP:conf/sigmod/DengLF14,DBLP:journals/vldb/ZhaoXLZW18} 
to $\tau$-selection problems adopt the filter-and-refine strategy, and 
utilize the \textbf{pigeonhole principle} to find a set of candidates 
that satisfy necessary condition of the $f(x, q)$ constraint. Since 
computing $f(x, q)$ for the candidates is usually expensive, the efficiency 
critically depends on the filtering power measured by the number of 
candidates. Based on the \textbf{pigeonring principle}, stronger filtering 
conditions can be developed to fundamentally reduce the candidate number. 

We analyze the filtering power of the pigeonring principle and show the 
candidates it produces are guaranteed to be a \emph{subset} of those produced by 
the pigeonhole principle. It is easy to see that the pigeonring principle 
contains the pigeonhole principle as a special case when $l = 1$. Thus, 
\emph{all} the pigeonhole principle-based methods are possible to be accelerated  
by the pigeonring principle. We also discuss the case when variable threshold 
allocation and integer reduction, two important techniques for $\tau$-selection 
problems, are present, so that they can be seemlessly integrated into our principle. 

We describe a \emph{universal} filtering framework which applies to all 
pigeonring (and of course, pigeonhole) principle-based methods for 
$\tau$-selection problems. We answer two questions: on what 
condition a filtering instance is complete and on what condition a filtering 
instance is tight. Although existing studies have developed complete and 
tight filtering methods for specific $\tau$-selection problems, the two 
questions are yet to be answered from a general perspective. 
Case studies are shown for several common $\tau$-selection problems. 
Moreover, we discuss the indexing and candidate generation techniques for the 
pigeonring principle. It only requires minor modifications on top of the 
existing pigeonhole principle-based methods. 

To show the applicability of the new principle and the efficiency of the 
resulting algorithms, we conduct experiments on four $\tau$-selection problems 
which cover a variety of data types and applications. The results on real 
datasets show that by simply applying the new principle on the existing 
pigeonhole principle-based methods, 
the search can be significantly accelerated (e.g., 15 times for Hamming 
distance search). 

Since the pigeonring principle holds as a free extension of the pigeonhole 
principle, we believe that the applications of 
the pigeonring principle are \emph{far beyond} the scope of $\tau$-selection 
problems. 
We leave them as future work. 

Our contributions are summarized as:
\begin{itemize}
  \item We develop the pigeonring principle which exploits conditions on multiple 
  boxes and hence yields inherently stronger constraints than the pigeonhole principle 
  does. The new principle can be utilized to solve $\tau$-selection problems 
  efficiently as filtering conditions. 
  \item We propose a universal filtering framework which encapsulates all the 
  pigeonring (pigeonhole) principle-based solutions to $\tau$-selection problems. 
  \item We explain how to quickly find the candidates satisfying the filtering 
  condition by the pigeonring principle with easy modifications on existing 
  algorithms. 
  \item We perform extensive experiments on real datasets. The results demonstrate 
  the applicability of the pigeonring principle and the efficiency of the algorithms  
  equipped with the pigeonring principle-based filtering. 
\end{itemize}

The rest of the paper is organized as follows: Section~\ref{sec:preliminaries} 
introduces the pigeonhole principle and the $\tau$-selection problem. 
Section~\ref{sec:pigeonring-principle} presents the pigeonring principle. 
Section~\ref{sec:threshld-allocation} describes the integration of variable 
threshold allocation and integer reduction to pigeonring principle. 
Section~\ref{sec:framework} introduces the filtering framework. 
Section~\ref{sec:applications} shows case studies for several 
$\tau$-selection problems. 
Indexing and candidate generation techniques as well as cost analysis 
are presented in Section~\ref{sec:index}. 
Experimental results are reported in Section~\ref{sec:exp}. 
Section~\ref{sec:related} surveys related work. Section~\ref{sec:con} 
concludes the paper. 




\section{Preliminaries} \label{sec:preliminaries}
Table~\ref{tab:notation} lists the frequently used notations in this paper. 

\subsection{Pigeonhole Principle} \label{sec:pigeonhole-principle}
The simple form of the pigeonhole principle states that if $(n + 1)$ 
items are put into $n$ boxes, then at least one box has two 
or more of the items. By generalizing to real numbers, the 
principle is formally stated as follows. 
\begin{theorem} [Pigeonhole Principle~\cite{brualdi2017introductory}] \label{thm:pigeonhole-principle}
  Let $\,b_0$, \ldots, $b_{m-1}$ be $m$ real numbers.
  If $b_0 + b_1 + \ldots + b_{m-1} \leq n$, 
  then there exists at least one $b_i, i \in [0 \twoldots m - 1]$, such that $b_i \leq n / m$. 
\end{theorem}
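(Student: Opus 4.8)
The plan is to argue by contradiction, exploiting the fact that the claim concerns only the total sum and the ordering of the $b_i$, so that nothing special is required for real-valued (as opposed to integer) items. First I would suppose, for the sake of contradiction, that the conclusion fails, i.e., that every box strictly exceeds the average: $b_i > n/m$ for all $i \in [0 \twoldots m-1]$.

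Next I would sum these $m$ strict inequalities. Since each of the $m$ terms is strictly larger than $n/m$, the total satisfies $\sum_{i=0}^{m-1} b_i > m \cdot (n/m) = n$. This directly contradicts the hypothesis $b_0 + \cdots + b_{m-1} \leq n$, so the supposition is untenable and at least one index $i$ with $b_i \leq n/m$ must exist. The crucial point is that strict inequalities in the $m$ terms force a strict inequality in the sum, which is incompatible with the non-strict upper bound $n$.

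An equivalent direct formulation, which I might prefer for clarity, is to let $b_j = \min_{i} b_i$ be (an index attaining) a smallest value. If $b_j > n/m$ held, then $b_i \geq b_j > n/m$ for every $i$, and summing again overshoots $n$; hence the minimum, and therefore some concrete $b_i$, is at most $n/m$. I do not anticipate any genuine obstacle here: the only point worth flagging is that the passage from the classical integer statement to real numbers is free, because the argument invokes nothing beyond additivity and the order structure of $\mathbb{R}$, both of which hold verbatim for reals.
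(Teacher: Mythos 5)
Your proof is correct: the contradiction argument (assume every $b_i > n/m$, sum the $m$ strict inequalities to get $\sum_{i=0}^{m-1} b_i > n$, contradicting the hypothesis) is the standard one, and the paper itself states this theorem without proof, citing Brualdi's textbook. Your approach is also exactly the argument the paper does spell out for the analogous integer-reduction variant (Theorem~\ref{thm:pigeonhole-integer-reduction}), so it coincides with the reasoning the paper relies on; your remark that the extension from integers to reals is free because only additivity and the order structure of $\mathbb{R}$ are used is accurate.
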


\begin{table}[t]
  \centering
  \caption{Frequently used notations.}
  \label{tab:notation}
  \begin{tabular}{| l | l || l | l |}
    \hline
    Sym.           & Description           & Sym.   & Description        \\\hline
    $x$, $y$       & Object                & $q$    & Query object       \\\hline
    $\mathcal{O}$  & Object universe       & $X$    & Dataset of objects \\\hline
    $f$            & Selection function    & $\tau$ & Selection threshold \\\hline    
    $m$            & \#boxes               & $n$    & (bound of) \#items \\\hline
    $b_i$          & \#items in $i$-th box & $t_i$  & Threshold of $b_i$ \\\hline
    $B$            & Sequence of $b_i$     & $T$    & Sequence of $t_i$  \\\hline
    $\sumv{\cdot}$ & Sum of elements       & $l$    & Chain length       \\\hline
    $c_i^l$        & \multicolumn{3}{l |}{Chain of length $l$, starting from $b_i$} \\\hline
    $C_B$          & Set of chains on $B$  & $d$    & Dimensionality     \\\hline
    $F$            & Featuring function    & $D$    & Bounding function  \\\hline
  \end{tabular}
\end{table}

\subsection{{\large \textsf{$\tau$}}-selection Problems} \label{sec:tau-selection-problem}
The pigeonhole principle has been utilized to solve many data management 
problems. Particularly, it is often used on the problems of finding objects 
in a database whose similarities or distances to a query object are 
constrained by a threshold. These problems can be generalized by the 
following form. 
\begin{problem} [$\tau$-selection Problem] \label{prb:tau-selection}
  Let $\mathcal{O}$ denote an object universe. $x$ and $y$ are two 
  objects in $\mathcal{O}$. $f: \mathcal{O} \times \mathcal{O} \to \mathbb{R}$ 
  is a function which evaluates a pair of objects. 
  Given a collection of data objects $X \subseteq \mathcal{O}$, 
  a query object $q \in \mathcal{O}$, and a threshold $\tau$, the 
  goal is to find all data objects $x \in X$ such that $f(x, q) \leq \tau$. 
\end{problem}
We call $f$ a selection function. It usually captures the similarity 
or distance between a pairs of objects. Since all these problems 
involve a threshold $\tau$, we call them $\tau$-selection problems. 
``$\leq$'' can be replaced by ``$\geq$'', ``$<$'', or ``$>$'' for 
specific problems. Without loss of generality, we use ``$\leq$'' in 
this paper. The extension to support the other three cases is 
straightforward. Next we show a few examples of $\tau$-selection 
problems. 

\begin{problem} [Hamming Distance Search] \label{prb:hamming-distance}
  Given a collection of $d$-dimensional binary vectors $X$, a query 
  vector $q$, find all $x \in X$ such that $H(x, q) \leq \tau$. 
  $H(\cdot, \cdot)$ measures the Hamming distance between two binary 
  vectors: $H(x, y) = \sum_{i = 0}^{d-1} \Delta(x[i], y[i])$.  
  $x[i]$ denotes the value of the $i$-th dimension of $x$. 
  $\Delta(x[i], y[i]) = 0$, if $x[i] = y[i]$; or 1, otherwise.
\end{problem}


\begin{problem} [Set Similarity Search] \label{prb:set-similarity}
  An object is a set of tokens drawn from a finite universe 
  $\,\mathcal{U}$. 
  Given a collection of objects $X$, a query set $q$, find all 
  $x \in X$ such that $sim(x, q) \geq \tau$. $sim(\cdot, \cdot)$ 
  is a set similarity function, e.g., the overlap similarity 
  $O(x, y) = \size{x \intersect y}$. 
\end{problem}

\begin{problem} [String Edit Distance Search] \label{prb:string-edit-distance}
  Given a collection of strings $X$, a query string $q$, find all  
  $x \in X$ such that $ed(x, q) \leq \tau$. $ed(\cdot, \cdot)$ 
  is the edit distance between two strings. It is the minimum number 
  of operations (insertion, deletion, or substitution of a symbol) 
  to transform a string to another. 
\end{problem}

\begin{problem} [Graph Edit Distance Search] \label{prb:graph-edit-distance}
  Given a collection of graphs $X$, a query graph $q$, find all 
  $x \in X$ such that $ged(x, q) \leq \tau$. $ged(\cdot, \cdot)$ is 
  the graph edit distance between two graphs. It is the minimum number 
  of operations to transform one graph to another. The operations include: 
  inserting an isolated labeled vertex, deleting an isolated vertex, 
  changing the label of a vertex, inserting a labeled edge, deleting an 
  edge, and changing the label of an edge.
\end{problem}

The above problems~\footnote{Another common $\tau$-selection problem 
is $L^p$ distance search. However, the pigeonhole principle is hardly 
adopted by prevalent methods for $L^p$ distance search ($p > 0$). 
For this reason, we choose not to speed up $L^p$ distance search in 
this paper.} collectively cover a variety of data types and 
applications such as image retrieval, near-duplicate detection, 
entity resolution, and structure search. For instance, in 
image retrieval, images are converted to binary vectors and the 
vectors whose Hamming distances to the query are within a threshold of 
16 are identified for further image-level verification~\cite{DBLP:conf/mm/ZhangGZL11}. 
In entity resolution, the same entity may differ in spellings or 
formats, e.g., \str{al-Qaeda}, \str{al-Qaida}, and \str{al-Qa'ida}. A 
string similarity search with an edit distance threshold of 2 can 
capture these alternative spellings~\cite{DBLP:conf/sigmod/WangXLZ09}.

Computing $f(x, q)$ for every data and query object is prohibitive 
for large datasets. To avoid this, many exact solutions~\footnote{In this 
paper, we focus on exact solutions and single-core, in-memory, and 
stand-alone implementations of algorithms.} to $\tau$-selection problems 
are based on the filter-and-refine strategy to generate a set of candidates. 
They first extract a bag of \emph{features} from each object, e.g., a partition
for Hamming distance search~\cite{DBLP:conf/icde/LiuST11,DBLP:conf/ssdbm/ZhangQWSL13,DBLP:conf/cvpr/NorouziPF12,DBLP:conf/icde/QinWXWLI18}, \qgrams for string edit distance search~\cite{DBLP:journals/tods/Qin0XLLW13,DBLP:conf/sigmod/DengLF14,DBLP:conf/vldb/LiWY07,DBLP:journals/pvldb/LiDWF12,DBLP:journals/tkde/WangQXLS13}, trees, paths, or a partition 
for graph edit distance 
search~\cite{DBLP:journals/pvldb/ZengTWFZ09,DBLP:journals/tkde/WangWYY12,DBLP:conf/icde/WangDTYJ12,DBLP:journals/vldb/ZhaoXL0I13,DBLP:journals/tkde/ZhengZLWZ15,DBLP:conf/icde/LiangZ17,DBLP:journals/vldb/ZhaoXLZW18}. 
The intuition is that if two objects are similar, there must be a pair of 
similar or identical features from the two objects. By the pigeonhole 
principle, the constraint $f(x, q) \leq \tau$ is thus converted to a 
necessary condition on pairs of features, called \emph{filtering condition}. 
The data objects that satisfy this condition are called \emph{candidates}. 
It is much more efficient to check whether a pair of features satisfies the 
filtering condition than to compute $f(x, q)$; and with the help of an index, 
one may quickly identify all the candidates. They are eventually verified by 
comparing $f(x, q)$ with $\tau$. Since computing $f(x, q)$ for 
the candidates is time-consuming, the search performance depends 
heavily on the candidate number. 

\begin{example} \label{ex:hamming-pigeonhole}
  Consider an instance of Hamming distance search. $d = 10$. $\tau = 5$. 
  Table~\ref{tab:hamming-pigeonhole} shows four data objects 
  and a query object. They are vertically partitioned into 5 equi-width 
  disjoint parts. 
  Let $x_i$ denote the $i$-th part of $x$. Because the parts are disjoint, 
  the sum of distances in the five parts $\sum_{i=0}^{4} H(x_i, q_i) = H(x, q)$. 
  Let each box $b_i$ represent a part. 
  By Theorem~\ref{thm:pigeonhole-principle}, 
  if $H(x, q) \leq \tau$, there exists at least one box such that 
  $H(x_i, q_i) \leq \tau / 5 = 1$. This becomes the filtering condition. 
  $x^1$, $x^2$, and $x^3$ are candidates because 
  $H(x_1^1, q_1) = H(\textup{\textsf{11}}, \textup{\textsf{10}}) = 1$,  
  $H(x_0^2, q_0) = H(\textup{\textsf{00}}, \textup{\textsf{00}}) = 0$, and 
  $H(x_0^3, q_0) = H(\textup{\textsf{01}}, \textup{\textsf{00}}) = 1$~\footnote{Despite 
  other parts satisfying the condition for the three data objects, they 
  are not reported here since the objects have already become candidates by 
  checking the first two parts.}. $H(x^1, q) = 8$. $H(x^2, q) = 5$. 
  $H(x^3, q) = 7$. Only $x^2$ is a result. 
  \begin{table}[htbp]
    \centering
    \caption{Hamming distance search example.}
    \label{tab:hamming-pigeonhole}
    \begin{tabular}{| c || c | c | c | c | c |}
      \hline
      & {$b_0$} & {$b_1$} & {$b_2$} & {$b_3$} & {$b_4$} \\\hline 
      $x^1 = $ & \textsf{11} & \textsf{11} & \textsf{10} & \textsf{11} & \textsf{10} \\
      $x^2 = $ & \textsf{00} & \textsf{01} & \textsf{01} & \textsf{11} & \textsf{10} \\
      $x^3 = $ & \textsf{01} & \textsf{01} & \textsf{10} & \textsf{01} & \textsf{10} \\
      $x^4 = $ & \textsf{11} & \textsf{01} & \textsf{10} & \textsf{11} & \textsf{00} \\\hline 
      $q = $   & \textsf{00} & \textsf{10} & \textsf{01} & \textsf{00} & \textsf{11} \\\hline 
    \end{tabular}
  \end{table}
\end{example}

\section{Pigeonring Principle} \label{sec:pigeonring-principle}


In the pigeonhole principle, the threshold of a box 
can be regarded as a quota. To generate candidates, only individual 
boxes are considered. Even if $f(x, q)$ exceeds $\tau$ 
by a large margin, a data object becomes a candidate if only it has a 
box within the quota. E.g., consider $x^1$ in Example~\ref{ex:hamming-pigeonhole}. 
The distances in the five boxes are $(2, 1, 2, 2, 1)$. $b_1$
and $b_4$ satisfy the filtering condition. $x^1$ becomes a 
candidate, but $f(x^1, q) = 8 > \tau$. 
This case is common for real datasets, and consequently the filtering 
power is rather weak. To address this issue, our idea is to examine 
multiple boxes and compare the accumulated distance with the quota. 

\begin{example} \label{ex:pigeonring-prevails}
  For $x^1$ in Example~\ref{ex:hamming-pigeonhole}, we organize 
  the boxes in a ring in which $b_0$ succeeds $b_4$, as 
  shown in Figure~\ref{fig:idea-pigeonring-hole}, where 
  a ball indicates a Hamming distance of $1$. 
  Now we find candidates by checking 
  every two adjacent boxes: 
  $b_0b_1$, $b_1b_2$, $b_2b_3$, $b_3b_4$, and $b_4b_5$, each 
  with a quota of $2 \cdot \tau / m = 2$. 
  Since the $m$ parts are disjoint, we can sum up the 
  distances in individual boxes to obtain the distances in 
  multiple boxes, which are 3, 3, 4, 3, and 3, respectively. 
  Since all of them exceed the quota, $x^1$ is filtered. 
\end{example}

The idea in Example~\ref{ex:pigeonring-prevails} can be extended to combinations
of any size, which becomes the intuition of our pigeonring principle. We
investigate in the following direction: if the sum of $m$ numbers is
\emph{bounded} by a value, what is the property for the sum of a subset of these
numbers? E.g., in Example~\ref{ex:pigeonring-prevails}, there must be 
two consecutive boxes whose sum of distances does not exceed 
$2 \cdot f(x, q) / m$, thus $2\tau / m$ for every result.

Let $B$ be a sequence of $m$ real numbers $(b_0$, \ldots, $b_{m-1})$. Each $b_i$ 
is called a \emph{box} (for brevity, we abuse the term to denote the number of 
items in it). Let $\sumv{B}$ denote the sum of all elements in $B$; i.e., 
$\sumv{B} = \sum_{i=0}^{m-1} b_i$. We place the boxes in a \emph{ring}, in which 
$b_{m-1}$ is adjacent to $b_0$. Let a chain $c_i^l$ be a sequence of $l$ 
consecutive boxes starting from $b_i$: 
\begin{align*}
  c_i^l = (b_i, \ldots, b_{i+l-1}). 
\end{align*}
Recall that we let subscript $i = i \bmod m$, if $i \geq m$. 
$\sumv{c_i^l}$ denotes 
the sum of elements in $c_i^l$; i.e., $\sumv{c_i^l} = \sum_{j=i}^{i+l-1} b_i$. 

\fullversion{When $l = 0$, $c_i^l$ is an \emph{empty chain}. $\sumv{c_i^l} = 0$.} 
When $l = 1$, $c_i^l$ contains a single element $b_i$. $\forall
l' \in [1 \twoldots l]$, $c_i^{l'}$ is an $l'$-\emph{prefix} of $c_i^l$, and $c_{i+l-l'}^{l'}$ is
an $l'$-\emph{suffix} of $c_i^l$. $c_j^{l'}$ is a \emph{subchain} of $c_i^l$ if $j \geq i$ and
$j + l' \leq i + l$. $c_i^m$ is called a \emph{complete chain} because every box 
in $B$ appears exactly once in $c_i^m$. $\sumv{c_i^m} = \sumv{B}$. We restrict 
the length of a chain to not exceeding $m$. Let $C_B$ be 
the set of all \fullversion{non-empty} chains based on $B$, i.e., 
$C_B = \set{c_i^l \mid i \in [0 \twoldots m - 1], l \in [1 \twoldots m]}$. 

\begin{example}
  Consider Figure~\ref{fig:idea-pigeonring-hole}. 
  $c_3^4 = (b_3, b_4, b_0, b_1)$. $\sumv{c_3^4} = 2 + 1 + 2 + 1 = 6$. 
  $c_3^2$ is a $2$-prefix of $c_3^4$. 
  $c_4^3$ is a $3$-suffix of $c_3^4$. $c_4^2$ is a subchain of $c_3^4$. 
  $c_3^5$ is a complete chain. 
\end{example}



\begin{theorem} [Pigeonring Principle -- Basic Form]\confonly{~\footnote{Please refer to the extended version of this paper~\cite{DBLP:journals/corr/abs-1804-01614} 
for the proofs of the theorems and the lemmata in this paper.}} \label{thm:pigeonring-principle}
  $B$ is a sequence of $m$ real numbers. If $\sumv{B} \leq n$, 
  then $\forall l \in [1 \twoldots m]$, there exists at least one 
  chain $c_i^l \in C_B$ such that $\sumv{c_i^l} \leq l \cdot n / m$. 
\end{theorem}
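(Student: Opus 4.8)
The plan is to reduce the statement to the ordinary pigeonhole principle (Theorem~\ref{thm:pigeonhole-principle}) by an averaging argument carried out over all chains of a single fixed length. Fix an arbitrary $l \in [1 \twoldots m]$ and consider the $m$ chains of that length, namely $c_0^l, c_1^l, \ldots, c_{m-1}^l$. I regard their sums $\sumv{c_0^l}, \ldots, \sumv{c_{m-1}^l}$ as $m$ new real numbers and aim to show that their total is at most $l \cdot n$. Once that bound is established, the pigeonhole principle applied to these $m$ numbers immediately produces one chain whose sum is at most $l \cdot n / m$, which is exactly what the theorem asserts for this $l$; since $l$ was arbitrary, the full statement follows.

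The core computation is therefore to evaluate $\sum_{i=0}^{m-1} \sumv{c_i^l}$. The key observation — and the place where the ring structure is indispensable — is that every box $b_j$ lies in exactly $l$ of these $m$ chains. On the ring, $b_j$ belongs to $c_i^l = (b_i, \ldots, b_{i+l-1})$ precisely when $i$ ranges over the $l$ cyclic positions $j - l + 1, \ldots, j$ taken modulo $m$. Because $l \leq m$, these are $l$ distinct indices, so each $b_j$ is counted exactly $l$ times in the double sum. Hence $\sum_{i=0}^{m-1} \sumv{c_i^l} = l \sum_{j=0}^{m-1} b_j = l \cdot \sumv{B} \leq l \cdot n$, where the last inequality is just the hypothesis $\sumv{B} \leq n$.

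With this bound in hand, I invoke Theorem~\ref{thm:pigeonhole-principle} on the $m$ real numbers $\sumv{c_0^l}, \ldots, \sumv{c_{m-1}^l}$ with right-hand side $l \cdot n$: there must exist an index $i$ with $\sumv{c_i^l} \leq l \cdot n / m$. This $c_i^l$ is the desired chain, and the argument works uniformly for every $l \in [1 \twoldots m]$, including the real-valued (possibly negative) case since the stated pigeonhole principle already applies to arbitrary reals.

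The only genuinely delicate point is the uniform-counting step: that each box appears in exactly $l$ chains of length $l$. On a line this fails near the endpoints, where boundary boxes appear in strictly fewer windows, so the wrap-around supplied by the ring is precisely what makes the multiplicities uniform and the averaging clean. I expect this counting to be the one place requiring care to state rigorously (in particular the cyclic index bookkeeping under the convention $i = i \bmod m$), whereas the subsequent appeal to the pigeonhole principle is routine.
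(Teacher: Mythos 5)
Your proposal is correct and follows essentially the same route as the paper's own proof: summing the $m$ chains of length $l$, observing that this double sum equals $l \cdot \sumv{B} \leq l \cdot n$, and then invoking Theorem~\ref{thm:pigeonhole-principle} on the $m$ chain sums. The only difference is that you make explicit the uniform-counting argument (each box appears in exactly $l$ of the $m$ cyclic chains), which the paper asserts without elaboration.
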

\fullversion{
\begin{proof}
  For any length $l$, there are $m$ chains: $c_0^l$, \ldots, $c_{m-1}^l$ 
  in $C_B$. Summing them up, we have $\sum_{i=0}^{m-1} \sumv{c_i^l} = l \cdot \sum_{i=0}^{m-1} b_i$.
  Because $\sumv{B} = \sum_{i=0}^{m-1} b_i \leq n$, we have 
  $\sum_{i=0}^{m-1} \sumv{c_i^l} = l \cdot \sumv{B} \leq l \cdot n$.
  Then by Theorem~\ref{thm:pigeonhole-principle}, there is at least one $c_i^l$, 
  $i \in [0 \twoldots m - 1]$, such that $\sumv{c_i^l} \leq l \cdot n / m$. 
\end{proof}
}


To utilize the pigeonring principle for $\tau$-selection problems, we may 
regard each box as the output of a function taking $x$ and $q$ as input, 
e.g., $b_i(x, q) = H(x_i, q_i)$ for Hamming distance search, so that 
$\sumv{B(x, q)} = f(x, q)$ is guaranteed~\footnote{We assume this setting 
throughout this section. The general case will be discussed in Section~\ref{sec:framework}.}.
Then for a result object $x$, because $f(x, q) \leq \tau$, 
by Theorem~\ref{thm:pigeonring-principle}, we can always 
find a chain $c_i^l \in C_{B(x, q)}$ such that $\sumv{c_i^l} \leq l \cdot \tau / m$. 
As a result, a data object becomes a candidate only if it meets this condition. 
It is noteworthy to mention that when $l = 1$, the pigeonring principle 
becomes exactly the pigeonhole principle. As a result, the 
pigeonhole principle is a special case of the pigeonring principle. 
Since a candidate produced by the pigeonring principle must have a chain 
satisfying $\sumv{c_i^l} \leq l \cdot \tau / m$, 
by Theorem~\ref{thm:pigeonhole-principle}, there exists at least one box in 
$c_i^l$ such that its value is less than or equal to $\tau / m$. This 
implies: 
\begin{lemma}
  Given $X$, $q$, $\tau$, and $B(x, q)$, the candidates produced by Theorem~\ref{thm:pigeonring-principle} 
  are a subset of those produced by Theorem~\ref{thm:pigeonhole-principle}. 
\end{lemma}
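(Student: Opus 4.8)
The plan is to fix an arbitrary object $x$ that survives the pigeonring filter and show it necessarily survives the pigeonhole filter; since the pigeonring candidate set is then contained in the pigeonhole candidate set, the claim follows. First I would pin down the two filtering conditions precisely, as the two principles both supply \emph{necessary} conditions for $f(x,q) \le \tau$. Instantiating Theorem~\ref{thm:pigeonhole-principle} on $B(x,q)$ with bound $n = \tau$, an object passes the pigeonhole filter exactly when there is a box $b_i$ with $b_i \le \tau/m$. Instantiating Theorem~\ref{thm:pigeonring-principle}, an object passes the pigeonring filter only if it possesses a witnessing chain, i.e., some $c_i^l \in C_{B(x,q)}$ with $\sumv{c_i^l} \le l \cdot \tau / m$.

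The key step is to push the pigeonhole principle one level down, from the whole ring to a single chain. Given a pigeonring candidate $x$, I would select a witnessing chain $c_i^l$. Its elements form a sequence of $l$ real numbers $b_i, \ldots, b_{i+l-1}$ (indices mod $m$) whose sum is at most $l \cdot \tau / m$. Reapplying Theorem~\ref{thm:pigeonhole-principle} to these $l$ numbers with bound $n' = l \cdot \tau/m$ produces at least one element $b_j$ of the chain with $b_j \le n'/l = \tau/m$. Since $b_j$ is a genuine box of $B(x,q)$, the object $x$ has a box within the pigeonhole quota, so it passes the pigeonhole filter and is a pigeonhole candidate. As $x$ was arbitrary, the inclusion of candidate sets is established.

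I do not expect a genuine obstacle in the argument itself: because Theorem~\ref{thm:pigeonhole-principle} is stated for arbitrary real numbers, its reuse inside a chain is unconditional, needing no nonnegativity or integrality assumption on the $b_i$. The one point that warrants care is the arithmetic of the quotas: the chain bound $l \cdot \tau/m$ divided among the $l$ boxes of the chain returns exactly the global per-box quota $\tau/m$, independent of $l$. This cancellation is what makes the reduction clean and is the only place an off-by-a-factor slip could occur; it hinges on the pigeonring quota being the pigeonhole quota scaled linearly in $l$.

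The subtlety I would flag as the main thing to get right is the meaning of ``candidate produced by Theorem~\ref{thm:pigeonring-principle},'' since the filtering condition quantifies over the chain length $l$. The argument above is deliberately robust to this: it uses only the existence of a single witnessing chain, so it applies whether the filter enforces the chain bound for one chosen $l$ or for all $l \in [1 \twoldots m]$. In the latter reading the inclusion is in fact immediate, because the $l = 1$ instance of the pigeonring condition \emph{is} the pigeonhole condition; the pigeonhole-within-a-chain reduction is the more general route and the one I would present. I would also note, without proving it, that the containment can be proper, as the ring layouts in the introduction exhibit objects rejected by the pigeonring filter yet accepted by the pigeonhole filter.
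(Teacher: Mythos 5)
Your proposal is correct and matches the paper's own argument: the paper likewise takes a witnessing chain $c_i^l$ with $\sumv{c_i^l} \leq l \cdot \tau / m$ and reapplies Theorem~\ref{thm:pigeonhole-principle} to the $l$ boxes of that chain to extract a single box with value at most $\tau/m$, exactly your ``pigeonhole-within-a-chain'' reduction. The quota cancellation and the robustness to how the filter quantifies over $l$ are handled the same way, so there is nothing to add.
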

One may notice that when $\sumv{B(x, q)} = f(x, q)$ and $l = m$, 
all the candidates are exactly the results, meaning that in this case the 
candidate generation subsumes the verification. 

\begin{example} \label{ex:hamming-pigeonring}
  Consider Example~\ref{ex:hamming-pigeonhole}. 
  For the four data objects, the values of boxes $b_0, \ldots, b_{m-1}$ 
  are: 
  \begin{align*}
    B(x^1, q) = (2, 1, 2, 2, 1). \\
    B(x^2, q) = (0, 2, 0, 2, 1). \\
    B(x^3, q) = (1, 2, 2, 1, 1). \\
    B(x^4, q) = (2, 2, 2, 2, 2).
  \end{align*}
  Since the five parts are disjoint, $\sumv{B(x, q)} = f(x, q)$. 
  We use the above method to generate candidates. Suppose $l = 2$. 
  We represent in a sequence the $\sumv{c_i^l}$ values for $i = 0, \ldots, m - 1$. 
  $(\sumv{c_i^l(x^1, q)}) = (3, 3, 4, 3, 3)$. 
  $(\sumv{c_i^l(x^2, q)}) = (2, 2, 2, 3, 1)$. 
  $(\sumv{c_i^l(x^3, q)}) = (3, 4, 3, 2, 2)$. 
  $(\sumv{c_i^l(x^4, q)}) = (4, 4, 4, 4, 4)$. 
  Since objects $x^2$ and $x^3$ have at least one chain whose value is within 
  $l \cdot \tau / m$, they become candidates. 
  $x^1$ and $x^4$ are filtered. 
\end{example}


Next we will see the condition stated by Theorem~\ref{thm:pigeonring-principle} 
can be further strengthened. A chain 
$c_i^l$ is called \emph{viable} if it satisfies the condition 
in Theorem~\ref{thm:pigeonring-principle}: $\sumv{c_i^l} \leq l \cdot n / m$. 
Otherwise, it is called \emph{non-viable}. We may also call a box 
viable or non-viable since it can be regarded as a chain of length 1. 
Given a viable chain $c_i^l$, if all of its prefixes are also viable, 
i.e., $\forall l' \in [1 \twoldots l]$, $\sumv{c_i^{l'}} \leq l' \cdot n / m$, 
then $c_i^l$ is called \emph{prefix-viable}. 
\fullversion{We have the following properties for chains: 

\begin{lemma} [Concatenate Chain] \label{lem:joint-chain}
  If two contiguous chains $c_i^l$ and $c_{i+l-1}^{l'}$ are both 
  viable, then $c_{i}^{l+l'}$ is viable. If $c_i^l$ and $c_{i+l-1}^{l'}$ 
  are both non-viable, then $c_{i}^{l+l'}$ is non-viable. 
\end{lemma}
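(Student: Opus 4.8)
The plan is to reduce both halves of the lemma to a single additivity identity for chain sums. The key observation is that the concatenated chain $c_i^{l+l'}$ splits, without overlap, into its length-$l$ prefix $c_i^l$ and the length-$l'$ chain that begins where $c_i^l$ ends, namely $c_{i+l}^{l'}$. Since these two sub-chains partition the $l+l'$ consecutive boxes of $c_i^{l+l'}$ (which are distinct as long as $l+l' \le m$, the regime in which $c_i^{l+l'} \in C_B$ is defined), I would first establish the identity $\sumv{c_i^{l+l'}} = \sumv{c_i^l} + \sumv{c_{i+l}^{l'}}$. Every subsequent step is then an immediate consequence of this decomposition together with the definition of viability, $\sumv{c_i^l} \le l \cdot n / m$.

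For the first claim, suppose both $c_i^l$ and $c_{i+l}^{l'}$ are viable. By definition, $\sumv{c_i^l} \le l \cdot n/m$ and $\sumv{c_{i+l}^{l'}} \le l' \cdot n/m$. Adding these two inequalities and invoking the additivity identity yields $\sumv{c_i^{l+l'}} \le (l+l') \cdot n/m$, which is exactly the viability condition for the chain $c_i^{l+l'}$ of length $l+l'$. Hence the concatenation is viable.

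For the second claim, suppose both sub-chains are non-viable, i.e., $\sumv{c_i^l} > l \cdot n/m$ and $\sumv{c_{i+l}^{l'}} > l' \cdot n/m$. Adding these strict inequalities (strictness is preserved under addition) and again applying additivity gives $\sumv{c_i^{l+l'}} > (l+l') \cdot n/m$, so $c_i^{l+l'}$ violates the viability bound and is therefore non-viable.

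The only real point of care, and the closest thing to an obstacle, is the index bookkeeping on the ring: I must confirm that $c_i^l$ and the tail $c_{i+l}^{l'}$ are genuinely disjoint and together tile $c_i^{l+l'}$, which relies on $l+l' \le m$ so that no box is double-counted under the $i \bmod m$ wrap-around. Once this partition is verified, the additivity identity holds term by term and both implications follow in one line each, with no further estimation required.
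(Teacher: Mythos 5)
Your proof is correct and is essentially the paper's own argument: split the concatenated chain into two disjoint pieces, use additivity of the chain sums, and add the two inequalities (strict ones, in the non-viable case). One remark: your index $c_{i+l}^{l'}$ for the second chain is the right reading — the paper's statement and proof write $c_{i+l-1}^{l'}$, which is an off-by-one slip relative to its own definition $c_i^l = (b_i, \ldots, b_{i+l-1})$ (the two chains would overlap in $b_{i+l-1}$ and the claim would be false as literally stated), and the paper's identity $\sumv{c_{i}^{l+l'}} = \sumv{c_i^l} + \sumv{c_{i+l-1}^{l'}}$ holds only under your disjoint decomposition.
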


\begin{proof}
  If $\sumv{c_i^l} \leq l \cdot n / m$ and 
  $\sumv{c_{i+l-1}^{l'}} \leq l' \cdot n / m$, then 
  $\sumv{c_{i}^{l+l'}} = \sumv{c_i^l} + \sumv{c_{i+l-1}^{l'}} \leq (l + l')
  \cdot n / m$. Thus, $c_{i}^{l+l'}$ is viable. 
  The non-viable case is proved in the same way.
\end{proof}

This lemma states that concatenating two viable chains yields a 
viable chain, while concatenating two non-viable chains yields a 
non-viable chain. 

\begin{lemma} \label{lem:viable-suffix}
  A viable chain always has a prefix-viable suffix. 
\end{lemma}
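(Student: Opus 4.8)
The plan is to recast the statement in terms of prefix sums and reduce it to choosing a single split point on the chain. Fix a viable chain $c_i^l$, abbreviate $\mu = n/m$, and for each offset $p \in [0 \twoldots l]$ write $P_p = \sumv{c_i^p}$ for the sum of its first $p$ boxes, with $P_0 = 0$. Every suffix of $c_i^l$ has the form $c_{i+s}^{l-s}$ for some offset $s \in [0 \twoldots l-1]$, and its own prefixes are exactly the chains $c_{i+s}^{k}$ with $k \in [1 \twoldots l-s]$, whose sums equal $P_{s+k} - P_s$. Hence the suffix $c_{i+s}^{l-s}$ is prefix-viable precisely when $P_p - P_s \leq (p - s)\mu$ holds for every $p \in [s+1 \twoldots l]$.

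The key idea is to introduce the deviation quantity $h(p) = P_p - p\mu$ and to observe that the prefix-viability condition above is equivalent to $h(p) \leq h(s)$ for all $p \in [s+1 \twoldots l]$. I would therefore take $s^\ast$ to be the smallest offset at which $h$ attains its maximum over $[0 \twoldots l]$. By construction $h(s^\ast) \geq h(p)$ for every $p$, in particular for every $p > s^\ast$, so the suffix $c_{i+s^\ast}^{\,l-s^\ast}$ is prefix-viable and we are done, provided $s^\ast \leq l-1$ so that this is a genuine non-empty suffix (with $s^\ast = 0$ simply returning the whole chain, which is the $l$-suffix of itself).

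The one place the hypothesis actually enters, and the step I expect to be the only real obstacle, is ruling out $s^\ast = l$. Here I would invoke viability of the whole chain: $h(l) = \sumv{c_i^l} - l\mu \leq 0 = h(0)$, so offset $0$ is at least as good as offset $l$, and the first maximizer of $h$ cannot be $l$; thus $s^\ast \leq l-1$. An equivalent, purely combinatorial phrasing that avoids prefix sums is to let $k$ be the largest offset for which the prefix $c_i^k$ is non-viable, taking the suffix to be the whole chain if no prefix is non-viable: since every longer prefix $c_i^{k+j}$ is then viable while $\sumv{c_i^k} > k\mu$, each prefix $c_{i+k}^{j}$ of the candidate suffix satisfies $\sumv{c_{i+k}^{j}} = \sumv{c_i^{k+j}} - \sumv{c_i^k} < (k+j)\mu - k\mu = j\mu$, so $c_{i+k}^{\,l-k}$ is prefix-viable. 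Both routes are short, and in each the crux is simply that whole-chain viability forces the chosen split to leave a non-empty suffix.
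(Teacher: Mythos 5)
Your proof is correct, and it takes a genuinely different route from the paper's. The paper first establishes a concatenation lemma (Lemma~\ref{lem:joint-chain}: joining two contiguous viable chains yields a viable chain, and likewise for non-viable ones), then selects the \emph{shortest viable suffix} of $c_i^l$ (which exists because the whole chain is viable) and argues by contradiction that it is prefix-viable: if one of its proper prefixes were non-viable, concatenating that prefix with the complementary proper suffix --- non-viable by minimality of the selection --- would force the chosen suffix to be non-viable. Your argument dispenses with both the auxiliary lemma and the case analysis by introducing the potential $h(p) = P_p - p\mu$: ``the suffix at offset $s$ is prefix-viable'' becomes ``$h(s) \geq h(p)$ for all $p > s$,'' so the first maximizer of $h$ works, and the hypothesis $\sumv{c_i^l} \leq l\mu$ enters exactly once, to rule out $s^\ast = l$ via $h(l) \leq 0 = h(0)$ --- which you correctly identify as the only place viability is needed. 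Your second, combinatorial route (complement of the \emph{longest non-viable prefix}) is the mirror image of the paper's selection (shortest viable suffix), with the contradiction replaced by a direct subtraction of inequalities. As for what each approach buys: the paper's detour through Lemma~\ref{lem:joint-chain} is not wasted effort, since that lemma is reused repeatedly in the proof of Theorem~\ref{thm:pigeonring-principle-prefix}, generalized as Corollary~\ref{cor:joint-chain-prefix}, and exploited for the candidate-generation speedup in Section~\ref{sec:index}; your potential-function argument, on the other hand, is self-contained, transfers verbatim to the variable-threshold setting of Theorem~\ref{thm:pigeonring-threshold-allocation} (replace $p\mu$ by the partial sums of the $t_j$), and is in fact the discrete counterpart of the geometric interpretation the paper itself gives in Appendix~\ref{sec:geometric-interpretation} (choosing the point of maximal $y$-intercept), so it arguably makes the mechanism behind the lemma more transparent.
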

\begin{proof}
  Let $c_i^l$ be a viable chain. There are two cases: 
  \begin{inparaenum} [(1)]
    \item The 1-suffix of $c_i^l$ is viable. Then the 1-suffix 
    is a prefix-viable suffix of $c_i^l$. 
    \item The 1-suffix of $c_i^l$ is non-viable. 
    Because $c_i^l$ is viable, by Lemma~\ref{lem:joint-chain}, we can 
    always find a length $l' < l$, such that the $1, \ldots, l'$-suffixes 
    of $c_i^l$ are all non-viable, but the $(l'+1)$-suffix of $c_i^l$ is 
    viable. Let $c'$ be the $(l'+1)$-suffix of $c_i^l$. 
    Because $c'$ is viable, and all its suffixes except itself 
    are non-viable, by Lemma~\ref{lem:joint-chain}, all the prefixes 
    of $c'$ are viable. Therefore, $c'$ is a prefix-viable suffix of 
    $c_i^l$. 
  \end{inparaenum}
\end{proof}

By the lemmata, we can prove a strong form of the pigeonring principle:}

\begin{theorem} [Pigeonring Principle -- Strong Form] \label{thm:pigeonring-principle-prefix}
  $B$ is a sequence of $m$ real numbers. If $\sumv{B} \leq n$, 
  then $\forall l \in [1 \twoldots m]$, there exists at least one 
  prefix-viable chain $c_i^l \in C_B$. 
\end{theorem}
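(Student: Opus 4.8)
The plan is to reduce the whole statement to its single strongest case, $l = m$, and then settle that case by an extremal prefix-sum argument. First I would observe that it suffices to exhibit one \emph{global} starting box $i_0$ such that the complete chain $c_{i_0}^m$ is prefix-viable, i.e.\ $\sumv{c_{i_0}^{l}} \le l \cdot n / m$ for every $l \in [1 \twoldots m]$. Indeed, if such an $i_0$ exists, then for any target length $l$ the $l$-prefix $c_{i_0}^l$ has all of its own prefixes among $c_{i_0}^1, \ldots, c_{i_0}^l$, each viable, so $c_{i_0}^l$ is a prefix-viable chain of length exactly $l$. Thus a single good box yields the claimed prefix-viable chain for every $l$ simultaneously; this is precisely the introduction's strong form, and it immediately implies the per-$l$ statement of the theorem.

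To build $i_0$ I would normalize away the per-box quota. Put $a_i = b_i - n/m$, so that a chain $c_j^l$ is viable exactly when $\sum_{k=0}^{l-1} a_{j+k} \le 0$. Define prefix sums $S_0 = 0$ and $S_t = a_0 + \cdots + a_{t-1}$ for $t = 1, \ldots, m$, and record the single global fact I will lean on: $S_m = \sumv{B} - n \le 0$. I would then pick $i_0 \in [0 \twoldots m-1]$ maximizing $S_{i_0}$; since the maximum is taken over a set containing $S_0 = 0$, we have $S_{i_0} \ge 0$. Now check that every cyclic partial sum starting at $i_0$ is nonpositive. A prefix confined to $b_{i_0}, \ldots, b_{m-1}$ has sum $S_{i_0 + l} - S_{i_0} \le 0$, because $S_{i_0}$ maximizes over indices below $m$ and, for the borderline index $m$, because $S_m \le 0 \le S_{i_0}$. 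A prefix that wraps, with head length $r \le i_0$, has sum $(S_m - S_{i_0}) + S_r$; here $S_r \le S_{i_0}$ by maximality, so the total is at most $S_m \le 0$. Hence $c_{i_0}^m$ is prefix-viable, and the reduction of the first paragraph closes the proof.

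The step I expect to be the real obstacle is the wraparound case. For prefixes that do not cross the seam between $b_{m-1}$ and $b_0$ the maximality of $S_{i_0}$ is immediately decisive, but a prefix crossing the seam decomposes into a tail contributing $S_m - S_{i_0}$ and a head contributing $S_r$, and one must see that the choice of $i_0$ \emph{together with} the global budget $S_m \le 0$ forces these two pieces to combine to a nonpositive value — this is exactly the point where the hypothesis $\sumv{B} \le n$ is consumed. As an alternative staying within the combinatorial vocabulary already set up, one could instead take a prefix-viable chain of maximum length and argue by contradiction that it must be complete, invoking Lemma~\ref{lem:joint-chain} to concatenate viable pieces and Lemma~\ref{lem:viable-suffix} to relocate a viable suffix; but the same seam-crossing bookkeeping resurfaces there, so I would favor the direct prefix-sum argument for clarity.
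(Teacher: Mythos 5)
Your proof is correct, and it takes a genuinely different route from the paper's own proof of Theorem~\ref{thm:pigeonring-principle-prefix}. The paper argues by induction on $l$ and contradiction: assuming no prefix-viable chain of length $l+1$ exists, it writes the (viable) complete chain in the regular-expression form $(P_lN_1A)+$ and uses Lemma~\ref{lem:joint-chain} (concatenation preserves viability and non-viability) together with Lemma~\ref{lem:viable-suffix} (every viable chain has a prefix-viable suffix) to manufacture a prefix-viable chain of length at least $l+1$, a contradiction. You instead settle the single strongest case $l=m$ outright by an extremal prefix-sum argument: after normalizing $a_i = b_i - n/m$, you pick $i_0$ maximizing $S_{i_0}$ and verify that every cyclic partial sum starting at $i_0$ is nonpositive, the hypothesis $\sumv{B}\le n$ entering exactly once as $S_m\le 0$ in the borderline and wraparound cases; your case analysis (non-wrapping below $m$, borderline index $m$, wrapping head $r\le i_0$) is exhaustive, and the truncation argument reducing general $l$ to $l=m$ is valid. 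Your version in fact establishes something slightly stronger than the theorem as stated: a single starting box $i_0$ that works for all $l$ simultaneously, which is precisely how the introduction phrases the strong form. It is worth noting that the paper's appendix on the geometric interpretation is essentially your argument in disguise --- prefix-sum points $(x, g(x))$, parallel lines of slope $\sumv{B}/m$, and the line of greatest $y$-intercept playing the role of your maximizer $S_{i_0}$ --- but the paper flags that sketch as intuitive and ``not as formal''; your write-up is a rigorous version of it. What the paper's inductive route buys is reusable machinery: Lemma~\ref{lem:joint-chain}, Lemma~\ref{lem:viable-suffix}, and Corollary~\ref{cor:joint-chain-prefix} are invoked again in the filtering analysis of Section~\ref{sec:candidate-analysis} and the candidate-generation speedup of Section~\ref{sec:index}. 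What your route buys is brevity, an explicit witness, and portability: replacing the normalization by $a_i = b_i - t_i$ yields Theorem~\ref{thm:pigeonring-threshold-allocation} (variable threshold allocation) with no additional work, whereas the paper must rerun its induction for that case.
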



\fullversion{
\begin{proof}
  We prove by mathematical induction. 
  
  When $l = 1$, by Theorem~\ref{thm:pigeonring-principle}, the statement 
  holds because the prefix is the chain itself. 
  
  We assume the statement holds for $l$, and prove the case for $l + 1$ 
  by contradiction. Assume there does not exist a 
  prefix-viable chain of length $(l+1)$ in $C_B$. 
  Because $\sumv{B} \leq n$, the complete chain $c_i^m$ is viable. Due to 
  the non-existence of a prefix-viable chain of length $(l + 1)$, by Lemma~\ref{lem:joint-chain}, 
  any prefix-viable chain of length $l$ must be followed by a non-viable box, 
  and concatenating them results in a non-viable chain of length $(l+1)$. 
  Thus, the complete chain must be in the form of $(P_lN_1A)+$, if represented 
  by a regular expression. 
  $P_l$ denotes a prefix-viable chain of length $l$. $N_1$ denotes a 
  non-viable box. $A$ denotes any chain (including an empty chain) which 
  does not have a $P_l$ as its subchain. ``$()$'' groups a series of 
  pattern to a single element. ``$+$'' means one or more occurrences of 
  the preceding element. 
  Because all $P_lN_1$ must be non-viable but the complete chain is 
  viable, by Lemma~\ref{lem:joint-chain}, 
  at least one $A$ is a non-empty viable chain. By Lemma~\ref{lem:viable-suffix}, 
  this $A$ has a suffix which is prefix-viable. Because this suffix 
  is followed by a $P_l$, by Lemma~\ref{lem:joint-chain}, 
  concatenating them yields a prefix-viable chain of length at least $(l+1)$. 
  It contradicts the assumption of nonexistence of such chain. 
\end{proof}
}


By the strong form of the pigeonring principle, a stronger 
filtering condition is delivered: Assume $\sumv{B(x, q)} = f(x, q)$. 
A candidate object must have a chain such that each of the chain's 
prefixes $c_i^l$ satisfies $\sumv{c_i^l} \leq l \cdot \tau / m$. 
\begin{lemma}
  Given $X$, $q$, $\tau$, and $B(x, q)$, the candidates produced by 
  Theorem~\ref{thm:pigeonring-principle-prefix} are 
  a subset of those produced by Theorem~\ref{thm:pigeonring-principle}. 
\end{lemma}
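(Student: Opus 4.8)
The plan is to unfold both filtering conditions at a single, fixed chain length $l$ and observe that the strong-form condition trivially entails the basic-form one. Concretely, fix $l \in [1 \twoldots m]$ (the same $l$ for both candidate sets, since the subset relation is only meaningful when the two filters use the same parameter). By the construction stated just after each theorem, an object $x$ is a candidate under Theorem~\ref{thm:pigeonring-principle} exactly when some $i \in [0 \twoldots m - 1]$ yields a \emph{viable} chain of length $l$, i.e. $\sumv{c_i^l(x,q)} \leq l \cdot \tau / m$; whereas $x$ is a candidate under Theorem~\ref{thm:pigeonring-principle-prefix} exactly when some $i$ yields a \emph{prefix-viable} chain $c_i^l(x,q)$, meaning $\sumv{c_i^{l'}(x,q)} \leq l' \cdot \tau / m$ for every $l' \in [1 \twoldots l]$.

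First I would take an arbitrary strong-form candidate $x$ together with the index $i$ witnessing prefix-viability of $c_i^l(x,q)$. Then I would instantiate the definition of prefix-viability at the particular value $l' = l$: since $c_i^l$ is its own $l$-prefix, prefix-viability directly forces $\sumv{c_i^l(x,q)} \leq l \cdot \tau / m$. This is precisely the existence of a viable chain of length $l$, so $x$ satisfies the basic-form filtering condition and is therefore a basic-form candidate. As $x$ was arbitrary, every strong-form candidate is a basic-form candidate, which is the claimed subset relation.

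I do not expect a genuine obstacle here: the inclusion is essentially definitional, because a prefix-viable chain is by construction already viable, so the implication is immediate and — unlike the earlier lemma comparing Theorem~\ref{thm:pigeonring-principle} with Theorem~\ref{thm:pigeonhole-principle} — needs no further appeal to the pigeonhole principle. The one point requiring care is bookkeeping: I must make explicit that both candidate sets are taken with respect to the same fixed $l$, so that the comparison is well-posed and the $l'=l$ prefix is actually available to discharge the basic-form requirement.
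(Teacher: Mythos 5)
Your proof is correct and matches the paper's (implicit) reasoning exactly: the paper states this lemma without proof precisely because, as you observe, a prefix-viable chain is by definition viable (instantiate $l' = l$), so the inclusion is definitional. Your added care about fixing the same $l$ for both filters is sound bookkeeping and consistent with how the paper uses the two theorems.
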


\begin{example} \label{ex:hamming-pigeonring-prefix}
  Assume $\tau = 5$, $m = 5$, and $B = (2, 0, 3, 1, 2)$ for 
  a data object. When $l = 2$, $(\sumv{c_i^l}) = (2, 3, 4, 3, 4)$. 
  $c_0^l$ is the only chain of length 2 satisfying 
  $\sumv{c_i^l} \leq l \cdot \tau / m$. 
  This object is not filtered by the basic form of the pigeonring 
  principle. However, its 1-prefix $\sumv{c_0^1} > 1 \cdot \tau / m$. 
  By the strong form of the pigeonring principle, this object is filtered. 
\end{example}

In the rest of the paper, when context is clear, we mean the strong form 
when the pigeonring principle is mentioned. 


Because we can go either clockwise or counterclockwise on the ring 
to collect chains, ``prefix'' can be replaced with ``suffix'' in Theorem~\ref{thm:pigeonring-principle-prefix}, 
and the principle still holds. We may also replace ``$\leq$'' with ``$>$'' 
and prove the principle for the non-viable case. We call a chain whose 
suffixes are all viable a suffix-viable chain. If a chain's prefixes/suffixes 
are all non-viable, we say it is a prefix/suffix-non-viable chain. 
The following corollaries are obtained.

\begin{corollary} \label{cor:pigeonring-all-directions}
  $B$ is a sequence of $m$ real numbers. If $\sumv{B} \leq n$, 
  then $\forall l \in [1 \twoldots m]$, there exist at least one 
  prefix-viable chain $c_i^l \in C_B$ and at least one suffix-viable 
  chain $c_{i'}^l \in C_B$. If $\sumv{B} > n$, then $\forall l \in [1 \twoldots m]$, 
  there exist at least one prefix-non-viable chain $c_i^l \in C_B$ 
  and at least one suffix-non-viable chain $c_{i'}^l \in C_B$. 
\end{corollary}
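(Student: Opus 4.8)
The plan is to reduce all four assertions to the strong form already proved in Theorem~\ref{thm:pigeonring-principle-prefix}, using two structure-preserving transformations of the ring. The first assertion---existence of a prefix-viable chain of each length $l$ when $\sumv{B} \le n$---is literally the statement of Theorem~\ref{thm:pigeonring-principle-prefix}, so nothing new is needed there. The remaining three follow by (i) reversing the orientation of the ring, which interchanges prefixes and suffixes, and (ii) passing from the viable to the non-viable regime.

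For the suffix statements I would reverse the ring. Define $\tilde b_j = b_{m-1-j}$ for $j \in [0 \twoldots m-1]$, so that $\tilde B$ is $B$ read counterclockwise and $\sumv{\tilde B} = \sumv{B}$. A chain of $\tilde B$ starting at $\tilde b_i$ is exactly a chain of $B$ ending at $b_{m-1-i}$, read in the opposite direction; since a chain's sum and length are unchanged by reversal and the viability threshold $l' \cdot n / m$ depends only on the length $l'$, viability is preserved box-for-box. Consequently, an $l'$-prefix of a chain in $\tilde B$ corresponds to an $l'$-suffix of the matching chain in $B$. Applying Theorem~\ref{thm:pigeonring-principle-prefix} to $\tilde B$ (whose hypothesis $\sumv{\tilde B} \le n$ holds) therefore yields a suffix-viable chain in $B$ for every $l$, which is the second assertion. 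The same reversal turns the third (prefix-non-viable) assertion into the fourth (suffix-non-viable), so it suffices to establish the third.

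For the non-viable assertion ($\sumv{B} > n$) I would mirror the entire viable development with every ``$\le$'' replaced by ``$>$''. The base case $l = 1$ is the contrapositive of the pigeonhole principle (Theorem~\ref{thm:pigeonhole-principle}): if every box satisfied $b_i \le n / m$ then $\sumv{B} \le n$, so some box is non-viable. The non-viable direction of the concatenation lemma is already available as the second half of Lemma~\ref{lem:joint-chain}. What remains is the mirror of Lemma~\ref{lem:viable-suffix}, namely that \emph{every non-viable chain has a prefix-non-viable suffix}; its proof copies the original, taking the shortest suffix that is itself non-viable while all of its proper suffixes are viable, and then using Lemma~\ref{lem:joint-chain} to conclude that all of its prefixes are non-viable. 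With these three ingredients the induction of Theorem~\ref{thm:pigeonring-principle-prefix} runs verbatim: the complete chain $c_i^m$ is now non-viable because $\sumv{c_i^m} = \sumv{B} > n$; assuming no prefix-non-viable chain of length $l+1$ forces every prefix-non-viable chain of length $l$ to be followed by a viable box, so the ring decomposes as $(P_l N_1 A)+$ with each $P_l N_1$ viable; since $c_i^m$ is non-viable, some block $A$ must be non-viable, and its prefix-non-viable suffix concatenates with the following $P_l$ to give a prefix-non-viable chain of length at least $l+1$, a contradiction.

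The main obstacle is this non-viable mirror, and in particular resisting the temptation to obtain it by an affine negation $b_j \mapsto c - b_j$. Such a substitution sends the non-strict condition $\sumv{c_i^{l'}} \le l' \cdot n / m$ to another non-strict condition $\sumv{c_i^{l'}} \ge l' \cdot n / m$, whereas non-viability is the \emph{strict} complement $\sumv{c_i^{l'}} > l' \cdot n / m$; the strict/non-strict boundary is exactly where the shortcut breaks, so the honest route is to re-derive the two lemmas and the induction directly with reversed inequalities, as above. Once that is done, all four parts of the corollary are in hand.
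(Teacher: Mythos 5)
Your proof is correct and follows essentially the same route as the paper, which justifies this corollary in a single remark: traverse the ring counterclockwise to exchange prefixes with suffixes, and replace ``$\leq$'' by ``$>$'' to mirror the viable development in the non-viable regime --- precisely your ring-reversal reduction plus the re-derived non-viable analogues of Lemmas~\ref{lem:joint-chain} and~\ref{lem:viable-suffix} feeding the induction of Theorem~\ref{thm:pigeonring-principle-prefix}. Your added observation that an affine negation $b_j \mapsto c - b_j$ cannot substitute for the mirrored argument (it preserves non-strict inequalities, while non-viability is the strict complement) is a genuine subtlety the paper leaves implicit.
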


\begin{corollary} \label{cor:joint-chain-prefix}
  Consider four types of chains: prefix-viable, suffix-viable, 
  prefix-non-viable, and suffix-non-viable. 
  If two contiguous chains $c_i^l$ and $c_{i+l-1}^{l'}$ are  
  the same type, then $c_{i}^{l+l'}$ is also this type.
\end{corollary}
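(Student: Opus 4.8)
The plan is to derive this corollary directly from the Concatenate Chain lemma (Lemma~\ref{lem:joint-chain}) together with the definitions of the four chain types, since each type is precisely the assertion that \emph{every} prefix (resp.\ suffix) of the chain is viable (resp.\ non-viable). First I would observe that the four cases collapse to essentially one argument under two symmetries: reversing the traversal direction on the ring interchanges ``prefix'' with ``suffix'', and the two halves of Lemma~\ref{lem:joint-chain} interchange ``viable'' with ``non-viable''. Hence it suffices to carry out the prefix-viable case in full and invoke these symmetries for the remaining three.

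For the prefix-viable case, assume $c_i^l$ and its contiguous successor $c_{i+l-1}^{l'}$ are both prefix-viable, and consider an arbitrary prefix of $c_i^{l+l'}$, say of length $k \in [1 \twoldots l+l']$. I would split on whether $k \leq l$. If $k \leq l$, this prefix is exactly a prefix of $c_i^l$, hence viable because $c_i^l$ is prefix-viable. If $k > l$, the prefix decomposes into the whole of $c_i^l$ followed by a length-$(k-l)$ prefix of $c_{i+l-1}^{l'}$; the former is viable (a prefix-viable chain is in particular viable as its own full-length prefix) and the latter is viable because $c_{i+l-1}^{l'}$ is prefix-viable. Applying the viable half of Lemma~\ref{lem:joint-chain} to this pair shows the length-$k$ prefix is viable. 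Since $k$ was arbitrary, every prefix of $c_i^{l+l'}$ is viable, i.e.\ $c_i^{l+l'}$ is prefix-viable.

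The suffix-viable case is symmetric: one splits an arbitrary suffix of $c_i^{l+l'}$ by whether its length exceeds $l'$, realizing the short suffixes as suffixes of $c_{i+l-1}^{l'}$ and the long ones as a suffix of $c_i^l$ concatenated with all of $c_{i+l-1}^{l'}$, then applies Lemma~\ref{lem:joint-chain} again. The two non-viable cases run identically but invoke the non-viable half of Lemma~\ref{lem:joint-chain}, using that a prefix-non-viable (resp.\ suffix-non-viable) chain has all of its prefixes (resp.\ suffixes) non-viable. The main obstacle is purely the index bookkeeping: one must verify that each prefix/suffix of the combined chain decomposes cleanly into a piece lying entirely inside one of the two original chains and a piece that is a full prefix/suffix of the other, so that Lemma~\ref{lem:joint-chain} applies with the correct lengths. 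Once this decomposition is set up, each of the four cases is a one-line application of the concatenation lemma.
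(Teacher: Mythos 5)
Your proposal is correct and takes the route the paper intends: the paper states this result without any explicit proof, presenting it as an immediate consequence of Lemma~\ref{lem:joint-chain}, and your argument simply fills in that implicit derivation by splitting an arbitrary prefix (or suffix) of $c_i^{l+l'}$ at the boundary between the two constituent chains and applying the appropriate half of the lemma, with the prefix/suffix and viable/non-viable symmetries handling the remaining cases. Nothing here diverges from the paper's approach; you have just made the omitted bookkeeping explicit.
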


\subsection{Filtering Performance Analysis} \label{sec:candidate-analysis}
We first analyze the probability that a data object is a candidate for a 
chain length $l$ (denoted by $\Pr(CAND_l)$), and then estimate the ratio 
of false positive number and result number in the candidate set of the 
pigeonring principle-based filtering. 
We assume $\sumv{B(x, q)} = f(x, q)$. 
All the $m$ boxes are assumed 
independent random variables in $(-\infty, +\infty)$, having 
the same probability density function (PDF)~\footnote{If 
the thresholds or PDFs of boxes differ or dependency exists, 
e.g., by joint PDFs, $\Pr(CAND_l)$ can be computed by dynamic 
programming, extending the method in this subsection.}, 
denoted by $p$. 
Let $n = \tau$. Then a viable chain $c_i^l$ must 
satisfy $\sumv{c_i^l} \leq l \cdot \tau / m$. 
Our idea is to construct by recurrence all the rings in which there 
is no prefix-viable chain of length $l \in [1 \twoldots m]$, hence to 
obtain $1 - \Pr(CAND_l)$. 
By the pigeonring principle, for such rings, $\sumv{B} > \tau$. 
By Corollary~\ref{cor:pigeonring-all-directions}, there exists at least one 
suffix-non-viable chain of length $m$. Although $c_0^m$ might 
not be a suffix-non-viable chain, we will discuss this scenario 
later, and assume $c_0^m$ is suffix-non-viable first. Obviously, 
$c_0^m$ does not have any prefix-viable chain of length $l \in [1 \twoldots m]$ 
as its subchain. We call such suffix-non-viable chain a target chain. 

A target chain can be constructed by concatenation of chains drawn 
from a set. E.g., when $m = 3$ and $l = 2$, there are 
only three cases for the boxes in a target chain: \str{NNN}, \str{NVN}, 
and \str{VNN}. \str{V} and \str{N} denote viable and non-viable 
boxes, respectively. Thus, it can be constructed by concatenating 
chains in $\set{\str{N}, \str{VN}}$, where \str{VN} is 
non-viable. We call a set of chains a \emph{word set} if concatenating 
any number of chains in it always yields a suffix-non-viable chain. 
Each chain in it is called a \emph{word}. 

Because a target chain contains no prefix-viable chain of length $l$, 
we consider the word set $W$ which consists of
\begin{inparaenum} [(1)]
  \item non-viable chain of length 1, and 
  \item suffix-non-viable chains of length $l'$, $l' \in [2 \twoldots l]$, whose 
  $(l' - 1)$-prefixes are prefix-viable. 
\end{inparaenum} 
The set $\set{\str{N}, \str{VN}}$ (\str{VN} is non-viable) in the 
above example is such kind of word set when $l = 2$. Given a word $w_i \in W$ 
whose length is $\size{w_i}$, let $\Pr(w_i)$ denote the probability that a chain of 
length $\size{w_i}$ is $w_i$. 
Consider a chain $c$ constructed by concatenation of words $w_0, \ldots, w_k \in W$. 
The probability that a chain of length $\size{w_0} + \size{w_1} + \ldots + \size{w_k}$ 
is $c$ is the product of the words' probabilities: $\prod_{i=0}^{k} \Pr(w_i)$. 
Let $M(x)$ be the probability that a chain of length $x$ is a target chain. 
\begin{align*} 
  M(x) = 
  \begin{cases}
    1 & \text{, if } x = 0; \\
    \sum_{i=1}^{\min(x, l)} M(x-i) \Pr(w^i) & \text{, if } x > 0. 
  \end{cases} 
\end{align*}
$w^i$ denotes a word in $W$ whose length is $i$. 
The probability that a chain of length $i$ is $w^i$ is computed as follows.
\begin{align*}
  &\text{When } i = 1, \Pr(w^i) = \int_{\tau / m}^{+\infty} p(x)dx. \\
  &\text{When } i = 2, \Pr(w^i) = \int_{-\infty}^{\tau / m} p(x)dx \int_{2\tau / m - x}^{+\infty} p(y)dy. \\
  &\text{When } i > 2, \Pr(w^i) = \int_{-\infty}^{\tau / m} p(x_0)dx_0 \int_{-\infty}^{2\tau / m - x_0} p(x_1)dx_1 \ldots \\
  &\int_{-\infty}^{i\tau / m - \sum_{j = 0}^{i - 2}x_j} p(x_{i-1})dx_{i-1} 
  \int_{(i + 1)\tau / m - \sum_{j = 0}^{i - 1}x_j}^{+\infty} p(y)dy. 
\end{align*}

Let $N(x)$ be the probability that there is no prefix-viable chain of length 
$l$ in a ring of $x$ boxes, i.e., $1 - \Pr(CAND_l)$. Since we assume  
$c_0^m$ is the target chain, $b_{m-1}$ always ends 
with the last box of a word in $W$. To compute $N(x)$, we also need to consider 
the case when $b_{m-1}$ ends with the other positions in a word. This can be done 
by shifting the starting position of a target chain of length $(x - l')$ for every 
$l' \in [2 \twoldots l]$, to $b_1, \ldots, b_{l-1}$, and then append a word of length 
$l'$ in $W$. Thus, 
\begin{align*} 
  N(x) = 
  \begin{cases}
    M(x) & \hspace{-1em} \text{, if } $x = 1$; \\
    M(x) + \sum_{i=2}^{\min(x, l)} M(x - i) (i - 1) \Pr(w^i) & \hspace{-1em} \text{, if } x > 1. 
  \end{cases}
\end{align*}
Then $\Pr(CAND_l) = 1 - N(m)$.

Next we analyze the expected ratio of false positive number and result number 
in a candidate set. 
The probability that an object is a result is 
\begin{align*}
  \Pr(RES) = & \int_{-\infty}^{+\infty} p(x_0)dx_0 \ldots \\
  & \int_{-\infty}^{+\infty} p(x_{m-2})dx_{m-2} \int_{-\infty}^{\tau - \sum_{i=0}^{m-2} x_i} dx_{m-1}. 
\end{align*}
The ratio is $\Pr(CAND_l) / \Pr(RES)$.  
For the assumption $\sumv{B(x, q)} = f(x, q)$, when $l = m$, $\Pr(RES) = \Pr(CAND_l)$. 

Based on the analysis, we plot in Figure~\ref{fig:filtering-analysis} the ratio 
of false positive number and result number for Hamming distance search on 
a synthetic dataset with uniform distribution (please see 
Section~\ref{sec:exp-chain-length} for results on real datasets). 
It can be observed that the estimated ratio keeps decreasing with the growth of 
chain length $l$. The ratio is smaller than $1$ for some parameter settings, 
meaning most candidates are results. 

\begin{figure} [t]
  \centering
  \subfigure[Hamming Distance Search ($d = 256$)]{
    \includegraphics[width=0.6\linewidth]{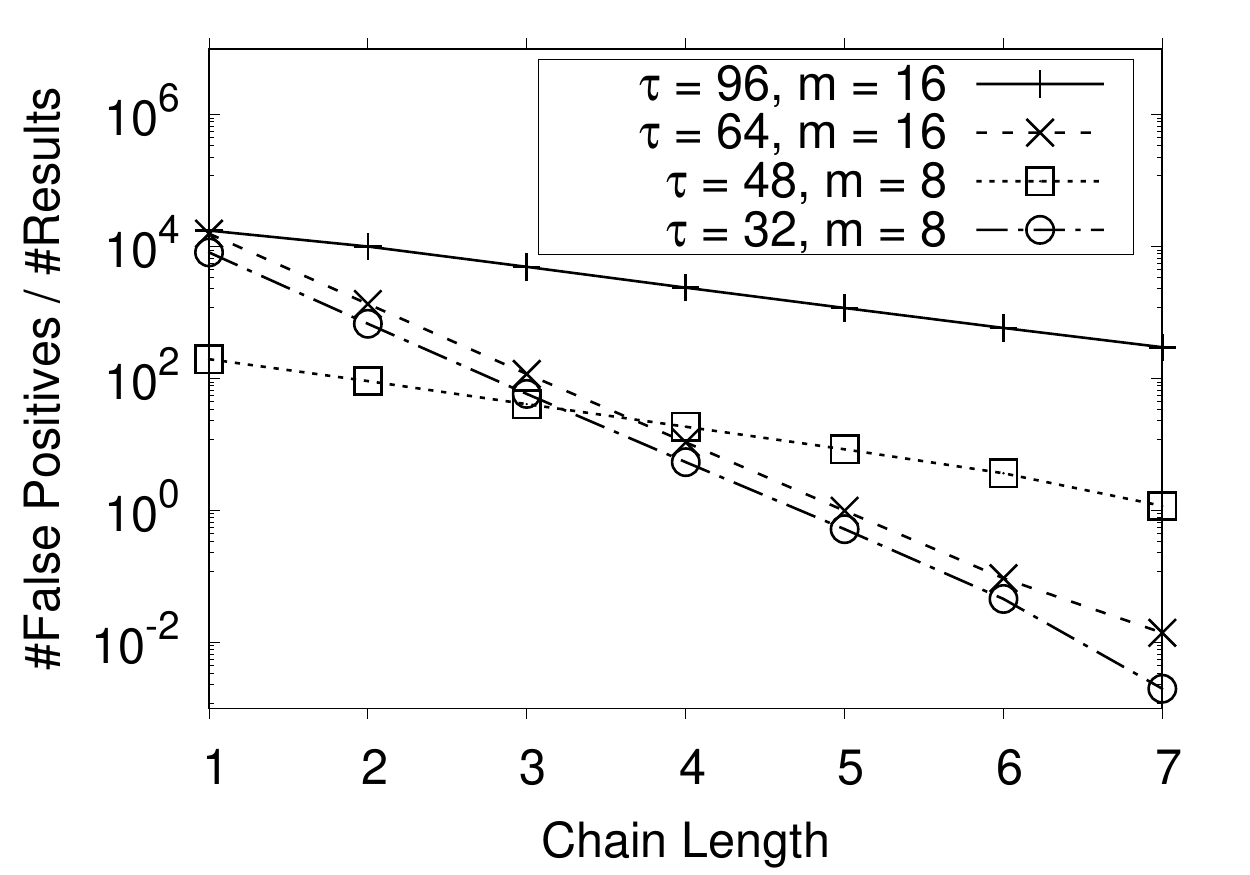}
    \label{fig:exp-chain-hamming-cand}
  }
  \caption{Filtering performance analysis.}
  \label{fig:filtering-analysis}
\end{figure}

\commented{
\section{Filtering Performance Analysis} \label{sec:candidate-analysis}
We analyze the filtering power of the pigeonring principle. We first 
measure the probability that a data object is a candidate for a given 
$l$ (denoted $\Pr(CAND_l)$), and then analyze the percentage of 
false positives in a candidate set of the pigeonring principle. 
For simplicity, we call each $b_i$ a \emph{box}.

\subsection{Independent Case}
Assume all the $m$ boxes are independent random variables in 
$(-\infty, +\infty)$, and they 
have the same probability density function (PDF), denoted $p$. Let $n = \tau$. 
Then the threshold for a chain of length $l$ is $l \cdot \frac{\tau}{m}$. We 
abuse the term ``viable'' and ``non-viable'' on a box because it can be regarded 
as a chain of length 1. 

Our idea is to construct by recurrence \emph{all} the rings in which there 
is no prefix-viable chain of length $l$, hence to obtain $1 - \Pr(CAND_l)$. 
By the pigeonring principle, for such rings $\sumv{B} > \tau$. 
By Corollary~\ref{cor:pigeonring-all-directions}, there exists at least a 
suffix-non-viable chain of length $m$, i.e., we can always find a 
complete chain which is suffix-non-viable. 
Although $b_0 \ldots b_{m-1}$ might 
not be a suffix-non-viable chain, we will discuss this scenario 
later, and assume $b_0 \ldots b_{m-1}$ is a suffix-non-viable chain first. 
We begin with an example. 

\begin{example}
  Suppose $m = 3$ and $l = 2$, and consider a suffix-non-viable chain. There are 
  only three cases for its boxes: \str{NNN}, \str{NVN}, and 
  \str{VNN}, where \str{V} and \str{N} denote viable and non-viable boxes, 
  respectively. Hence it can be constructed by concatenating subchains drawn 
  from a set $\set{\str{N}, \str{VN}}$.  In addition, by Lemma~\ref{lem:joint-chain}, 
  \str{VN} must be a non-viable chain, and thus a suffix-non-viable chain. 
  Since both \str{N} and \str{VN} are suffix-non-viable, if we construct a 
  chain by concatenating any number of them, by Corollary~\ref{cor:joint-chain-prefix}, 
  the resulting chain is always suffix-non-viable, and the probability that 
  any chain is such resulting chain is the 
  product of the probabilities of its constituent chains. Thus, when $m = 3$ 
  and $l = 2$, the probability that a chain of length $l$ is a suffix-non-viable
  chain is $\Pr(\str{N})^3 + 2\Pr(\str{N})\Pr(\str{VN})$. 
  $\Pr(\str{N})$ ($\Pr(\str{VN})$) denotes the probability that a chain of 
  length 1 (2) is \str{N} (\str{VN}), which is computed by integral of the PDF. 
\end{example}

From the above example, we observe that suffix-non-viable chains can be constructed 
by concatenation of chains drawn from a set, and then the probability is computed. 
Such set of chains is called a \emph{word set}, and each chain in it is 
called a \emph{word}. To correctly compute $1 - \Pr(CAND_l)$, the word set we 
seek should meet the following conditions: 
\begin{inparaenum} [(1)]
  \item Correctness: The concatenation of words will not increase the 
  lengths of prefix-viable subchains. 
  \item No false negative: every suffix-non-viable chain of length $m$ 
  is a word or a concatenation of words in the set. 
  \item No false positive: words or their concatenations of length $m$ 
  are always suffix-non-viable chains. 
  \item No duplicate: for each suffix-non-viable chain, there is only one 
  way to construct the chain with the words. 
  \item Independence: by concatenation, the probability that any chain 
  is the resulting chain is the product of its constituent words' probabilities. 
\end{inparaenum}

Because our goal is to avoid prefix-viable chains of length $l$, we 
consider the word set which consists of
\begin{inparaenum} [(1)]
  \item non-viable chain of length 1, and 
  \item suffix-non-viable chains of length $l'$, $l' \in [2 \twoldots l]$, whose 
  $(l' - 1)$-prefixes are prefix-viable. 
\end{inparaenum} 
Let $W$ denote this word set. Since the words in $W$ are all suffix-non-viable, 
by Corollary~\ref{cor:joint-chain-prefix}, Conditions 1, 3 and 5 are satisfied. 
To prove that Condition 2 is also satisfied, given a suffix-non-viable chain of 
length $m$, we scan through $b_0$ to $b_{m-1}$. Whenever we reach a box which is 
not in any prefix-viable subchain, we cut it off along with all its preceding 
boxes as a subchain from the suffix-non-viable chain. Since the suffix-non-viable chain 
has no prefix-viable subchain of length $l$, the subchains cut off are all in $W$. 
Moreover, no word is a prefix of another in $W$. This is the same as a prefix 
code such as Huffman code, and thus Condition 4 is satisfied. With all five 
conditions met, $W$ is the just the word set we need. Let $w_i$ denote a word 
in $W$, where $i$ is the length of the chain. The probability that a chain of 
length $i$ is $w_i$ is: 
\begin{align*}
  \Pr(w_1) = \int_{\frac{\tau}{m}}^{+\infty} p(x)dx. 
\end{align*}
\begin{align*}
  \Pr(w_2) = \int_{-\infty}^{\frac{\tau}{m}} p(x)dx \int_{\frac{2\tau}{m} - x}^{+\infty} p(y)dy.
\end{align*}
\begin{align*}
  & \Pr(w_i) = \int_{-\infty}^{\frac{\tau}{m}} p(x_0)dx_0 \int_{-\infty}^{\frac{2\tau}{m} - x_0} p(x_1)dx_1 \ldots \\
  & \int_{-\infty}^{\frac{i\tau}{m} - \sum_{j = 0}^{i - 2}x_j} p(x_{i-1})dx_{i-1} 
  \int_{\frac{(i+1)\tau}{m} - \sum_{j = 0}^{i - 1}x_j}^{+\infty} p(y)dy,  
\end{align*}
if $i > 2$.

Let $F(x)$ be the probability of a suffix-non-viable chain of length $x$. 
Using the words in $W$, we generate a suffix-non-viable chain by recurrence 
and compute the probability. 
\begin{align*} 
  F(x) = 
  \begin{cases}
    1 & \text{, if } x = 0; \\
    \sum_{i=1}^{\min(x, l)} F(x-i) \cdot \Pr(w_i) & \text{, if } x > 0. 
  \end{cases} 
\end{align*}


Let $G(x)$ be the probability that there is no prefix-viable chain of length 
$l$ in a ring of $x$ boxes, i.e., $1 - \Pr(CAND_l)$. Because we assume that 
$b_0 \ldots b_{m-1}$ is the suffix-non-viable chain, $b_{m-1}$ always ends with the 
last box of a word. To compute $G(x)$, we also need to consider the case in 
which $b_{m-1}$ ends with other positions in a word. This can be done by 
shifting the starting position of a suffix-non-viable chain of length $(x - l')$, 
$2 \leq l' \leq l$, to $b_1 \ldots b_{l-1}$, and then append a word in $W$ of 
length $l'$. We can see that this process also meets the aforementioned five 
conditions for a word set (replace ``suffix-non-viable(s)'' with ``ring(s) 
in which there is no prefix-viable chain of length $l$''). Thus, 
\begin{align*} 
  G(x) = 
  \begin{cases}
    F(x) & \text{, if } $x = 1$; \\
    F(x) + \sum_{i=2}^{\min(x, l)}(F(x - i) \cdot \Pr(w_i) \cdot (i - 1)) & \text{, if } x > 1. 
  \end{cases}
\end{align*}


Then $\Pr(CAND_l) = 1 - G(m)$.

Next we analyze the expected percentage of false positives in a candidate 
set. Assume that $\sum_{i = 0}^{m-1} b_i(x, q) = f(x, q)$. Then the probability 
that an object is a result of a query $q$ is 
\begin{align*}
  \Pr(RES) = & \int_{-\infty}^{+\infty} p(x_0)dx_0 \ldots \\
  & \int_{-\infty}^{+\infty} p(x_{m-2})dx_{m-2} \int_{-\infty}^{\tau - \sum_{i=0}^{m-2} x_i} dx_{m-1}. 
\end{align*}
The percentage of false positives is $1 - \frac{\Pr(RES)}{\Pr(CAND_l)}$. We note 
that when $l = m$, because the complete chain is used to generate 
candidates, $\Pr(RES) = \Pr(CAND_l)$, meaning that the candidates are 
exactly the results. 

\todo{Assuming $b_i$ is an integer in $[0, \omega]$, $p$ is 
uniform distribution. Given a few numbers for some example 
$m$, $l$, and $\tau$ to show our filtering power.}

If boxes have different PDFs,~\fixme{We are not solving math quizzes 
or ACM/ICPC problems. The contents from here to the end of this section 
are not quite useful.} 
assuming the PDF for $b_i$ is $p_i$, to 
compute $1 - \Pr(CAND_l)$, each word in $W$ is divided into $m$ cases for 
different starting positions: $w_i$ becomes $w_i^0 \ldots w_i^{m-1}$, 
where the superscript denotes the box ID of the starting position of 
the word. The probabilities of the words are computed by integral of $p_i$. 
Recall for the case of identical PDF, we compute $G(x)$ by shifting 
the starting position of the complete chain to adjust the position of 
$b_{m-1}$ in a word. For different PDFs, we enumerate the words 
containing $b_0$ as the first word in the ring (because this is equivalent 
to enumerate the possibilities of $b_{m-1}$), and then start the 
recurrence. In the recurrence, we append the words whose first box ID 
matches the current chain; i.e., they should be touching. Instead of 
computing $F(x)$ and $G(x)$, we compute $G(x, i)$, where $i$ denotes the 
starting position of the first word we use to construct the ring. $i$ is 
either $0$ or in the range $[m - l + 1, m - 1]$. When $x$ reaches $m$, 
the recurrence is finished. Then $1 - \Pr(CAND_l) = G(m, 0) + \sum_{i=m-l+1}^{m-1} G(m, i)$. 

In the presence of threshold allocation, assuming the threshold 
of each box is $\tau_0, \ldots, \tau_{m-1}$, we may replace $\frac{\tau}{m}$ 
in the above equations with corresponding thresholds to compute the words' 
probabilities. The modification is straightforward, and thus equations 
are omitted. 

\subsection{Dependent Case}
We discuss the case when boxes are dependent random variables. We assume 
the boxes in which dependency exists are described by joint PDFs. First, each 
word in $W$ becomes $m$ words with different starting positions. This is the 
same as how we deal with the case of different PDFs. Then each word is 
regarded as a vertex in a graph. Two words are connected by an edge if they 
are non-overlapping (sharing no common boxes) but contain boxes captured 
by the same joint PDF. The graph thereby consists of 
a number of connected components. Each maximal set of non-overlapping words in 
a connected component is taken as a whole (called word group) to compute 
probability. 

Then the ring is regarded as a room of $m$ vacant seats, each for a box.  
Since a word group may contain words that are physically disconnected (e.g., 
one in the beginning and one in the middle, but they contain boxes with 
dependency), for the recurrence, we use an $m$-bit binary vector to 
record which seats have been taken by word groups. A bit is zero if it is 
not taken by any word group; or one, otherwise. Each word group also has a 
$m$-bit vector, indicating which boxes (seats) it takes. The probability 
of the ring is represented by $G(V, i)$, 
where $V$ is the binary vector, and $i$ ($0$ or in the range $[m - l + 1, m - 1]$) 
is the starting position of the word containing $b_0$ (the same as the 
different PDF case), The probability $1 - \Pr(CAND_l)$ is computed by  
dynamic programming: All $G(V, i)$'s are initialized as $0$. We begin by 
enumerating the word group containing $b_0$ as the first to take the seats, 
and update the corresponding $G(V, i)$. Then we start the recurrence. 
In each step, we pick an already computed $G(V, i)$ and choose an unused word 
group to take the remaining seats. The resulting vector $V'$ is obtained by 
a bit $\OR$. The product of $G(V, i)$ and the word group's probability is 
added to $G(V', i)$. Finally, we obtain all the $G(V, i)$ values where 
$V$ is all one. Then $1 - \Pr(CAND_l) = G(V, 0) + \sum_{i=m-l+1}{m-1} G(V, i)$, 
where $V = 11\ldots{}1$. 
}


\section{Variable Threshold Allocation and Integer Reduction} \label{sec:threshld-allocation}
The pigeonhole principle has many variants. We discuss two variants that 
have been utilized to solve $\tau$-selection problems: variable threshold 
allocation and integer reduction. 

Instead of using a fixed threshold $n / m$, we may assign different thresholds 
for $b_0, \ldots, b_{m - 1}$. 
\begin{theorem} [Pigeonhole Principle - Variable Threshold Allocation~\cite{DBLP:conf/icde/QinWXWLI18}] \label{thm:pigeonhole-threshold-allocation}
  Given two sequences of real numbers: $(b_0, \ldots, b_{m-1})$ and 
  $(t_0, \ldots, t_{m-1})$. If $b_0 + b_1 + \ldots + b_{m-1} \leq n$ and 
  $t_0 + t_1 + \ldots + t_{m-1} = n$, then there exists at least one 
  $b_i, i \in [0 \twoldots m - 1]$, such that $b_i \leq t_i$. 
\end{theorem}

If $b_0, \ldots, b_{m-1}$ are limited to integers, the thresholds do not 
have to sum up to $n$, but $n - m + 1$, as stated below. 
\begin{theorem} [Pigeonhole Principle - Integer Reduction~\cite{DBLP:conf/icde/QinWXWLI18}] \label{thm:pigeonhole-integer-reduction}
  Given two sequences of integers $(b_0, \ldots, b_{m-1})$ and 
  $(t_0, \ldots, t_{m-1})$. If $b_0 + b_1 + \ldots + b_{m-1} \leq n$ and 
  $t_0 + t_1 + \ldots + t_{m-1} = n - m + 1$, then there exists at least one 
  $b_i, i \in [0 \twoldots m - 1]$, such that $b_i \leq t_i$. 
\end{theorem}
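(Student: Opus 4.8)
The plan is to argue by contradiction, where the single non-routine ingredient is the integrality of the $b_i$ and $t_i$. Suppose, for contradiction, that $b_i > t_i$ holds for every index $i \in [0 \twoldots m-1]$. Because both $b_i$ and $t_i$ are integers, the strict inequality $b_i > t_i$ is equivalent to $b_i \geq t_i + 1$. This is the crux of the whole argument: the improvement from $n$ down to $n - m + 1$ in the threshold budget comes entirely from the fact that an integer cannot lie strictly between two consecutive integers, so each violated box must overshoot by a full unit.

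Summing the $m$ inequalities then gives
\begin{align*}
  \sum_{i=0}^{m-1} b_i \;\geq\; \sum_{i=0}^{m-1}(t_i + 1) \;=\; \sum_{i=0}^{m-1} t_i + m \;=\; (n - m + 1) + m \;=\; n + 1,
\end{align*}
which contradicts the hypothesis $b_0 + b_1 + \ldots + b_{m-1} \leq n$. Hence the assumption is untenable, and there must exist at least one index $i$ with $b_i \leq t_i$, as claimed.

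An alternative route I would note is a direct reduction to Theorem~\ref{thm:pigeonhole-threshold-allocation}. Define inflated thresholds $s_i = t_i + \frac{m-1}{m}$, so that $\sum_{i=0}^{m-1} s_i = (n - m + 1) + (m-1) = n$. Applying the variable-threshold form of the pigeonhole principle to $(b_0, \ldots, b_{m-1})$ and $(s_0, \ldots, s_{m-1})$ yields an index $i$ with $b_i \leq s_i = t_i + \frac{m-1}{m}$; since $b_i$ and $t_i$ are integers and $0 \leq \frac{m-1}{m} < 1$, this forces $b_i \leq t_i$. I do not anticipate a genuine obstacle in either version, as the statement is elementary. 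The only point demanding care is precisely the integrality step: in the contradiction proof one must invoke $b_i \geq t_i + 1$ rather than merely $b_i > t_i$, and in the reduction one must check that the per-box surplus stays strictly below $1$ so that rounding recovers the sharp bound $b_i \leq t_i$.
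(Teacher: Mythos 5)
Your proof is correct and is essentially the paper's own argument: the paper also assumes $b_i > t_i$ for every $i$, uses integrality to get $b_i \geq t_i + 1$, sums to obtain $n+1$, and derives a contradiction with $\sum b_i \leq n$. Your writeup is in fact slightly more careful than the paper's (which writes ``$=$'' where ``$\geq$'' is meant), and the alternative reduction to Theorem~\ref{thm:pigeonhole-threshold-allocation} is a valid but inessential addition.
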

To see this is correct, assume $b_i > t_i$ for every $i \in [0 \twoldots m - 1]$.  
Then $b_0 + b_1 + \ldots + b_{m-1} = \sum_{i=0}^{m-1} (t_i + 1) = n - m + 1 + m = n + 1$. 
It contradicts $b_0 + b_1 + \ldots + b_{m-1} \leq n$.

The pigeonring principle applies to these variants as well. Let 
$T = (t_0, \ldots, t_{m-1})$. We have the following theorem.


  
  

\begin{theorem} [Pigeonring Principle - Variable Threshold Allocation] \label{thm:pigeonring-threshold-allocation}
  Consider two sequences of real numbers $B$ and $T$. 
  If $\sumv{B} \leq n$ and $\sumv{T} = n$, 
  then $\forall l \in [1 \twoldots m]$, there exists at least one 
  $c_i^l \in C_B$ such that each of its prefixes 
  $c_i^{l'}$ satisfies $\sumv{c_i^{l'}} \leq \sum_{j=i}^{j+l'-1}t_{j}$.  
\end{theorem}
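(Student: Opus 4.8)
The plan is to reduce this variable-threshold statement to the strong form of the pigeonring principle (Theorem~\ref{thm:pigeonring-principle-prefix}) by folding the threshold sequence into the box sequence through an elementwise shift. Concretely, I would define $a_i = b_i - t_i$ for each $i \in [0 \twoldots m-1]$ and let $A = (a_0, \ldots, a_{m-1})$ be the resulting sequence of $m$ real numbers, placed on the same ring so that $a_i$ occupies the box previously holding $b_i$ and $t_i$. The hypothesis of the strong form is then immediate with the bound taken to be $0$: since $\sumv{A} = \sum_{i=0}^{m-1}(b_i - t_i) = \sumv{B} - \sumv{T} \leq n - n = 0$, we have $\sumv{A} \leq 0$. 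This is legitimate precisely because the pigeonring principle holds for real bounds, so the value $0$ is admissible.

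Next I would invoke Theorem~\ref{thm:pigeonring-principle-prefix} on $A$ with bound $0$. For each $l \in [1 \twoldots m]$ it produces an index $i$ such that the length-$l$ chain starting at box $i$ is prefix-viable; with the per-prefix quota now equal to $l' \cdot 0 / m = 0$, prefix-viability states exactly that $\sum_{j=i}^{i+l'-1} a_j \leq 0$ for every prefix length $l' \in [1 \twoldots l]$. Unfolding the substitution term by term, $\sum_{j=i}^{i+l'-1} a_j \leq 0$ is the same as $\sum_{j=i}^{i+l'-1}(b_j - t_j) \leq 0$, which rearranges to $\sumv{c_i^{l'}} = \sum_{j=i}^{i+l'-1} b_j \leq \sum_{j=i}^{i+l'-1} t_j$. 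Since this holds simultaneously for all $l'$, the chain $c_i^l \in C_B$ is precisely the one the theorem asserts, so the proof would be complete.

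The step I expect to demand the most attention is checking that prefix-viability transfers correctly under the shift. The strong form's quota for a prefix of length $l'$ is $l' \cdot n / m$, which is in general nonzero and length-dependent; the reduction works only because setting the bound to $0$ collapses every one of these quotas to the single value $0$, so that one prefix-viable chain for $A$ maps onto the entire family of variable-threshold prefix inequalities for $B$. Beyond this observation, little remains: because the substitution is elementwise it leaves the ring adjacency untouched, and hence $C_A$ is indexed identically to $C_B$, so the inductive machinery (the concatenation and viable-suffix lemmata) is already supplied by Theorem~\ref{thm:pigeonring-principle-prefix} and need not be redeveloped for the variable-threshold setting.
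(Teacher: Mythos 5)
Your proof is correct, and it takes a genuinely different route from the paper's. The paper proves this theorem by re-running the induction behind Theorem~\ref{thm:pigeonring-principle-prefix}: it redefines a viable chain as one satisfying $\sumv{c_i^l} \leq \sum_{j=i}^{i+l-1} t_j$, notes that Lemmata~\ref{lem:joint-chain} and~\ref{lem:viable-suffix} survive this modified definition, and then repeats the inductive step. You instead use Theorem~\ref{thm:pigeonring-principle-prefix} as a black box, recentering the boxes via $a_i = b_i - t_i$ and applying the strong form to $A$ with bound $0$, which is admissible because the principle is stated (and proved) for arbitrary real $n$; the per-prefix quotas $l' \cdot 0 / m$ all collapse to $0$, so prefix-viability of $c_i^{l'}$ in $C_A$ unfolds term by term to $\sumv{c_i^{l'}} \leq \sum_{j=i}^{i+l'-1} t_j$ in $C_B$, exactly the claimed condition (an elementwise shift leaves the ring structure and indexing untouched, as you note). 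Your reduction is shorter, needs no re-verification of the lemmata, and exposes that variable threshold allocation is equivalent to the uniform strong form rather than a strict generalization of it, since the converse direction is just the specialization $t_i = n/m$. What the paper's rerun-the-induction approach buys is uniformity across variants: the same modified-viability argument is cited again for Theorem~\ref{thm:pigeonring-integer-reduction}, whereas your reduction there needs one extra step --- taking $a_i = b_i - t_i - 1$ yields prefix sums at most $-l'/m$, and one must invoke integrality of the $a_j$ to round this down to $\leq -1$ and recover the stated bound $l' - 1 + \sum_{j=i}^{i+l'-1} t_j$.
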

\fullversion{
\begin{proof}
  We prove by mathematical induction. 
  
  When $l = 1$, the statement holds, because otherwise 
  $\sumv{B} = \sum_{i=0}^{m-1}\sumv{c_i^1} > \sumv{T}$, and 
  hence is greater than $n$, which contradicts the assumption 
  $\sumv{B} \leq n$. 
  
  For the inductive step, we modify the definition of a viable chain 
  $c_i^l$ as a chain satisfying $\sumv{c_i^l} \leq \sum_{j=i}^{j+l-1}t_{j}$, 
  and the definition of a prefix-viable chain accordingly. 
  Lemmata~\ref{lem:joint-chain} and \ref{lem:viable-suffix} still hold. 
  Then the proof follows 
  the same way as the inductive step of the proof of Theorem~\ref{thm:pigeonring-principle-prefix}. 
\end{proof}
}


Assume $\sumv{B(x, q)} = f(x, q)$. We can distribute the threshold $\tau$ 
with a sequence $T$ such that $\sumv{T} = \tau$. By 
Theorem~\ref{thm:pigeonring-threshold-allocation}, a data object becomes 
a candidate only if it yields a chain such that each of its prefixes 
$c_i^l$ satisfies $\sumv{c_i^l} \leq \sum_{j=i}^{j+l-1}t_{j}$. 

\begin{example} \label{ex:threshold-allocation}
  Consider $x^1$ in Example~\ref{ex:hamming-pigeonring}. 
  Suppose $T = [1, 2, 0, 1, 1]$. $\sumv{T} = 5 = \tau$. 
  When $l = 2$, $\sumv{c_0^l} = 3 \leq t_0 + t_1$. 
  It is the only chain of length 2 satisfying $\sumv{c_i^l} \leq \sum_{j=i}^{j+l-1}t_{j}$. 
  However, its 1-prefix $\sumv{c_0^1} = 2 > t_0$. $x^1$ is  
  filtered. 
\end{example}

It can be seen that Theorem~\ref{thm:pigeonring-principle-prefix} is a special 
case of Theorem~\ref{thm:pigeonring-threshold-allocation} when $t_i = n / m$ 
for every $i \in [0 \twoldots m - 1]$. If we regard the boxes in $B$ 
as variables, then with an assumption on these variables, the condition of $T$ in 
Theorem~\ref{thm:pigeonring-threshold-allocation}, $\sumv{T} = n$, is tight: 
\begin{lemma} \label{lem:pigeonring-tightness}
  Assume the $m$ boxes in $B$ are independent variables and $\forall i \in [0 \twoldots m - 1]$, 
  the range of $b_i$ is an interval $[u_i, v_i]$. If $\sum_{0}^{m-1}u_i \leq n \leq \sum_{0}^{m-1}v_i$, 
  then $\nexists T = (t_0, \ldots, t_{m-1})$ such that 
  \begin{inparaenum} [(1)]
    \item $\sumv{T} < n$; and 
    \item $\forall \sumv{B} \leq n$ and $l \in [1 \twoldots m]$, there exists at least 
    one $c_i^l \in C_B$ such that each of its prefixes $c_i^{l'}$ satisfies 
    $\sumv{c_i^{l'}} \leq \sum_{j=i}^{j+l'-1}t_{j}$. 
  \end{inparaenum}
\end{lemma}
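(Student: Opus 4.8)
The plan is to argue by contradiction and to let the complete chain ($l=m$) do all the work, since that is where condition~(2) imposes the tightest demand and where the chain threshold collapses to exactly $\sumv{T}$. Suppose some $T=(t_0,\ldots,t_{m-1})$ satisfied both (1) $\sumv{T}<n$ and (2) the prefix-viability guarantee for every $B$ with $\sumv{B}\leq n$ and every $l\in[1\twoldots m]$. I will derive a contradiction by exhibiting a single admissible $B$ that has no prefix-viable complete chain.

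First I would produce a witness $B$ with $\sumv{B}=n$. Because the boxes are independent and each $b_i$ may take any value in the interval $[u_i,v_i]$, the set of attainable sums $\sumv{B}=\sum_i b_i$ is exactly the interval $[\sum_i u_i,\sum_i v_i]$. The hypothesis $\sum_i u_i\leq n\leq\sum_i v_i$ places $n$ inside this interval, so I can pick $B$ with each $b_i\in[u_i,v_i]$ and $\sumv{B}=n$; this $B$ is admissible for~(2) since $\sumv{B}=n\leq n$.

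Next I would instantiate condition~(2) on this $B$ at $l=m$. The chains of length $m$ are the complete chains $c_i^m$, and the $m$-prefix of $c_i^m$ is $c_i^m$ itself; using the wrap-around convention $t_j=t_{j\bmod m}$, its threshold is $\sum_{j=i}^{i+m-1}t_j=\sumv{T}$ regardless of the starting box $i$. Prefix-viability of $c_i^m$ would therefore require $\sumv{c_i^m}=\sumv{B}=n\leq\sumv{T}$, contradicting $\sumv{T}<n$ from~(1). Since no starting box yields a prefix-viable complete chain, condition~(2) fails at $l=m$, and the assumed $T$ cannot exist.

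I do not anticipate a serious obstacle: the only delicate point is the existence of the witness, and that is exactly what the hypothesis $\sum_i u_i\leq n\leq\sum_i v_i$ guarantees. In fact any value of $\sumv{B}$ in the nonempty range $(\sumv{T},n]$ would already break the complete-chain condition, so choosing $\sumv{B}=n$ is merely the cleanest option rather than a forced one.
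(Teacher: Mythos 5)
Your proposal is correct and follows essentially the same route as the paper's proof: both exhibit a witness $B$ with $\sumv{B}=n$ (guaranteed by the interval hypothesis and independence) and observe that at $l=m$ every complete chain has sum $n>\sumv{T}$, so condition~(2) fails. Your write-up merely spells out more explicitly that the $m$-prefix of a complete chain is the chain itself and that its threshold collapses to $\sumv{T}$.
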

\fullversion{
\begin{proof}
  We show that if $T$ satisfies the first condition, then it does not satisfy the second 
  condition. Because $\sum_{0}^{m-1}u_i \leq n \leq \sum_{0}^{m-1}v_i$ and the $m$ boxes 
  are independent, we can always find a $B$ such that $\sumv{B} = n$. Because $\sumv{T} < n$, 
  for any $c_i^m \in C_B$, $\sumv{c_i^m} > \sum_{j=i}^{j+m-1}t_{j}$. The second 
  condition is not satisfied when $l = m$, thus eliminating the existence of a $T$ which 
  satisfies both conditions. 
\end{proof}}
Intuitively, this lemma means that when the $m$ boxes are independent and 
every box is a real number in a continuous range, if we use Theorem~\ref{thm:pigeonring-threshold-allocation} 
for filtering, the thresholds of boxes cannot be reduced while we are still 
guaranteed to find all the results, which is necessary for an exact algorithm.



If the $m$ boxes are limited to integers, we may use integer reduction 
to reduce the thresholds, like in Theorem~\ref{thm:pigeonhole-integer-reduction}. 


  
  

\begin{theorem} [Pigeonring Principle - Integer Reduction] \label{thm:pigeonring-integer-reduction}
  Consider two sequences of integers $B$ and $T$. 
  If $\sumv{B} \leq n$ and $\sumv{T} = n - m + 1$, 
  then $\forall l \in [1 \twoldots m]$, there exists at least one 
  $c_i^l \in C_B$ such that that each of its prefixes 
  $c_i^{l'}$ satisfies $\sumv{c_i^{l'}} \leq l' - 1 + \sum_{j=i}^{j+l'-1}t_{j}$. 
\end{theorem}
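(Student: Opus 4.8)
The plan is to reduce Theorem~\ref{thm:pigeonring-integer-reduction} to the already-established real-valued Variable Threshold Allocation result (Theorem~\ref{thm:pigeonring-threshold-allocation}) by lifting the integer thresholds to fractional ones, and then to recover the extra ``$-1$'' of slack from integrality. Concretely, I would introduce a real threshold sequence $\hat{T} = (\hat{t}_0, \ldots, \hat{t}_{m-1})$ defined by $\hat{t}_j = t_j + \tfrac{m-1}{m}$. The whole point of the fractional increment $\tfrac{m-1}{m}$ is that it makes the total land exactly on $n$: $\sumv{\hat{T}} = \sumv{T} + m \cdot \tfrac{m-1}{m} = (n - m + 1) + (m-1) = n$. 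Together with the hypothesis $\sumv{B} \leq n$, this is precisely the precondition of Theorem~\ref{thm:pigeonring-threshold-allocation} with bound $n$ and thresholds $\hat{T}$ (note $B$ being integral is a special case of real-valued).

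First I would invoke Theorem~\ref{thm:pigeonring-threshold-allocation}: for every $l \in [1 \twoldots m]$ there is a chain $c_i^l \in C_B$ each of whose prefixes $c_i^{l'}$, $l' \in [1 \twoldots l]$, satisfies $\sumv{c_i^{l'}} \leq \sum_{j=i}^{i+l'-1} \hat{t}_j = \left(\sum_{j=i}^{i+l'-1} t_j\right) + l' \cdot \tfrac{m-1}{m}$. Writing $S = \sum_{j=i}^{i+l'-1} t_j$, which is an integer since each $t_j \in \mathbb{Z}$, this reads $\sumv{c_i^{l'}} \leq S + l' - \tfrac{l'}{m}$. The crucial second step sharpens this real bound using integrality: because $B$ is an integer sequence, $\sumv{c_i^{l'}}$ is an integer, so $\sumv{c_i^{l'}} - S$ is an integer bounded above by $l' - \tfrac{l'}{m}$. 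For $l' \in [1 \twoldots m-1]$ we have $\tfrac{l'}{m} \in (0,1)$, hence $l' - \tfrac{l'}{m} \in (l'-1, l')$ and the largest integer not exceeding it is $l'-1$; for $l' = m$ the bound equals the integer $m-1$. In both cases $\sumv{c_i^{l'}} - S \leq l' - 1$, i.e. $\sumv{c_i^{l'}} \leq l' - 1 + \sum_{j=i}^{i+l'-1} t_j$, which is exactly the desired prefix condition. Since these bounds hold simultaneously for all prefixes of the single chain $c_i^l$ returned by Theorem~\ref{thm:pigeonring-threshold-allocation}, that chain witnesses the claim.

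I expect the main obstacle to be the observation that drives the whole argument: the extra slack ``$l'-1$'' is a constant that does not scale with the prefix length $l'$, so it cannot be reproduced by a direct integer per-box increment of the thresholds. Adding $1$ to each $t_j$ would overshoot by exactly $1$ on every prefix, and Theorem~\ref{thm:pigeonring-threshold-allocation}'s inequality is non-strict — indeed tight, by Lemma~\ref{lem:pigeonring-tightness} — so the overshoot could not afterwards be undone. Choosing the fractional increment $\tfrac{m-1}{m}$ is what resolves this: it keeps $\sumv{\hat{T}} = n$ so the theorem applies, while leaving each prefix bound lying strictly below $S + l'$, so that flooring an integer quantity against it yields precisely $S + l' - 1$. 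As a sanity check, the base case $l = 1$ collapses to ``there exists $b_i \le t_i$'', matching the Pigeonhole Integer Reduction (Theorem~\ref{thm:pigeonhole-integer-reduction}).

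Should the integrality step in this reduction be deemed delicate, the fallback I would use is to re-run the induction of Theorem~\ref{thm:pigeonring-principle-prefix} after redefining a chain $c_i^{l'}$ to be \emph{viable} when $\sumv{c_i^{l'}} \leq l' - 1 + \sum_{j=i}^{i+l'-1} t_j$. One checks that Lemmata~\ref{lem:joint-chain} and \ref{lem:viable-suffix} still hold: the viable direction of Lemma~\ref{lem:joint-chain} loses slack $(l'-1)+(l''-1) \le (l'+l'')-1$ and so survives, while the non-viable direction now uses integrality to turn a strict ``$>$'' into a gain of $+1$. The complete chain is viable since its threshold is $(m-1)+\sumv{T}=n \ge \sumv{B}$, and the rest of the induction proceeds verbatim as in Theorem~\ref{thm:pigeonring-principle-prefix}.
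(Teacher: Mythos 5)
Your proposal is correct, and your primary argument takes a genuinely different route from the paper's. The paper proves this theorem only by the remark that ``the proof is similar to Theorem~\ref{thm:pigeonring-threshold-allocation},'' whose proof in turn redefines a viable chain $c_i^{l'}$ (here, as one satisfying $\sumv{c_i^{l'}} \leq l'-1+\sum_{j=i}^{i+l'-1}t_j$), checks that Lemmata~\ref{lem:joint-chain} and~\ref{lem:viable-suffix} still hold, and re-runs the induction of Theorem~\ref{thm:pigeonring-principle-prefix}; that is exactly your fallback, including the two places where you observe integrality is needed (the base case and the non-viable half of the concatenation lemma, where ``$>$'' must be upgraded to a ``$+1$''). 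Your main route instead uses the real-valued Theorem~\ref{thm:pigeonring-threshold-allocation} as a black box: lifting each threshold to $\hat{t}_j = t_j + \tfrac{m-1}{m}$ lands $\sumv{\hat{T}}$ exactly on $n$, the theorem yields prefix bounds $S + l' - \tfrac{l'}{m}$ with $S = \sum_{j=i}^{i+l'-1}t_j$, and since $\sumv{c_i^{l'}} - S$ is an integer and $l' - \tfrac{l'}{m}$ lies in $(l'-1, l')$ for $l' < m$ (and equals $m-1$ for $l' = m$), flooring gives precisely the claimed $l'-1$ slack. This reduction is arguably cleaner: it isolates all use of integrality into a single rounding step, requires no re-verification of the chain lemmata, and exhibits the integer-reduction theorem as a corollary of the real-valued one rather than a parallel construction; the paper's inductive route is more self-contained and uniform across the variants, but as written it silently asks the reader to redo the lemma checks that you carry out explicitly. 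One small note: the summation upper limit ``$j+l'-1$'' in the theorem statement is a typo for ``$i+l'-1$,'' which you correctly and silently repair.
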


  
  
\fullversion{The proof is similar to Theorem~\ref{thm:pigeonring-threshold-allocation}.}
This theorem suggests that if $f(x, q)$ and $\tau$ are limited to integers, we may 
distribute $\tau$ with a sequence of $m$ integers $T = (t_0, \ldots, t_{m-1})$ 
such that $\sumv{T} = \tau - m + 1$. Assume $\sumv{B(x, q)} = f(x, q)$. 
A data object becomes a candidate only if it yields a chain such that each of its  
prefixes $c_i^l$ satisfies $\sumv{c_i^l} \leq l - 1 + \sum_{j=i}^{j+l-1}t_{j}$.

\begin{example}
  Consider $x^3$ in Example~\ref{ex:hamming-pigeonring}. 
  Suppose $T = (1, 0, 0, 0, 0)$. 
  $\sumv{T} = 1 = \tau - m + 1$. 
  When $l = 2$, $\sumv{c_4^l} = 2 \leq l - 1 + t_4 + t_0$. 
  It is the only chain of length 2 satisfying $\sumv{c_i^l} \leq l - 1 + \sum_{j=i}^{j+l-1}t_{j}$. 
  However, its 1-prefix $\sumv{c_4^1} = 1 > 1 - 1 + t_4$. $x^3$ is 
  filtered. 
\end{example}

If the $m$ boxes are independent and the range of every box is an integer 
interval, then the condition of $T$ in Theorem~\ref{thm:pigeonring-integer-reduction}, 
$\sumv{T} = n - m + 1$, is tight. 
\fullversion{The proof is similar to Lemma~\ref{lem:pigeonring-tightness}.}

We may replace ``$\leq$'' with ``$\geq$'' in Theorem~\ref{thm:pigeonring-threshold-allocation} 
and the theorem still holds. If we use ``$\geq$'' instead of ``$\leq$'' in Theorem~\ref{thm:pigeonring-integer-reduction}, we need to replace 
``$n - m + 1$'' with ``$n + m - 1$'' and ``$l' - 1 + \sum_{j=i}^{j+l'-1}t_{j}$'' 
with ``$1 - l' + \sum_{j=i}^{j+l'-1}t_{j}$'' to make the theorem hold. 

\section{Filtering Framework} \label{sec:framework}
Based on the pigeonring principle, we describe a universal filtering 
framework for $\tau$-selection problems. Although this framework has 
been materialized as many (pigeonhole principle-based) solutions to 
$\tau$-selection problems, it is yet to be formulated generally. 
By this framework, we may decide the completeness and the tightness 
of any pigeonring principle-based filtering instance from a general 
perspective.



The pigeonring principle-based filtering in essence leverages  
the relation between $f$ and the sum of a set of functions' outputs. 
It consists of three components: extract, box, and bound. The extract 
component draws a bag of features from an object, such as projections, 
histograms, and substrings. A common method is to partition an object, and 
each part is regarded as a feature. To be more general, features are not 
necessarily disjoint, nor their union has to be an entire 
object. The box component distributes the bag of features into $m$ subbags 
(overlap may exist), and then returns $m$ values for $m$ pairs of subbags, 
one from a data object and the other from a query object. The bound 
component bounds the sum of the $m$ values returned by the box component. Next 
we define the framework and the components formally. 

A (pigeonring principle-based) filtering instance is a triplet $\triple{F, B, D}$ 
composed of a featuring function $F$, a sequence of boxes $B$, and a bounding 
function $D$. 

The feature extraction is implemented by 
a function $F$ which maps an object to a bag of features: $F(x) = \set{x_0, x_1, \ldots}$. 
In general, we use the same feature extraction as state-of-the-art 
pigeonhole principle-based methods do. 

Each box $b_i(x, q)$ is a function which selects subbags of features 
from $F(x)$ and $F(q)$ and returns a real number. 
The design of $b_i$ depends on the problem and the extracted features. 
In general, it captures the similarity or distance of features, or 
tells if features match or not. E.g., for Hamming distance search, a 
box returns the Hamming distance between a data and a query object over 
a part. Let $B(x, q)$ be a sequence of $m$ 
boxes: $B(x, q) = (b_0(x, q), \ldots, b_{m-1}(x, q))$. 
We construct a ring on $B(x, q)$ and collect a set of chains $C_{B(x, q)}$. 

$D$ is a function which maps a threshold $\tau$ to a real number. 
The most common case is an identity function $D(\tau) = \tau$, e.g., for 
Hamming distance search. In other cases, especially when lower bounding 
techniques are used, $D(\tau)$ may be other values, e.g., $2\tau$ for the 
content-based filter of string edit distance search~\cite{DBLP:journals/pvldb/XiaoWL08}. 
The filtering instance works on condition that $\sumv{B(x, q)}$ be 
bounded by $D(\tau)$ for every result of the query; i.e., 
$\sumv{B(x, q)} \leq D(\tau)$. 

By regarding $D(\tau)$ as $n$, we may use the pigeonring principle 
to establish a filtering condition on $C_{B(x, q)}$. The pigeonring 
principle guarantees that the set of candidates satisfying this 
condition is a superset of $\set{x \mid \sumv{B(x, q)} \leq D(\tau)}$. 
When $l = m$, the candidates are exactly $\set{x \mid \sumv{B(x, q)} \leq D(\tau)}$. 
To make the filtering instance work for the constraint $f(x, q) \leq \tau$, 
we define the \emph{completeness} of a filtering instance. Let $\mathcal{R}$ 
denote the range of $f$, $\mathcal{R} \subseteq \mathbb{R}$. $\tau \in \mathcal{R}$.
\begin{definition} 
  A filtering instance $\triple{F, B, D}$ is complete iff 
  $\forall x, q \in \mathcal{O}$ and $\tau \in \mathcal{R}$, 
  $\sumv{B(x, q)} \leq D(\tau)$ is a necessary condition of 
  $f(x, q) \leq \tau$. 
\end{definition}
Intuitively, the completeness shows the condition on which we can safely 
use the pigeonring principle so that no result will be missed for any 
input. A sufficient and necessary condition of the completeness is stated 
below.  
\begin{lemma} \label{lem:filtering-completeness}
  A filtering instance $\triple{F, B, D}$ is complete, iff 
  \begin{inparaenum} [(1)]
    \item $\forall x, q \in \mathcal{O}$, $\sumv{B(x, q)} \leq D(f(x, q))$, and 
    \item $\nexists x_1, q_1, x_2, q_2 \in \mathcal{O}$, such that 
    $f(x_1, q_1) < f(x_2, q_2)$ and $\sumv{B(x_1, q_1)} > D(f(x_2, q_2))$. 
  \end{inparaenum}
\end{lemma}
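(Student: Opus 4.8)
The plan is to prove the biconditional by establishing both implications directly, treating completeness as the statement (call it $(\ast)$) that for all $x, q \in \mathcal{O}$ and all $\tau \in \mathcal{R}$, $f(x,q) \le \tau$ implies $\sumv{B(x,q)} \le D(\tau)$. The single fact I would exploit throughout is that $\mathcal{R}$ is \emph{exactly} the range of $f$, so every admissible threshold $\tau \in \mathcal{R}$ is realized as $\tau = f(x',q')$ for some pair $(x',q') \in \mathcal{O} \times \mathcal{O}$.

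For the forward direction $(\ast) \Rightarrow (1) \wedge (2)$, I would derive condition~(1) by instantiating $(\ast)$ at $\tau = f(x,q)$, which lies in $\mathcal{R}$; the hypothesis $f(x,q) \le \tau$ then holds with equality and yields $\sumv{B(x,q)} \le D(f(x,q))$. For condition~(2) I would argue by contradiction: if some $x_1,q_1,x_2,q_2$ violated it, then putting $\tau = f(x_2,q_2) \in \mathcal{R}$ gives $f(x_1,q_1) < \tau$, so $(\ast)$ forces $\sumv{B(x_1,q_1)} \le D(\tau) = D(f(x_2,q_2))$, contradicting the assumed strict inequality $\sumv{B(x_1,q_1)} > D(f(x_2,q_2))$.

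For the reverse direction $(1) \wedge (2) \Rightarrow (\ast)$, given $x,q,\tau$ with $f(x,q) \le \tau$, I would split into two cases. If $f(x,q) = \tau$, condition~(1) immediately gives $\sumv{B(x,q)} \le D(f(x,q)) = D(\tau)$. If $f(x,q) < \tau$, I would invoke the range property to pick $(x_2,q_2)$ with $f(x_2,q_2) = \tau$ and set $(x_1,q_1) = (x,q)$; condition~(2) then forbids $\sumv{B(x,q)} > D(\tau)$, so $\sumv{B(x,q)} \le D(\tau)$ as required.

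The argument is elementary, so I anticipate no hard computational step; the only point demanding care — and the conceptual crux — is seeing \emph{why} condition~(1) alone does not suffice. Condition~(1) bounds $\sumv{B(x,q)}$ only against $D$ evaluated at the exact value $f(x,q)$, whereas completeness demands a bound against $D(\tau)$ for every larger admissible threshold. Condition~(2) supplies exactly the missing ``no inversion'' guarantee, and the fact that each $\tau$ is attained as a value of $f$ is what lets the two conditions combine to cover all $\tau \ge f(x,q)$.
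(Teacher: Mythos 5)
Your proof is correct and takes essentially the same route as the paper's: Condition (1) is obtained by instantiating completeness at $\tau = f(x,q)$, Condition (2) by instantiating at $\tau = f(x_2,q_2)$, and the converse direction uses exactly the paper's key observation that every $\tau \in \mathcal{R}$ is attained as $f(x_2,q_2)$ for some pair, so Condition (2) handles the strict case $f(x,q) < \tau$ and Condition (1) the equality case. The only (cosmetic) difference is that the paper first disposes of constant $f$ separately --- where Condition (2) is vacuous --- and then argues the non-constant case, whereas your single uniform argument covers both, since the attainment of $\tau$ requires only that $\mathcal{R}$ be the range of $f$, not that $f$ be non-constant.
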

\fullversion{
\begin{proof}
  We first prove the case when $f$ is a constant function. 
  $\mathcal{R}$ has only one element. $f(x, q) \equiv \tau$. 
  We prove the sufficiency for the completeness: Consider a filtering 
  instance $\triple{F, B, D}$ which satisfies the conditions in the lemma. 
  By replacing $\tau$ with $f(x, q)$, $\forall x, q \in \mathcal{O}$, 
  $\sumv{B(x, q)} \leq D(\tau)$. 
  $f(x, q) \equiv \tau \implies \forall x, q, \in \mathcal{O}, f(x, q) \leq \tau$. 
  $\triple{F, B, D}$ is complete. 
  We prove the necessity for the completeness: Consider a 
  complete filtering instance $\triple{F, B, D}$. 
  Because $\forall x, q \in \mathcal{O}$, $f(x, q) \leq \tau$, 
  $\sumv{B(x, q)} \leq D(\tau) = D(f(x, q))$. $\triple{F, B, D}$ satisfies Condition 1. 
  Condition 2 is satisfied as $f$ is constant. 
  
  We then prove the case when $f$ is not a constant function. 
  We prove the sufficiency for the completeness:  
  Consider a filtering instance $\triple{F, B, D}$ which satisfies the conditions 
  in the lemma. 
  We prove by contradiction. Assume $\exists x_1, q_1 \in \mathcal{O}$, 
  such that $f(x_1, q_1) \leq \tau$ and $\sumv{B(x_1, q_1)} > D(\tau)$. 
  If $f(x_1, q_1) = \tau$, by Condition 1, $\sumv{B(x_1, q_1)} \leq D(f(x_1, q_1)) = D(\tau)$. 
  It contradicts $\sumv{B(x_1, q_1)} > D(\tau)$. If $f(x_1, q_1) < \tau$, then 
  $\exists x_2, q_2 \in \mathcal{O}$, $f(x_2, q_2) = \tau$. Because $f(x_1, q_1) < f(x_2, q_2)$, 
  by Condition 2, $\sumv{B(x_1, q_1)} \leq D(f(x_2, q_2)) = D(\tau)$. 
  It contradicts $\sumv{B(x_1, q_1)} > D(\tau)$. $\triple{F, B, D}$ is complete. 
  We prove the necessity for the completeness: Consider a complete filtering 
  instance $\triple{F, B, D}$. For any $x, q \in \mathcal{O}$, let $\tau = f(x, q)$. 
  Due to the completeness, $f(x, q) \leq \tau \implies \sumv{B(x, q)} \leq D(\tau) = D(f(x, q))$. 
  $\triple{F, B, D}$ satisfies Condition 1. To show it also satisfies Condition 2, 
  we prove by contradiction. Assume $\exists x_1, q_1, x_2, q_2 \in \mathcal{O}$, 
  such that $f(x_1, q_1) < f(x_2, q_2)$ and $\sumv{B(x_1, q_1)} > D(f(x_2, q_2))$. 
  Let $\tau = f(x_2, q_2)$. Due to the completeness, $f(x_1, q_1) \leq \tau 
  \implies \sumv{B(x_1, q_1)} \leq D(\tau) = D(f(x_2, q_2))$. It contradicts 
  $\sumv{B(x_1, q_1)} > D(f(x_2, q_2))$. $\triple{F, B, D}$ satisfies Condition 2. 
\end{proof}
}
We may judge if a filtering instance is complete with this lemma. 
Further, we can prove that if a filtering instance satisfies 
Condition 1 of Lemma~\ref{lem:filtering-completeness} and $D$ is 
monotonically increasing over $\mathcal{R}$, then it is complete. 

Complete filtering instance exists for all $\tau$-selection problems: 
$m = 1$, $b_0 = -1$, and $D(\tau) = 0$. 
However, it is trivial as all the data objects are candidates. 
Besides completeness, a filtering instance is also supposed to deliver 
good filtering power. 
In the ideal case, $f(x, q) \leq \tau$ is equivalent to $\sumv{B(x, q)} \leq D(\tau)$, 
as captured by the definition of \emph{tightness}: 
\begin{definition} 
  A filtering instance $\triple{F, B, D}$ is tight 
  iff $\forall x, q \in \mathcal{O}$ and $\tau \in \mathcal{R}$, 
  $\sumv{B(x, q)} \leq D(\tau)$ is a necessary and sufficient condition 
  of $f(x, q) \leq \tau$.   
\end{definition} 
Intuitively, the tightness shows that $\sumv{B}$ can be tightly bounded 
using $f$. It also implies the completeness, and we are guaranteed that when 
using pigeonring principle with $l = m$, the candidates are exactly the 
results. A sufficient and necessary condition of the tightness is stated below. 
\begin{lemma} \label{lem:filtering-tightness}
  A filtering instance $\triple{F, B, D}$ is tight, iff 
  \begin{inparaenum} [(1)]
    \item $\forall x, q \in \mathcal{O}$, $\sumv{B(x, q)} \leq D(f(x, q))$, and 
    \item $\nexists x_1, q_1, x_2, q_2 \in \mathcal{O}$, such that
    $f(x_1, q_1) < f(x_2, q_2)$ and $D(f(x_1, q_1)) \geq \sumv{B(x_2, q_2)}$.
  \end{inparaenum}
\end{lemma}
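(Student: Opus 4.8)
The plan is to mirror the proof of Lemma~\ref{lem:filtering-completeness}, exploiting the fact that tightness is precisely completeness (the ``necessary'' direction) together with the converse ``sufficient'' direction. Concretely, the biconditional ``$\sumv{B(x,q)} \leq D(\tau)$ iff $f(x,q) \leq \tau$'' splits into two implications: (A) $f(x,q) \leq \tau \implies \sumv{B(x,q)} \leq D(\tau)$, which is exactly completeness, and (B) $\sumv{B(x,q)} \leq D(\tau) \implies f(x,q) \leq \tau$. By Lemma~\ref{lem:filtering-completeness}, (A) holds iff Condition~1 holds together with the completeness-style condition (call it C2$'$): there are no $x_1,q_1,x_2,q_2 \in \mathcal{O}$ with $f(x_1,q_1) < f(x_2,q_2)$ and $\sumv{B(x_1,q_1)} > D(f(x_2,q_2))$.

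Next I would characterize (B). Taking the contrapositive, (B) reads $f(x,q) > \tau \implies \sumv{B(x,q)} > D(\tau)$ for every $\tau \in \mathcal{R}$. Since $\mathcal{R}$ is the range of $f$, every such $\tau$ can be written as $f(x_1,q_1)$ for some pair; renaming the pair under test as $(x_2,q_2)$ rewrites (B) as: for all $x_1,q_1,x_2,q_2 \in \mathcal{O}$, $f(x_1,q_1) < f(x_2,q_2)$ implies $\sumv{B(x_2,q_2)} > D(f(x_1,q_1))$. The negation of this statement is exactly Condition~2 of the lemma, so (B) is equivalent to Condition~2.

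The remaining, and most delicate, step reconciles the two condition sets: the argument so far shows that tightness is equivalent to the conjunction of Condition~1, C2$'$, and Condition~2, whereas the lemma asserts tightness is equivalent to Condition~1 and Condition~2 alone. Hence I must argue that C2$'$ is redundant once Condition~1 and Condition~2 hold. I would do this by contradiction: suppose $x_1,q_1,x_2,q_2$ witness a failure of C2$'$, i.e. $f(x_1,q_1) < f(x_2,q_2)$ and $\sumv{B(x_1,q_1)} > D(f(x_2,q_2))$. Applying Condition~1 to $(x_1,q_1)$ gives $D(f(x_1,q_1)) \geq \sumv{B(x_1,q_1)} > D(f(x_2,q_2))$; meanwhile Condition~2 (with the roles $a_1 = (x_1,q_1)$, $a_2 = (x_2,q_2)$) forces $D(f(x_1,q_1)) < \sumv{B(x_2,q_2)}$, and Condition~1 applied to $(x_2,q_2)$ gives $\sumv{B(x_2,q_2)} \leq D(f(x_2,q_2))$, so that $D(f(x_1,q_1)) < D(f(x_2,q_2))$. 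These two chains contradict each other, establishing the collapse of the three conditions to two.

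Finally, as in the completeness proof, I would dispatch the degenerate case where $f$ is constant separately (there both C2$'$ and Condition~2 are vacuous, and tightness reduces to Condition~1), and take care throughout to keep the strict versus non-strict inequalities aligned with the ``$>$'' and ``$\geq$'' appearing in Condition~2. The main obstacle I anticipate is precisely the reconciliation step, since it is the only place where Conditions~1 and~2 must be combined nontrivially rather than read off directly from the two implications.
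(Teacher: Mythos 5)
Your proof is correct. It is, however, organized differently from the paper's: the paper's own proof of Lemma~\ref{lem:filtering-tightness} is just the remark that it is ``similar to Lemma~\ref{lem:filtering-completeness}'', i.e., a direct re-run of that proof's structure --- dispatch the constant-$f$ case, then prove sufficiency and necessity by contradiction straight from the definition of tightness, instantiating $\tau$ as $f(x_2, q_2)$ or $f(x_1, q_1)$ as needed. You instead factor tightness into completeness (implication (A)) plus the converse implication (B), invoke Lemma~\ref{lem:filtering-completeness} as a black box for (A), identify (B) with Condition~2 via the contrapositive and the fact that every $\tau \in \mathcal{R}$ is attained by $f$, and then prove the collapse: Conditions~1 and~2 together imply the completeness-style condition C2$'$. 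That reconciliation step --- the crux, as you note --- is sound: a violation of C2$'$ gives $D(f(x_1,q_1)) \geq \sumv{B(x_1,q_1)} > D(f(x_2,q_2))$ by Condition~1, while Condition~2 and Condition~1 give $D(f(x_1,q_1)) < \sumv{B(x_2,q_2)} \leq D(f(x_2,q_2))$, a contradiction. What each approach buys: yours is modular, reuses the earlier lemma verbatim, and makes explicit that (given Condition~1) the tightness Condition~2 is strictly stronger than the completeness condition C2$'$ --- which is precisely the paper's informal remark that tightness implies completeness; the paper's direct route avoids the reconciliation entirely (the same inequality chains simply appear inside its sufficiency argument) at the cost of repeating the case analysis. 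One small observation: in your framework the constant-$f$ case needs no separate treatment, since there both C2$'$ and Condition~2 are vacuous and the equivalences (A) $\Leftrightarrow$ Condition~1 and (B) $\Leftrightarrow$ Condition~2 hold trivially; your final cautionary paragraph is therefore harmless but unnecessary.
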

\fullversion{The proof is similar to Lemma~\ref{lem:filtering-completeness}.}
We can also prove 
that if a filtering instance is tight, then $\sumv{B(x, q)}$ and $D(f(x, q))$ 
must be strictly increasing with respect to $f(x, q)$. 

\commented{
In previous sections, the partitioning scheme divides an object into 
a fixed number of partitions. In order to be more general and cover 
more problems, we relax this constraint and assume that objects may 
differ in the number of 
partitions. In addition, partitions are not necessarily disjoint but 
may \emph{overlap}. In general, let $P$ be a partitioning scheme which 
maps an object $x$ to a set of $m_x$ partitions: 
\begin{align*}
  P(x) = \set{x_i \mid 0 \leq i \leq m_x - 1}. 
\end{align*}
There have been a few 
studies on seeking good partitioning schemes for specific problems. 
E.g., for Hamming distance search, we may resort to equi-width 
partitioning after a random shuffle~\cite{DBLP:conf/vldb/ArasuGK06,DBLP:conf/cvpr/NorouziPF12} 
or a dimension rearrangement~\cite{DBLP:conf/ssdbm/ZhangQWSL13,DBLP:conf/icip/WanTZHL13,DBLP:journals/itiis/MaZXS15}, or variable-length partitioning based on a cost-based model~\cite{DBLP:conf/icde/QinWXWLI18}. 
Although these methods were developed for algorithms based on the 
pigeonhole principle, they can be applied along with the pigeonring 
principle. Hence we may use equi-width partitioning (may overlap, e.g., 
for string edit distance search) or any existing partitioning methods 
as long as any means of utilizing the pigeonring principle is worked 
out (examples are shown in Section~\ref{sec:applications}). We leave 
the optimization for partitioning as future work. 

For the choice of $b_i$ and $n$, to be more general, the number of 
$b_i$'s does not have to match the number of partitions. Here we use 
the term ``box'', as we used in candidate analysis (Section~\ref{sec:candidate-analysis}), 
and propose a boxing model: 
Given $B$, a set of $\boxnum$ real numbers, we call each $b_i \in B$ a 
\emph{box}. Let each box be the output of a function which takes 
as input a set of partitions from $x$ and a set of partitions from $y$: 
\begin{align} \label{eq:box}
  b_i = f_i(S_x, S_y), 
\end{align}
where $S_x \subseteq P(x)$, and $S_y \subseteq P(y)$. 
Likewise, $n$ is the output of a function which takes $(x, y)$ as input 
and returns a real number: 
\begin{align*}
  n = g(x, y). 
\end{align*}
The advantage of this model is that each box 
may be treated as a black box. In this case, we only care about whether 
the sum is within $n$. If so, we are able to apply the pigeonring principle 
for candidate generation. But we note that, for good filtering performance, 
the sum of boxes is supposed to be as close to $n$ as possible. 

}

\section{Case Studies} \label{sec:applications}
In this section, we discuss how to utilize the pigeonring principle to 
improve the pigeonhole principle-based algorithms for $\tau$-selection 
problems. 

The general rules are discussed first. Although the search 
performance depends heavily on the candidate number, a small candidate 
number does not always lead to fast search speed because the filter 
itself also poses overhead. As we will see in Section~\ref{sec:exp-compare-alternative}, 
some methods reduce candidates by expensive operations, and eventually 
spend too much time on filtering. It is difficult to accurately 
estimate the search time, but in general, the filter should be 
light-weight. As a result, to apply the new principle on pigeonhole 
principle-based algorithms, it is crucial that we work out an efficient 
way to compute the value of each box in a chain in order to check if it 
is prefix-viable. Another key is to choose proper chain length $l$ to 
strike a balance between filtering time and candidate number. This will 
be investigated empirically in Section~\ref{sec:exp-chain-length}.

Next we delve into the $\tau$-selection problems listed in 
Section~\ref{sec:tau-selection-problem} and show how to leverage the 
filtering framework and the new principle to solve them. Since our 
methods are devised on top of existing pigeonhole principle-based 
methods, for each problem, we briefly review the existing method and 
describe our filtering instance. 
\confonly{Since our methods are devised on top of existing pigeonhole 
principle-based methods, for each problem, we briefly review the 
existing method and describe our filtering instance~\footnote{Please 
see the extended version of this paper~\cite{DBLP:journals/corr/abs-1804-01614} 
for running examples.}. We include a remark on implementation for 
efficient computation of the box values.}
\fullversion{Since our methods are devised on top of existing pigeonhole 
principle-based methods, for each problem, we briefly review the 
existing method and describe our filtering instance. We include a 
remark on implementation for efficient computation of the box values, 
followed by an example showing the difference from the existing method.} 

\subsection{Hamming Distance Search}
\myparagraph{Existing algorithm}
Our method is based on the \gph algorithm~\cite{DBLP:conf/icde/QinWXWLI18}. 
It divides $d$ dimensions into $m$ disjoint parts, and utilizes 
variable threshold allocation and integer reduction for filtering. 
The threshold of each part is computed by a cost model. 
By the pigeonhole principle, given a sequence of thresholds 
$T = (t_0, \ldots, t_{m-1})$ such that 
$\sumv{T} = \tau - m + 1$, a candidate must have at least one part such 
that the Hamming distance to the query object over this part does not 
exceed $t_i$.
\myparagraph{Filtering instance by the pigeonring principle}
\begin{itemize}
  \item Extract: $d$ dimensions are partitioned into $m$ 
  parts. Each part of an object $x$ is a feature, denoted by $x_i$.
  \item Box: 
  $m$ is tunable.
  $b_i(x, q) = H(x_i, q_i)$. 
  \item Bound: $D(\tau) = \tau$. 
\end{itemize}
Because $\sumv{B(x, q)} = f(x, q)$, by Lemma~\ref{lem:filtering-tightness}, 
the filtering instance is complete and tight. 
Theorem~\ref{thm:pigeonring-integer-reduction} is used for filtering. 

\myparagraph{Remark on implementation} To compute $b_i(x, q)$, we count 
the number of bits set to 1 in $x_i$ bitwise \textsf{XOR} $q_i$. This can 
be done by a built-in popcount operation supported by most modern CPUs. We 
may also exploit the popcount to compute the sum of multiple boxes at a time. 

\fullversion{
\begin{example}
  Suppose $\tau = 3$ and $m = 3$.  
  Consider the following data and query objects partitioned into $m$ parts: 
  \begin{align*}
    x &= \normalfont{\textsf{0000}\;\; \textsf{0011}\;\; \textsf{1111},} \\
    q &= \normalfont{\textsf{0000}\;\; \textsf{1110}\;\; \textsf{0111}.} 
  \end{align*} 
  Suppose we allocate thresholds by $\,T = (0, 1, 0)$. 
  $b_0 = 0$ is the only box such that $b_i \leq t_i$. By 
  the \gph algorithm, $x$ is a candidate. It is a false positive as 
  $f(x, q) = 4$. To utilize the new principle, 
  suppose $l = 2$. $b_0 + b_1 = 0 + 3 = 3$. It is greater than $t_0 + t_1 + l - 1 = 2$. 
  Since there does not exist a prefix-viable chain of length 2, $x$ is filtered. 
\end{example}
}

\commented{
\subsection{$L^p$ Distance Search} \label{sec:app-lp-norm}
Suppose we use $L^2$ (Euclidean) distance. 
Our method is based on the \fnn algorithm~\cite{DBLP:conf/cvpr/HwangHA12}. 
It divides $d$ dimensions into $m$ equi-width disjoint parts. 
The $L^2$ distance between $x$ and $q$ is bounded by: 
$d \cdot \sum_i^{m-1}((\mu(x_i) - \mu(q_i))^2 + (\sigma(x_i) - \sigma(q_i))^2) \leq (f(x, q))^2$. 
$x_i$ ($q_i$) is the $i$-th part of $x$ ($q$). $\mu$ and $\sigma$ 
are the mean and the standard deviation of a part, respectively. 
A candidate must have at least one part, such that 
$d / m \cdot (\mu(x_i) - \mu(q_i))^2 + (\sigma(x_i) - \sigma(q_i))^2$ 
does not exceed $\tau^2 / m$. 
A filtering instances is created as follows. 
\begin{itemize}
  \item Extract:  $d$ dimensions are partitioned into $m$ 
  equi-width parts. Each part of an object is a feature. 
  \item Box: 
  $m$ is tunable. 
  $b_i(x, q) = d/m \cdot ((\mu(x_i) - \mu(q_i))^2 + (\sigma(x_i) - \sigma(q_i))^2)$. 
  \item Bound: $D(\tau) = \tau^2$. 
\end{itemize}
Because $\sumv{B(x, q)} \leq (f(x, q))^2$~\cite{DBLP:conf/cvpr/HwangHA12} 
and $D$ is monotonically increasing over the range of $f$, 
by Lemma~\ref{lem:filtering-completeness}, 
the filtering instance is complete. However, since it violates Condition 
2 of Lemma~\ref{lem:filtering-tightness}, it is not tight. Each box is 
allocated a threshold of $\tau^2 / m$. 
Theorem~\ref{thm:pigeonring-principle-prefix} is used for filtering. 
}

\subsection{Set Similarity Search} \label{sec:app-set-similarity}
Suppose we use overlap similarity. 

\myparagraph{Existing algorithm}
Our method is based on the \pkwise algorithm~\cite{DBLP:conf/sigmod/WangXQWZI16}. 
Although it was developed for local similarity search to identify similar sliding 
windows represented by sets, it is also competitive on set similarity search (as 
shown in Section~\ref{sec:exp-compare-alternative}, it is even faster 
than the algorithms dedicated to set similarity queries). It extends the 
prefix filtering~\cite{DBLP:conf/icde/ChaudhuriGK06,DBLP:conf/www/BayardoMS07}. 
The tokens in each object are sorted by a global order (e.g., increasing 
frequency). A \kwise signature is a combination of $k$ tokens. By the pigeonhole 
principle, a candidate shares with a query object at least one \kwise signature 
in their first few tokens, called prefixes. 

The token universe is partitioned into $(m - 1)$ disjoint parts, each part 
called a class, numbered from $1$ to $m - 1$. Let the $p$-prefix/$p$-suffix 
of an object $x$ be the first/last $p$ tokens of $x$ by the global order. 
If $\size{x \intersect y} \geq \tau$, then 
$\exists k \in [1 \twoldots m - 1]$, such that the $p_x$-prefix of $x$ and 
the $p_y$-prefix of $y$ share at least $k$ tokens (a \kwise signature) 
in class $k$. The prefix length $p_x$ is the smallest integer such that 
$\sum_{k=1}^{m - 1} \max(0, cnt(x, p_x, k) - k + 1) = \size{x} - \tau + 1$. 
$cnt(x, p_x, k)$ is the number of class $k$ tokens in the $p_x$-prefix of $x$. 
$p_y$ is computed in the same way. 

\myparagraph{Filtering instance by the pigeonring principle}
\begin{itemize}
  \item Extract: An object $x$ is divided into $p_x$-prefix and 
  $(\size{x}-p_x)$-suffix. The first two features are $x$ and the $(\size{x}-p_x)$-suffix 
  (denoted by $x_0$). Each of the other $(m - 1)$ features (denoted by $x_i$)
  consists of class $i$ tokens in the $p_x$-prefix. 
  \item Box: 
  $m$ is tunable. 
  $b_0(x, q) = \size{x_0 \intersect q}$, 
  if the last token of the $p_x$-prefix of $x$ precedes the last token of 
  the $p_q$-prefix of $q$ in the global order; or $\size{x \intersect q_0}$, 
  otherwise. 
  For $i \in [1 \twoldots m - 1]$, 
  $b_i(x, q) = \size{x_i \intersect q_i}$. 
  \item Bound: $D(\tau) = \tau$. 
\end{itemize}
As $b_0$ computes the overlap in the suffix and the other boxes computes the 
overlap in the prefix, $\sumv{B(x, q)}$ is exactly the overlap of $x$ and $q$; i.e., 
$\sumv{B(x, q)} = f(x, q)$. By Lemma~\ref{lem:filtering-tightness}, 
the filtering instance is tight and complete. Based on the \pkwise 
algorithm, we use variable threshold allocation and integer reduction: 
\begin{inparaenum} [(1)]
  \item (suffix) When $i = 0$, $t_i = \size{q} - p_q + 1$. 
  \item (prefix, if the number of tokens is adequate to create an $i$-wise signature) 
  When $1 \leq i \leq m - 1$, $t_i = i$, if $cnt(q, p_q, i) \geq i$;
  \item (prefix, the other case) $t_i = cnt(q, p_q, i) + 1$, if $cnt(q, p_q, i) < i$. 
\end{inparaenum}
It can be proved 
$t_0 + t_1 + \ldots + t_{m-1} = \tau + m - 1$. 
Hence Theorem~\ref{thm:pigeonring-integer-reduction} (the $\geq$ case) is used 
for filtering. When $m = 2$ and $l = 1$, our method exactly becomes the prefix 
filtering. 

\myparagraph{Remark on implementation} 
For $i \in [1 \twoldots m - 1]$, $b_i(x, q)$ is computed by set intersection. 
The computation of $b_0(x, q)$ is expensive. As a result, whenever we are to 
compute the value of $b_0(x, q)$, $x$ becomes a candidate and the verification 
is invoked directly. In doing so, the filtering instance becomes not tight but 
gains in efficiency. Orthogonal filtering techniques such as length filter~\cite{DBLP:conf/www/BayardoMS07}, 
position filter~\cite{DBLP:journals/tods/XiaoWLYW11}, and index-level skipping~\cite{DBLP:journals/pvldb/WangQLZC17} 
are available to speed up the search process. 

\fullversion{
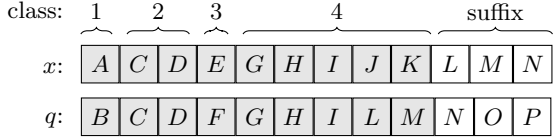
\begin{figure} [t]
  \centering
  \begin{tikzpicture}

  \matrix [minimum size=5mm] (x)
  {
    \node [draw, fill=gray!20] (x-1) {$A$}; &
    \node [draw, fill=gray!20] (x-2) {$C$}; &
    \node [draw, fill=gray!20] (x-3) {$D$}; &
    \node [draw, fill=gray!20] (x-4) {$E$}; &
    \node [draw, fill=gray!20] (x-5) {$G$}; &
    \node [draw, fill=gray!20] (x-6) {$H$}; &
    \node [draw, fill=gray!20] (x-7) {$I$}; &
    \node [draw, fill=gray!20] (x-8) {$J$}; &
    \node [draw, fill=gray!20] (x-9) {$K$}; &
    \node [draw] (x-10) {$L$}; &
    \node [draw] (x-11) {$M$}; &
    \node [draw] (x-12) {$N$}; \\
  };

  \matrix [anchor=west, minimum size=5mm] at ([yshift=-7mm] x.west) (q)
  {
    \node [draw, fill=gray!20] (q-1) {$B$}; &
    \node [draw, fill=gray!20] (q-2) {$C$}; &
    \node [draw, fill=gray!20] (q-3) {$D$}; &
    \node [draw, fill=gray!20] (q-4) {$F$}; &
    \node [draw, fill=gray!20] (q-5) {$G$}; &
    \node [draw, fill=gray!20] (q-6) {$H$}; &
    \node [draw, fill=gray!20] (q-7) {$I$}; &
    \node [draw, fill=gray!20] (q-8) {$L$}; &
    \node [draw, fill=gray!20] (q-9) {$M$}; &
    \node [draw] (q-10) {$N$}; &
    \node [draw] (q-11) {$O$}; &
    \node [draw] (q-12) {$P$}; \\
  };
  
  \node[anchor=east] at ([yshift=-.5mm] x.west) (x-label) {$x$: };
  \node[anchor=east] at ([yshift=-.5mm] q.west) (q-label) {$q$: };
  \node[anchor=east] at ([yshift=7.7mm] x-label.east) {class: };

  \draw[decorate, decoration={brace, raise=4mm}] (x-1.west) -- ([xshift=-1mm] x-1.east) node [yshift=7mm, midway] {1};
  \draw[decorate, decoration={brace, raise=4mm}] ([xshift=1mm] x-2.west) -- ([xshift=-1mm] x-3.east) node [yshift=7mm, midway] {2};
  \draw[decorate, decoration={brace, raise=4mm}] ([xshift=1mm] x-4.west) -- ([xshift=-1mm] x-4.east) node [yshift=7mm, midway] {3};
  \draw[decorate, decoration={brace, raise=4mm}] ([xshift=1mm] x-5.west) -- (x-9.east) node [yshift=7mm, midway] {4};
  \draw[decorate, decoration={brace, raise=4mm}] ([xshift=1mm] x-10.west) -- (x-12.east) node [yshift=7mm, midway] {suffix};
  
\end{tikzpicture}

  \caption{Set similarity search example.}
  \label{fig:sed-example}  
\end{figure}

\begin{example}
  Suppose $\tau = 9$ and $m = 5$. 
  Consider the two objects in Figure~\ref{fig:sed-example}. 
  Both have 12 tokens, each denoted by a capital letter. We sort them by the 
  alphabetical order. Suppose the token universe is partitioned into $m - 1 = 4$ 
  parts: $A - B$, $C - D$, $E - F$, and $G - P$. By the above algorithm 
  description, the prefix lengths are both 9. In Figure~\ref{fig:sed-example}, 
  prefix tokens are placed in shaded cells and their classes are labeled. 
  Thresholds are allocated to the $m$ boxes: $T = (4, 1, 2, 2, 4)$. $t_0$ is 
  for the suffix and each of the other $t_i$ is for class $i$. 
  $b_2 = 2$ is the only box such that $b_i \geq t_i$ (note the results are 
  those with $f(x, q) \geq \tau$). By the \pkwise algorithm, $x$ is a candidate. 
  It is a false positive as $f(x, q) = 8$.
  To utilize the new principle, suppose $l = 2$. 
  $b_2 + b_3 = 2 + 0 = 2$. It is smaller than 
  $t_2 + t_3 - l + 1 = 3$. Since there does not exist a prefix-viable chain 
  of length 2, $x$ is filtered. 
\end{example}
}

\subsection{String Edit Distance Search} \label{sec:app-string-edit-distance}
\myparagraph{Existing algorithm}
Our method is based on the \pivotal algorithm~\cite{DBLP:conf/sigmod/DengLF14} 
which utilizes prefix filtering on \qgrams. It sorts the \qgrams of each string 
by a global order (e.g., increasing frequency) and picks the first $(\kappa\tau + 1)$ 
(can be reduced by location-based filter~\cite{DBLP:journals/pvldb/XiaoWL08}) 
\qgrams, called prefix. Let $\kappa$ denote the \qgram length. Let $P_x$ and $P_q$ 
denote $x$'s and $q$'s prefixes, respectively. If $ed(x, q) \leq \tau$, then: 
\begin{itemize} 
  \item Pivotal prefix filter (based on the pigeonhole principle): 
  if the last token in $P_x$ precedes the last token in $P_q$ in the global 
  order, any $(\tau + 1)$ disjoint \qgrams (called pivotal \qgrams) in $P_x$ 
  must have at least one exact match in $P_q$; otherwise, any $(\tau + 1)$ 
  pivotal \qgrams in $P_q$ must have at least one exact match in $P_x$. 
  \item Alignment filter: the sum of the $(\tau + 1)$ minimum edit distances 
  from each pivotal \qgram to a substring whose starting position differs by 
  no more than $\tau$ in the other string must be within $\tau$. 
\end{itemize}

\myparagraph{Filtering instance by the pigeonring principle}
\begin{itemize}
  \item Extract: We sort the \qgrams of each string by a global order and take the 
  first ($\kappa\tau + 1$) \qgrams as features. Another feature is the whole string. 
  \item Box: $m = \tau + 1$. 
  If the last feature \qgram of $x$ precedes the last feature \qgram of $q$ in the 
  global order, 
  $b_i(x, q) = \min\set{ed(x_i, q[u \twoldots v]) \mid 0 \leq v - u \leq \kappa + \tau - 1 \conj u, v \in [\max(0, x_i.p - \tau) \twoldots \min(x_i.p + \kappa - 1 + \tau, \size{q} - 1)]}$; 
  otherwise, 
  $b_i(x, q) = \min\set{ed(q_i, x[u \twoldots v]) \mid 0 \leq v - u \leq \kappa + \tau - 1 \conj u, v \in [\max(0, q_i.p - \tau) \twoldots \min(q_i.p + \kappa - 1 + \tau, \size{x} - 1)]}$. 
  $x_i$ is the $i$-th pivotal \qgram of $x$. $x_i.p$ is its starting position in $x$. 
  $x[u \twoldots v]$ is $x$'s substring from positions $u$ to $v$. 
  The notations with respect to $q$ are defined analogously. 
  \item Bound: $D(\tau) = \tau$. 
\end{itemize}
$\sumv{B(x, q)}$ is the sum of minimum edit distances in the alignment filter. 
As shown in \cite{DBLP:conf/sigmod/DengLF14}, $\sumv{B(x, q)} \leq f(x, q)$. 
By Lemma~\ref{lem:filtering-completeness}, 
the filtering instance is complete. It is not tight due to violation of 
Condition 2 of Lemma~\ref{lem:filtering-tightness}. 
We use Theorem~\ref{thm:pigeonring-principle-prefix} for filtering. The 
first box of a prefix-viable chain must be zero, i.e., an exact match. 
By the pivotal prefix filter, for the first box of a chain, we only consider if 
it matches a \qgram in the prefix of the other side.

\myparagraph{Remark on implementation} 
The alignment filter is essentially a special case ($l = m$) of the 
basic form of the pigeonring principle (Theorem~\ref{thm:pigeonring-principle}).
The time complexity of computing edit distance for a 
pivotal \qgram is $O(\kappa^2 + \kappa\tau)$. Seeing its expense, rather 
than computing the exact values of the edit distances for pivotal \qgrams, we 
compute their lower bounds by content-based filter~\cite{DBLP:journals/pvldb/XiaoWL08}: 
Given two strings $x$ and $y$ and a threshold $t$, $ed(x, y) \leq t$ only if 
$H(h_x, h_y) \leq 2t$. $H(\cdot, \cdot)$ is the Hamming distance. 
$h_x$ ($h_y$) is a bit vector hashed from 
$x$ ($y$): If $x$ has a symbol $\sigma$, the corresponding bit is $1$; 
otherwise, the bit is $0$. In doing so, we may also limit the length of 
the substring $x[u \twoldots v]$ ($q[u \twoldots v]$) to $\kappa$ and the 
completeness still holds. By a fast popcount algorithm with constant 
number of arithmetic operations, the complexity is reduced to $O(\kappa + \tau)$. 

\fullversion{
\begin{example}
  Suppose $\tau = 2$. Consider the following data and query 
  strings: 
  \begin{align*}
    x &= \str{llabcdefkk}, \\
    q &= \str{llabghijkk}.
  \end{align*}
  Suppose $\kappa = 2$ and \qgrams are sorted by the lexicographical order. 
  Their first $(\kappa\tau + 1)$ \qgrams are: 
  \begin{align*}
    P_x &= \set{\str{ab}, \str{bc}, \str{cd}, \str{de}, \str{ef}}, \\
    P_q &= \set{\str{ab}, \str{bg}, \str{gh}, \str{hi}, \str{ij}}. 
  \end{align*} 
  Because \str{ef} precedes \str{ij}, we use $m = \tau + 1 = 3$ pivotal 
  \qgrams in $x$: \str{ab}, \str{cd}, and \str{ef}. \str{ab} is the only 
  exact match to the \qgrams in $P_q$. So $x$ passes the pivotal prefix 
  filter based on the pigeonhole principle. It is a false positive as 
  $f(x, q) = 4$. To prune $x$, the alignment filter of the \pivotal 
  algorithm has to compute the minimum edit distances from \str{cd} to 
  the substrings in \str{abghij} and from \str{ef} to the substrings in 
  \str{ghijkk} and sum them up. To utilize the new principle, suppose $l = 2$. 
  $b_0 = 0$ due to the exact match. We use bit vectors to compute a lower 
  bound of $b_1$. Let the $i$-th bit indicate the $i$-th letter in the 
  alphabet. The bit vector for \str{cd} is \normalfont{\textsf{001100000000}}. 
  The substrings to be compared are \str{ab}, \str{bg}, \str{gh}, \str{hi}, 
  and \str{ij}. All their bit vectors (e.g., \normalfont{\textsf{110000000000}} 
  for \str{ab}) differ by 4 bits from \str{cd}. This means $b_1$ is at 
  least $4/2 = 2$. Thus, $b_0 + b_1 \geq 0 + 2 = 2$. It is greater than 
  $l \cdot \tau / m = 4/3$. $x$ is filtered. Compared with the 
  \pivotal algorithm, we not only save the cost of computing edit 
  distances for pivotal \qgrams, but also skip the pivotal \qgram 
  \str{ef}. 
\end{example}
}

\subsection{Graph Edit Distance Search} \label{sec:app-graph-edit-distance}
\myparagraph{Existing algorithm}
Our method is based on the \pars algorithm~\cite{DBLP:journals/vldb/ZhaoXLZW18}. 
It divides each data graph into $(\tau + 1)$ disjoint subgraphs (may contain 
half-edges). By the pigeonhole principle, a candidate has at least one subgraph 
which is subgraph isomorphic (including half-edges) to the query graph.

\myparagraph{Filtering instance by the pigeonring principle}
\begin{itemize}
  \item Extract: We partition a graph to $(\tau + 1)$ disjoint subgraphs as 
  features. Another feature is the whole graph. 
  \item Box: 
  $m = \tau + 1$. 
  $b_i(x, q)$ is the minimum graph edit distance from a feature $x_i$ to any 
  subgraph of $q$: $b_i(x, q) = \min\set{ged(x_i, q') \mid q' \issubgraph q}$. 
  $\issubgraph$ denotes a subgraph relation. 
  \item Bound: $D(\tau) = \tau$. 
\end{itemize}
As shown in \cite{DBLP:journals/vldb/ZhaoXLZW18}, $\sumv{B(x, q)} \leq f(x, q)$. 
By Lemma~\ref{lem:filtering-completeness}, the filtering instance is complete. 
It is not tight due to violation of Condition 2 of Lemma~\ref{lem:filtering-tightness}.
We use Theorem~\ref{thm:pigeonring-principle-prefix} for filtering. The first 
box of a prefix-viable chain must be zero, meaning $x_i$ is a subgraph of $q$. 

\myparagraph{Remark on implementation} 
We check if $ged(x_i, q') \leq t$ using its necessary condition by 
a subgraph isomorphism test to $q$ from the deletion neighborhood~\cite{DBLP:journals/cacm/MorF82,DBLP:conf/sigmod/WangXLZ09} 
of $x_i$ by $\floor{t}$ operations (deleting an edge or an isolated vertex, 
or changing a vertex label to wildcard), so as to circumvent the 
expensive subgraph enumeration and edit distance computation. 
$ged(x_i, q') \leq t$ only if a subgraph isomorphism is found. 

\fullversion{
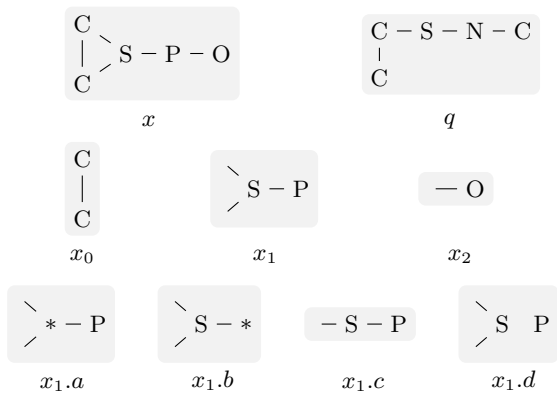
\begin{figure} [t]
  \centering
  \tikzstyle{background}=[rectangle,
                        fill=gray!10,
                        inner sep=0cm,
                        rounded corners=1mm]
\begin{tikzpicture}

  \begin{scope}
    \node (x-S) {S};
    \node (x-C1) [above = 0.4cm, left = 0.35cm] at (x-S) {C};
    \node (x-C2) [below = 0.4cm, left = 0.35cm] at (x-S) {C};
    \node (x-P) [right = 0.4cm] at (x-S) {P};
    \node (x-O) [right = 0.4cm] at (x-P) {O}; 

    \draw (x-S) -- (x-C1);
    \draw (x-S) -- (x-C2);
    \draw (x-C1) -- (x-C2);
    \draw (x-S) -- (x-P);
    \draw (x-P) -- (x-O);
    
    \begin{pgfonlayer}{background}
      \node [background, fit=(x-C1) (x-C2) (x-O)] {};
    \end{pgfonlayer}
  
    \node [xshift = 0.3cm, yshift = -0.9cm] at (x-S) {$x$};
  \end{scope}

  \begin{scope} [xshift = 4cm, yshift = 0.3cm]
    \node (q-S) {S};
    \node (q-C1) [left = 0.4cm] at (q-S) {C};
    \node (q-C2) [below = 0.4cm] at (q-C1) {C};
    \node (q-N) [right = 0.4cm] at (q-S) {N};
    \node (q-C3) [right = 0.4cm] at (q-N) {C}; 

    \draw (q-S) -- (q-C1);
    \draw (q-C1) -- (q-C2);
    \draw (q-S) -- (q-N);
    \draw (q-N) -- (q-C3);

    \begin{pgfonlayer}{background}
      \node [background, fit=(q-C1) (q-C2) (q-C3)] {};
    \end{pgfonlayer}
    
    \node [xshift = 0.3cm, yshift = -1.2cm] at (q-S) {$q$};
  \end{scope}
  
  \begin{scope} [yshift = -1.8cm]
    \node (x0-S) {};
    \node (x0-C1) [above = 0.4cm, left = 0.35cm] at (x0-S) {C};
    \node (x0-C2) [below = 0.4cm, left = 0.35cm] at (x0-S) {C};

    \draw (x0-C1) -- (x0-C2); 
    
    \begin{pgfonlayer}{background}
      \node [background, fit=(x0-C1) (x0-C2)] {};
    \end{pgfonlayer}
    
    \node [yshift = -0.5cm] at (x0-C2) {$x_0$};
  \end{scope}
  
  \begin{scope} [xshift = 1.7cm, yshift = -1.8cm]
    \node (x1-S) {S}; 
    \node (x1-C1) [above = 0.4cm, left = 0.35cm] at (x1-S) {};
    \node (x1-C2) [below = 0.4cm, left = 0.35cm] at (x1-S) {};
    \node (x1-P) [right = 0.4cm] at (x1-S) {P};

    \draw (x1-S) -- (x1-C1);
    \draw (x1-S) -- (x1-C2);
    \draw (x1-S) -- (x1-P);

    \begin{pgfonlayer}{background}
      \node [background, fit=(x1-C1) (x1-C2) (x1-P)] {};
    \end{pgfonlayer}
    
    \node [xshift = 0.15cm, yshift = -0.9cm] at (x1-S) {$x_1$};
  \end{scope}
  
  \begin{scope} [xshift = 4cm, yshift = -1.8cm]
    \node (x2-P) {}; 
    \node (x2-O) [right = 0.4cm] at (x2-P) {O};

    \draw (x2-P) -- (x2-O);

    \begin{pgfonlayer}{background}
      \node [background, fit=(x2-P) (x2-O)] {};
    \end{pgfonlayer}
    
    \node [xshift = -0.2cm, yshift = -0.9cm] at (x2-O) {$x_2$};
  \end{scope}

  \begin{scope} [xshift = -1cm, yshift = -3.6cm]
    \node (x1a-S) {$\ast$}; 
    \node (x1a-C1) [above = 0.4cm, left = 0.35cm] at (x1a-S) {};
    \node (x1a-C2) [below = 0.4cm, left = 0.35cm] at (x1a-S) {};
    \node (x1a-P) [right = 0.4cm] at (x1a-S) {P};

    \draw (x1a-S) -- (x1a-C1);
    \draw (x1a-S) -- (x1a-C2);
    \draw (x1a-S) -- (x1a-P);

    \begin{pgfonlayer}{background}
      \node [background, fit=(x1a-C1) (x1a-C2) (x1a-P)] {};
    \end{pgfonlayer}
    
    \node [xshift = 0.15cm, yshift = -0.8cm] at (x1a-S) {$x_1.a$};
  \end{scope}
  
  \begin{scope} [xshift = 1cm, yshift = -3.6cm]
    \node (x1b-S) {S}; 
    \node (x1b-C1) [above = 0.4cm, left = 0.35cm] at (x1b-S) {};
    \node (x1b-C2) [below = 0.4cm, left = 0.35cm] at (x1b-S) {};
    \node (x1b-P) [right = 0.4cm] at (x1b-S) {$\ast$};

    \draw (x1b-S) -- (x1b-C1);
    \draw (x1b-S) -- (x1b-C2);
    \draw (x1b-S) -- (x1b-P);

    \begin{pgfonlayer}{background}
      \node [background, fit=(x1b-C1) (x1b-C2) (x1b-P)] {};
    \end{pgfonlayer}
    
    \node [xshift = 0.15cm, yshift = -0.75cm] at (x1b-S) {$x_1.b$};
  \end{scope}
  
  \begin{scope} [xshift = 3cm, yshift = -3.6cm]
    \node (x1c-S) {S}; 
    \node (x1c-C) [left = 0.4cm] at (x1c-S) {};
    \node (x1c-P) [right = 0.4cm] at (x1c-S) {P};

    \draw (x1c-S) -- (x1c-C);
    \draw (x1c-S) -- (x1c-P);

    \begin{pgfonlayer}{background}
      \node [background, fit=(x1c-C) (x1c-P)] {};
    \end{pgfonlayer}
    
    \node [xshift = 0.15cm, yshift = -0.8cm] at (x1c-S) {$x_1.c$};
  \end{scope} 
  
  \begin{scope} [xshift = 5cm, yshift = -3.6cm]
    \node (x1d-S) {S}; 
    \node (x1d-C1) [above = 0.4cm, left = 0.35cm] at (x1d-S) {};
    \node (x1d-C2) [below = 0.4cm, left = 0.35cm] at (x1d-S) {};
    \node (x1d-P) [right = 0.3cm] at (x1d-S) {P};

    \draw (x1d-S) -- (x1d-C1);
    \draw (x1d-S) -- (x1d-C2);

    \begin{pgfonlayer}{background}
      \node [background, fit=(x1d-C1) (x1d-C2) (x1d-P)] {};
    \end{pgfonlayer}
    
    \node [xshift = 0.15cm, yshift = -0.75cm] at (x1d-S) {$x_1.d$};
  \end{scope}  
  
\end{tikzpicture}

  \caption{Graph edit distance search example.}
  \label{fig:ged-example}  
\end{figure}

\begin{example}
  Suppose $\tau = 2$. Consider the two graphs for molecules in Figure~\ref{fig:ged-example}, 
  with vertex labels for atoms and edges for bonds. $x$ is partitioned 
  into $m = \tau + 1 = 3$ subgraphs. Only $x_0$ is subgraph isomorphic to $q$. 
  By the \pars algorithm, $x$ is a candidate. It is a false positive as $f(x, q) = 3$. 
  To utilize the new principle, suppose $l = 2$. $b_0 + b_1$ must be within 
  $l \cdot \tau / m = 4/3$ to make $x$ a candidate. $b_0 = 0$ due to the subgraph 
  isomorphism. To compute $b_1$, we generate the deletion neighborhood of $x_1$ 
  by $\floor{l \cdot \tau / m - b_0} = 1$ operation, as shown from $x_1.a$ to 
  $x_1.d$. $\ast$ denotes a wildcard. Since none of them is subgraph isomorphic 
  to $q$, $b_1$ is at least 2. $b_0 + b_1 > l \cdot \tau / m$. $x$ is filtered.
\end{example}
}



\section{Indexing, Candidate Generation, and Cost Analysis} \label{sec:index}
Since a candidate always yields a prefix-viable chain of length $l$, 
to find candidates, we begin with searching for a viable single box 
(we call it the first step of candidate generation). 
This can be done efficiently with an index. By the case 
studies in Section~\ref{sec:applications}, this step is the exactly 
same as the candidate generation of existing methods~\cite{DBLP:conf/icde/QinWXWLI18,DBLP:conf/sigmod/WangXQWZI16,DBLP:conf/sigmod/DengLF14,DBLP:journals/vldb/ZhaoXLZW18}. 
Hence we use the \emph{same indexes} as these methods do to find 
viable single boxes. We refer readers to these studies for details. 
\forreview{Due to the page limitation, we omit the details 
for this step but refer readers to the above studies. }

With a viable single box, we check if the chains of lengths 
$2, \ldots, l$ starting from this box are all viable (we 
call it the second step of candidate generation). This can 
be done on the fly. An object is a candidate only if it 
passes this check. A speedup is that if the check of a chain 
$c_i^l$ fails at some length, say $l'$, i.e., $c_i^{l'}$ is 
not prefix-viable, then we do not need to check the chain 
starting from any position in $[i \twoldots i + l' - 1]$, 
because by Corollary~\ref{cor:joint-chain-prefix}, none of 
them is prefix-viable. 

To find candidates by the pigeonring principle, only minor 
modifications are needed for the second step. E.g., for 
Hamming distance search we only need to add a few bit operations. 
Because the prefix-viable check is done incrementally, we believe 
that optimizations are available for indexing and candidate 
generation; e.g., a specialized index to share computation. 
Nonetheless, in order to evaluate the effectiveness of the 
pigeonring principle itself, we choose not to apply such 
optimizations in this paper and leave them as future work. 

We analyze the search cost. Since the cost of feature 
extraction is irrespective of which principle is used, we only 
consider candidate generation and verification. For the pigeonhole 
principle-based method, the cost $C_{PH} = C_{C1} + \size{A_{PH}} \cdot c_V$.  
$C_{C1}$, $\size{A_{PH}}$, and $c_V$ are the cost of (the first step 
of) candidate generation, the number of candidates by the pigeonhole 
principle, and the average cost of verifying a candidate, respectively. 
For the pigeonring principle-based method, the cost 
$C_{PR} = C_{C1} + C_{C2} + \size{A_{PR}} \cdot c_V$.
$C_{C2}$ and $\size{A_{PR}}$ are the cost of the second step 
of candidate generation and the number of candidates by the pigeonring 
principle, respectively. 
$C_{C2}$ is upper-bounded by $(l - 1) \cdot \size{V} \cdot c_B$. 
$\size{V}$ is the number of viable boxes identified in the 
first step. $c_B$ is the average cost of checking a box 
in the second step. 
Since $\size{A_{PR}} \leq \size{A_{PH}}$, there is a tradeoff between 
$C_{C2}$ and $\size{A_{PR}} \cdot c_V$, which we are going to evaluate 
through experiments. 

\commented{
By the pigeonhole principle, a data object becomes a candidate 
on condition that it produces a prefix-viable chain with the query. 
It is a non-trivial task to identify the objects satisfying this 
condition in a large dataset. In this section, we propose an 
indexing framework for candidate generation. 

Given a chain length $l$, we need to iterate through all the prefixes 
whose lengths are between $1$ and $l$ to determine if a chain is 
prefix-viable. This resembles the process to tell if a string is a 
prefix of another. In order to share computation between data objects, 
we construct a trie-based index for the purpose of prefix-viable check. 
Since there are $m$ chains of length $l$, each representing a chain 
$c_i^l$ ($i \in [0 \twoldots m - 1]$), the index is a forest composed of $m$ trees. 

Next we introduce the tree for the chain $c_i^l$. 
There are $(l + 1)$ levels in the tree. The first level is the root. 
Data object IDs are stored on the $(l + 1)$-th level. Each edge from 
level $j$ to $(j + 1)$ has a label to indicate the input to $b_{i + j - 1}$ 
from the data object side (i.e., $X_i$ in Equation~\ref{eq:box}).
All the descendants of a node on level $j$ 
($j \geq 2$) have common inputs to $b_i, \ldots, b_{i + j - 1}$, 
hence to share computation when checking the prefixes of $c_i^l$. 

We use Theorem~\ref{thm:pigeonring-principle-prefix} for filtering. The 
following procedure can be easily extended to 
the cases of Theorems~\ref{thm:pigeonhole-threhshold-allocation} and~\ref{thm:pigeonhole-integer-reduction}. Given a query, we start 
from the root to generate candidates. A node is called an \emph{active 
node} if the path from the root to this node yields a prefix-viable 
chain with the query. Then candidate generation is to find all active 
nodes on the $(l + 1)$-th level. Initially, the root is the active 
node. Then in each step, we go from level $j$ to $j + 1$, and mark a 
node on level $(j + 1)$ as active if it meets the following conditions: 
\begin{inparaenum} [(1)]
  \item its parent is an active node in the previous step, and 
  \item the path from the root to this node is through edge labels 
  such that $c_i^{j} \, \theta \, j \cdot \tau' / m$. 
\end{inparaenum}
Since the first condition demands $c_i^{j-1}$ be prefix-viable, 
the second condition is equivalent to $c_i^{j-1} + b_{i + j - 1} \, \theta \, j \cdot \tau' / m$. 
The value of $c_i^{j-1}$ is recorded when we reach its parent in the 
previous step. The value of $b_{i + j - 1}$ is computed using the 
query's input to $b_{i + j - 1}$ (i.e., $Q_i$ in Equation~\ref{eq:box}) 
and the incoming edge label to this node. When we reach the $(l + 1)$-th 
level, the underlying object IDs of the current active nodes are 
propagated as candidates. The above procedure is invoked on all the 
$m$ trees in the forest, and $m$ sets of candidates are generated. 
Their union is the candidate set for verification.

\begin{example}
  Recall Example 1. Let $b_i = H(x_i, q_i)$ and $h = f$. It can 
  be seen $\tau' = \tau$ and $\theta$ is $\leq$. 
  Suppose $l = 2$. Figure~\ref{fig:index-tree} shows the index 
  for candidate generation. Given a query $q$, candidates are 
  generated by traversing the five trees. We take the tree for 
  $c_4^2$ as an example. Node 1 is the initial active node. 
  We go to level 2 by comparing the edge labels and $q$. For 
  node 2, $c_4^1 = b_4 = H(\textup{\textsf{00}}, \textup{\textsf{11}}) = 2 > 1 \cdot \tau / m$. 
  For node 4, $c_4^1 = b_4 = H(\textup{\textsf{10}}, \textup{\textsf{11}}) = 1 \leq 1 \cdot \tau / m$. 
  Node 2 is the only active node on level 2. Then we go to 
  level 3. 
  For node 5, $c_4^2 = c_4^1 + b_0 = 1 + H(\textup{\textsf{00}}, \textup{\textsf{00}}) = 1 \leq 2 \cdot \tau / m$. 
  For node 6, $c_4^2 = c_4^1 + b_0 = 1 + H(\textup{\textsf{01}}, \textup{\textsf{00}}) = 2 \leq 2 \cdot \tau / m$. 
  For node 7, $c_4^2 = c_4^1 + b_0 = 1 + H(\textup{\textsf{11}}, \textup{\textsf{00}}) = 3 > 2 \cdot \tau / m$. 
  Nodes 5 and 6 are active nodes on level 3. Since the $(l + 1)$-th 
  level is reached, the underlying objects $x^2$ and $x^3$ becomes
  candidates obtained from this tree. The traversals of trees $c_0^2$ 
  and $c_2^2$ return $x^2$ as candidate. The traversal of tree $c_3^2$ 
  returns $x^3$ as the candidate, while no candidates are obtained from 
  tree $c_1^2$. The union of these outputs is $x^2$ and $x^3$. They are 
  the candidates to be verified. 
\end{example}

\begin{figure} [t]
  \centering
  \begin{tikzpicture}
  
  \begin{scope} [minimum size=4mm, inner sep=0pt,
   level distance = 8mm,
   level 1/.style = {sibling distance = 12mm},
   level 2/.style = {sibling distance = 8mm}]
    \node [circle, draw] (n-1) {1}
      child {
        node [circle, draw] (n-2) {2}
        child {
          node [circle, draw] (n-3) {3}
          edge from parent node [left] {\textsf{01}}
        }
        edge from parent node [left] {\textsf{00}}
      }
      child {
        node [circle, draw] (n-4) {4}
        child {
          node [circle, draw] (n-5) {5}
          edge from parent node [left] {\textsf{01}}
        }
        edge from parent node [left] {\textsf{01}}
      }
      child{
        node [circle, draw] (n-6) {6}
        child {
          node [circle, draw] (n-7) {7}
          edge from parent node [left] {\textsf{01}}
        }
        child {
          node [circle, draw] (n-8) {8}
          edge from parent node [right] {\textsf{11}}
        }
        edge from parent node [right] {\textsf{11}}
      };

    \node [rectangle, draw, below = 0.5cm] at (n-3) (s-3) {$x^2$};
    \draw[->, dashed] (n-3) -- (s-3);
    \node [rectangle, draw, below = 0.5cm] at (n-5) (s-5) {$x^3$};
    \draw[->, dashed] (n-5) -- (s-5);    
    \node [rectangle, draw, below = 0.5cm] at (n-7) (s-7) {$x^4$};
    \draw[->, dashed] (n-7) -- (s-7);
    \node [rectangle, draw, below = 0.5cm] at (n-8) (s-8) {$x^1$};
    \draw[->, dashed] (n-8) -- (s-8);
    
    \node [above = 0.3cm] {$c_0^2$};
  \end{scope}
  
  \begin{scope} [minimum size=4mm, inner sep=0pt,
   level distance = 8mm,
   level 1/.style = {sibling distance = 14mm},
   level 2/.style = {sibling distance = 8mm}, shift={(4.2,0)}]
    \node [circle, draw] (n-1) {1}
      child{
        node [circle, draw] (n-2) {2}
        child {
          node [circle, draw] (n-3) {3}
          edge from parent node [left] {$\textsf{01}$}
        }
        child {
          node [circle, draw] (n-4) {4}
          edge from parent node [right] {$\textsf{10}$}
        }
        edge from parent node [left] {$\textsf{01}$}
      }    
      child {
        node [circle, draw] (n-5) {5}
        child {
          node [circle, draw] (n-6) {6}
          edge from parent node [right] {$\textsf{10}$}
        }
        edge from parent node [right] {$\textsf{11}$}
      };

    \node [rectangle, draw, below = 0.5cm] at (n-3) (s-3) {$x^2$};
    \draw[->, dashed] (n-3) -- (s-3);
    \node [rectangle, draw, text width = 0.3cm, text centered, inner sep=2pt, below = 0.5cm] at (n-4) (s-4) {$x^3$ \\ $x^4$};
    \draw[->, dashed] (n-4) -- (s-4);
    \node [rectangle, draw, below = 0.5cm] at (n-6) (s-6) {$x^1$};
    \draw[->, dashed] (n-6) -- (s-6);
    
    \node [above = 0.3cm] {$c_1^2$};    
  \end{scope}  

  \begin{scope} [minimum size=4mm, inner sep=0pt,
   level distance = 8mm,
   level 1/.style = {sibling distance = 14mm},
   level 2/.style = {sibling distance = 8mm}, shift={(-1.4,-4)}]
    \node [circle, draw] (n-1) {1}
      child {
        node [circle, draw] (n-2) {2}
        child {
          node [circle, draw] (n-3) {3}
          edge from parent node [left] {\textsf{11}}
        }
        edge from parent node [left] {\textsf{01}}
      }    
      child{
        node [circle, draw] (n-4) {4}
        child {
          node [circle, draw] (n-5) {5}
          edge from parent node [left] {\textsf{01}}
        }
        child {
          node [circle, draw] (n-6) {6}
          edge from parent node [right] {\textsf{11}}
        }
        edge from parent node [right] {\textsf{10}}
      };

    \node [rectangle, draw, below = 0.5cm] at (n-3) (s-3) {$x^2$};
    \draw[->, dashed] (n-3) -- (s-3);
    \node [rectangle, draw, below = 0.5cm] at (n-5) (s-5) {$x^3$};
    \draw[->, dashed] (n-5) -- (s-5);    
    \node [rectangle, draw, text width = 0.3cm, text centered, inner sep=2pt, below = 0.5cm] at (n-6) (s-6) {$x_1$ \\ $x_4$};
    \draw[->, dashed] (n-6) -- (s-6);
    
    \node [above = 0.3cm] {$c_2^2$};    
  \end{scope}

  \begin{scope} [minimum size=4mm, inner sep=0pt,
   level distance = 8mm,
   level 1/.style = {sibling distance = 14mm},
   level 2/.style = {sibling distance = 8mm}, shift={(1.4,-4)}]
    \node [circle, draw] (n-1) {1}
      child {
        node [circle, draw] (n-2) {2}
        child {
          node [circle, draw] (n-3) {3}
          edge from parent node [left] {\textsf{10}}
        }
        edge from parent node [left] {\textsf{01}}
      }
      child {
        node [circle, draw] (n-4) {4}
        child {
          node [circle, draw] (n-5) {5}
          edge from parent node [left] {\textsf{00}}
        }
        child {
          node [circle, draw] (n-6) {6}
          edge from parent node [right] {\textsf{10}}
        }        
        edge from parent node [left] {\textsf{11}}
      }; 

    \node [rectangle, draw, below = 0.5cm] at (n-3) (s-3) {$x^3$};
    \draw[->, dashed] (n-3) -- (s-3);
    \node [rectangle, draw, below = 0.5cm] at (n-5) (s-5) {$x^4$};
    \draw[->, dashed] (n-5) -- (s-5);
    \node [rectangle, draw, text width = 0.3cm, text centered, inner sep=2pt, below = 0.5cm] at (n-6) (s-6) {$x_1$ \\ $x_2$};
    \draw[->, dashed] (n-6) -- (s-6);
    
    \node [above = 0.3cm] {$c_3^2$};    
  \end{scope}
  
  \begin{scope} [minimum size=4mm, inner sep=0pt,
   level distance = 8mm,
   level 1/.style = {sibling distance = 14mm},
   level 2/.style = {sibling distance = 8mm}, shift={(4.2,-4)}]
    \node [circle, draw] (n-1) {1}
      child {
        node [circle, draw] (n-2) {2}
        child {
          node [circle, draw] (n-3) {3}
          edge from parent node [left] {\textsf{11}}
        }
        edge from parent node [left] {\textsf{00}}
      }    
      child{
        node [circle, draw] (n-4) {4}
        child {
          node [circle, draw] (n-5) {5}
          edge from parent node [left] {\textsf{00}}
        }
        child {
          node [circle, draw] (n-6) {6}
          edge from parent node [left] {\textsf{01}}
        }
        child {
          node [circle, draw] (n-7) {7}
          edge from parent node [right] {\textsf{11}}
        }
        edge from parent node [right] {\textsf{10}}
      };

    \node [rectangle, draw, below = 0.5cm] at (n-3) (s-3) {$x^4$};
    \draw[->, dashed] (n-3) -- (s-3);
    \node [rectangle, draw, below = 0.5cm] at (n-5) (s-5) {$x^2$};
    \draw[->, dashed] (n-5) -- (s-5);
    \node [rectangle, draw, below = 0.5cm] at (n-6) (s-6) {$x^3$};
    \draw[->, dashed] (n-6) -- (s-6);
    \node [rectangle, draw, below = 0.5cm] at (n-7) (s-7) {$x^1$};
    \draw[->, dashed] (n-7) -- (s-7);    
    
    \node [above = 0.3cm] {$c_4^2$};    
  \end{scope}

\end{tikzpicture}

  \caption{Trie-based Index}
  \label{fig:index-tree}  
\end{figure}

Optimizations are available on top 
of the framework for specific similarity search problems. Recall that 
in candidate generation, given an active node we need to find all its 
child nodes which are active. This is essentially to invoke a range 
query such that $b_{i + j - 1} \leq j \cdot \tau' / m - c_i^{j-1}$. 
Each node can be equipped with a specific index to efficiently answer 
the range query. E.g., for string similarity search, we may use 
neighborhood generation~\cite{DBLP:conf/sigmod/WangXLZ09} to find 
\qgrams whose edit distance are within a small range to a \qchunk. 
}


\section{Experiments} \label{sec:exp}

\subsection{Experiment Setup} \label{sec:exp-setup}
We select eight datasets, two for each of the four $\tau$-selection problems listed 
in Section~\ref{sec:preliminaries}. 
1,000 objects are randomly sampled from each dataset as queries. 
\begin{itemize}
  \item \textbf{GIST} is a set of 80 million GIST descriptors for tiny images~\cite{DBLP:journals/pami/TorralbaFF08}. 
  We convert them to 256-dimensional binary vectors by spectral hashing~\cite{DBLP:conf/nips/WeissTF08}. 
  \item \textbf{SIFT} is a set of 1 billion SIFT features from the BIGANN dataset~\cite{DBLP:journals/corr/abs-1102-3828}. 
  We follow the method in \cite{DBLP:conf/cvpr/NorouziPF12} to convert them to 
  512-dimensional binary vectors.
  \item \textbf{Enron} is a set of 517,386 emails by employees of the Enron 
  Corporation, each tokenized by white space and punctuation. The average 
  number of tokens is 142.  
  \item \textbf{DBLP} is a set of 860,751 bibliography records from the DBLP website. 
  Each object is a concatenation of author name(s) and a publication title, 
  tokenized by white space and punctuation. The average number of tokens is 14.  
  \item \textbf{IMDB} is a set of 1 million actor/actress names 
  from the IMDB website. The average string length is 16. 
  \item \textbf{PubMed} is a set of 4 million publication titles from MEDLINE. 
  The average string length is 101. 
  \item \textbf{AIDS} is a set of 42,687 antivirus screen chemical compounds from 
  the Developmental Therapeutics Program at NCI/NIH. The average numbers of 
  vertices/edges are 26/28. The numbers of vertex/edge labels are 62/3. 
  \item \textbf{Protein} is a set of 6,000 protein structures from the Protein Data 
  Bank~\cite{DBLP:conf/sspr/RiesenB08}. 
  The original dataset has only 600 graphs. We 
  make up our dataset by duplication and randomly applying minor errors. 
  The average numbers of vertices/edges are 33/56. The numbers of vertex/edge labels are 
  3/5. 
\end{itemize}

The following state-of-the-art methods are compared.
\begin{itemize}
  \item Hamming distance search: \gph is a partition-based 
  algorithm~\cite{DBLP:conf/icde/QinWXWLI18} for Hamming distance search.  
  We set partition size $m = \floor{d/16}$ for best overall 
  search time. 
  \item Set similarity search: We use Jaccard similarity 
  $J(x, y) = \size{x \intersect y}/\size{x \union y}$. 
  It can be converted to an equivalent overlap similarity: 
  $J(x, y) \geq \tau \iff \size{x \intersect y} \geq (\size{x} + \size{y})\tau/(1 + \tau)$. 
  We consider three algorithms:
  \begin{inparaenum} [(1)]
    \item \pkwise is a prefix filter-based algorithm~\cite{DBLP:conf/sigmod/WangXQWZI16} 
    for local similarity search. It can be easily adapted for set similarity 
    search and achieves good performance. We set the partition size of token 
    universe to $4$ (equivalent to $m = 5$), as suggested by \cite{DBLP:conf/sigmod/WangXQWZI16}. 
    \item \adaptprefix~\cite{DBLP:conf/sigmod/WangLF12} is a prefix 
    filter-based algorithm dedicated to set similarity search. 
    As its join version is shown to be slower than the \allpairs~\cite{DBLP:conf/www/BayardoMS07} 
    and the \ppjoin~\cite{DBLP:journals/tods/XiaoWLYW11} algorithms 
    in a few cases~\cite{DBLP:journals/pvldb/MannAB16}, 
    we disable its extension of prefixes (and apply the position filter~\cite{DBLP:journals/tods/XiaoWLYW11} 
    if necessary) to make it the same as \allpairs' or \ppjoin's search version, 
    whenever either of the two is faster. 
    \item \partalloc~\cite{DBLP:journals/pvldb/DengLWF15} is 
    a partition filter-based algorithm for set similarity join. 
    We adapt it for search. 
  \end{inparaenum}
  All the competitors are equipped with fast verification~\cite{DBLP:journals/pvldb/MannAB16}.
  \item String edit distance search: 
  \pivotal is a \qgram-based algorithm~\cite{DBLP:conf/sigmod/DengLF14}. 
  The number of pivotal \qgrams is $m = \tau + 1$. 
  The \qgram length $\kappa$ is set to $3, 2, 2, 2$ for 
  $\tau = 1, 2, 3, 4$ on IMDB and $8, 6, 6, 4, 4$ for $\tau = 4, 6, 8, 10, 12$ 
  on PubMed.
  \item Graph edit distance search: \pars is a partition-based 
  algorithm~\cite{DBLP:journals/vldb/ZhaoXLZW18}. 
  The partition size $m = \tau + 1$. 
\end{itemize}
Our pigeonring principle-based algorithms are denoted by \ringalg. When 
$l = 1$, \ringalg exactly becomes the above competitors (\pkwise for set 
similarity search). We choose the same settings for \ringalg and its 
pigeonhole principle-based counterparts. 
The source codes were either from our previous work or received from the 
original authors of the aforementioned work.

The methods in \cite{DBLP:conf/cvpr/NorouziPF12,DBLP:conf/ssdbm/ZhangQWSL13,DBLP:conf/vldb/ArasuGK06,DBLP:conf/icde/LiLL08,DBLP:journals/pvldb/XiaoWL08,DBLP:journals/tods/Qin0XLLW13,DBLP:conf/icde/WangDTYJ12,DBLP:journals/vldb/ZhaoXL0I13,DBLP:journals/tkde/ZhengZLWZ15} 
are not compared since prior work~\cite{DBLP:conf/icde/QinWXWLI18,DBLP:conf/sigmod/WangLF12,DBLP:journals/pvldb/MannAB16,DBLP:conf/sigmod/DengLF14,DBLP:journals/vldb/ZhaoXLZW18} 
showed they are outperformed by the above selected ones. 
Approximate methods are not considered because we focus on exact 
solutions and the main purpose of our experiments is to show the 
speedup on top of the pigeonhole principle-based methods. 


The experiments were carried out on a server with an Octa-Core 
Intel(R) Xeon(R) CPU @3.2GHz Processor and 256GB RAM, running 
Ubuntu 16.04. All the algorithms were implemented in C++ in a 
single thread main memory fashion.

Since we use exactly the same indexes and parameters for index 
construction as the pigeonhole principle-based counterparts do, 
the index sizes and the construction times are the same as theirs, 
thus not repeatedly evaluated.

\begin{figure} 
  \centering
  \subfigure[GIST, Candidate]{
    \includegraphics[width=0.46\linewidth]{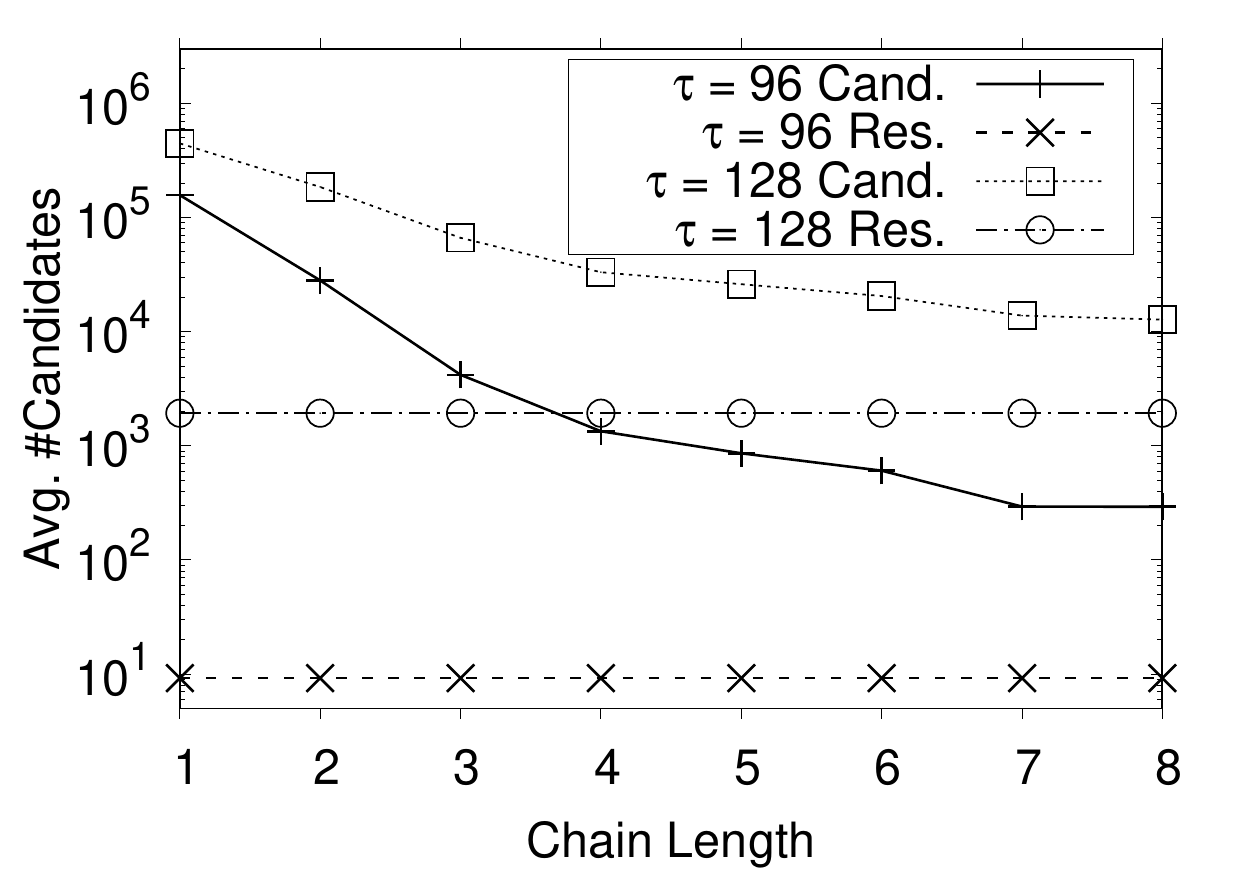}
    \label{fig:exp-chain-hamming-cand-gist}
  }
  \goodgap
  \subfigure[GIST, Time]{
    \includegraphics[width=0.46\linewidth]{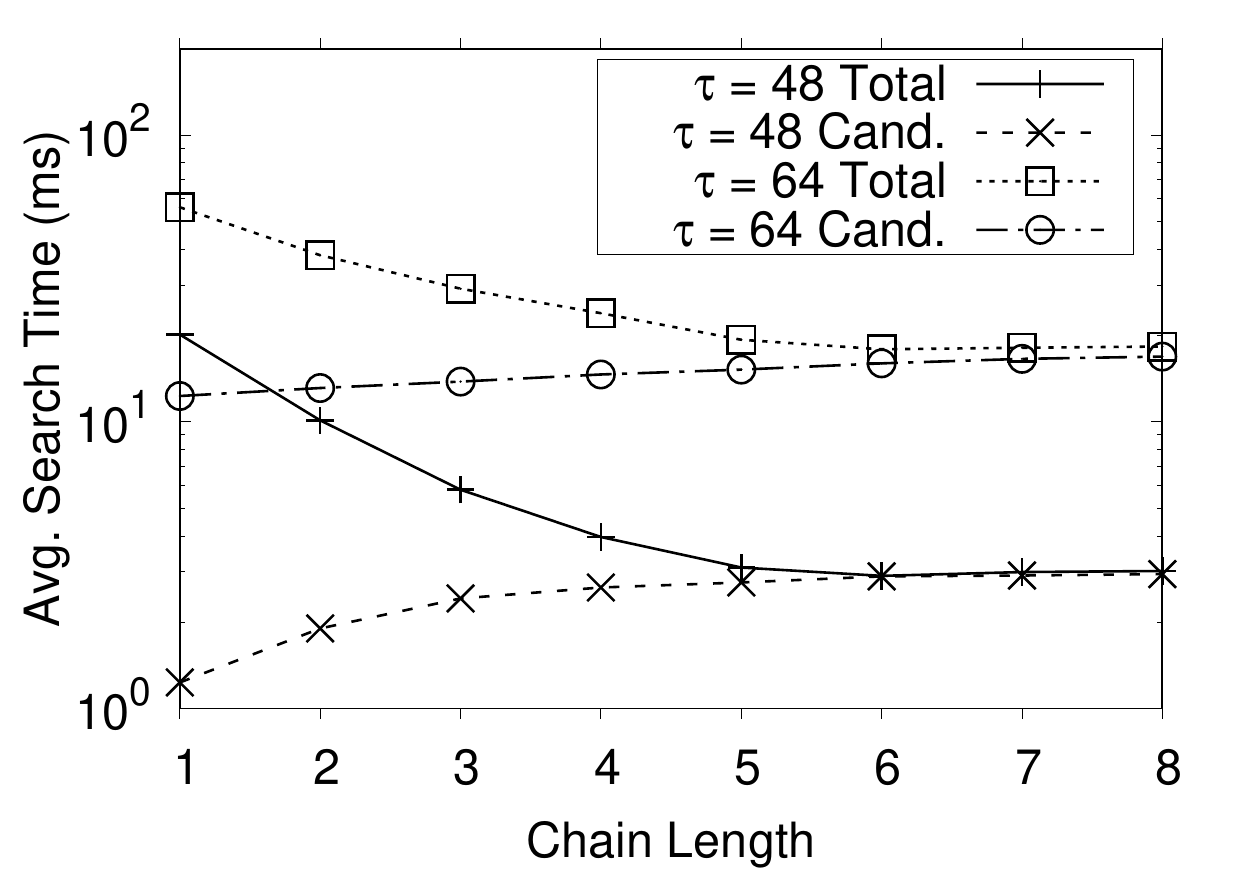}
    \label{fig:exp-chain-hamming-time-gist}
  }
  \subfigure[SIFT, Candidate]{
    \includegraphics[width=0.46\linewidth]{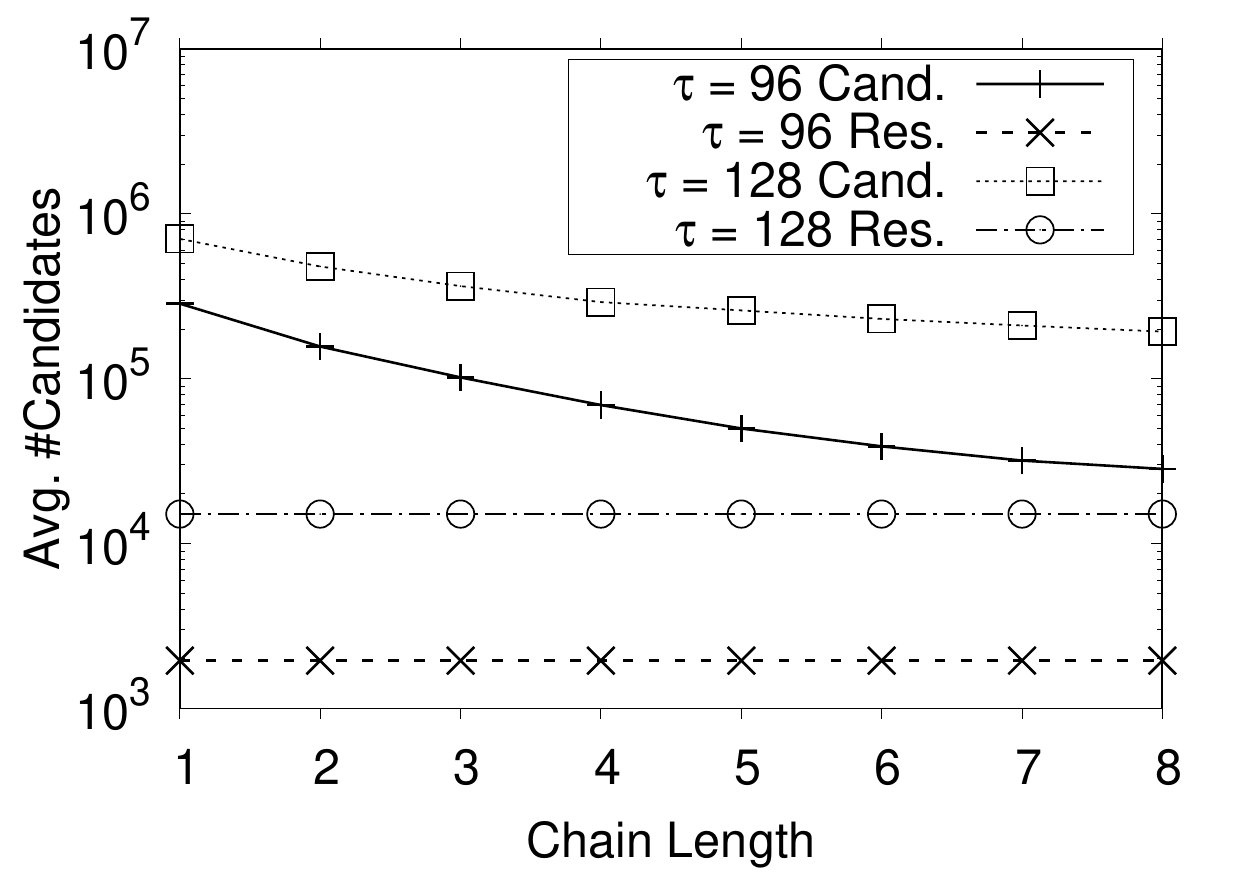}
    \label{fig:exp-chain-hamming-cand-sift}
  }
  \goodgap
  \subfigure[SIFT, Time]{
    \includegraphics[width=0.46\linewidth]{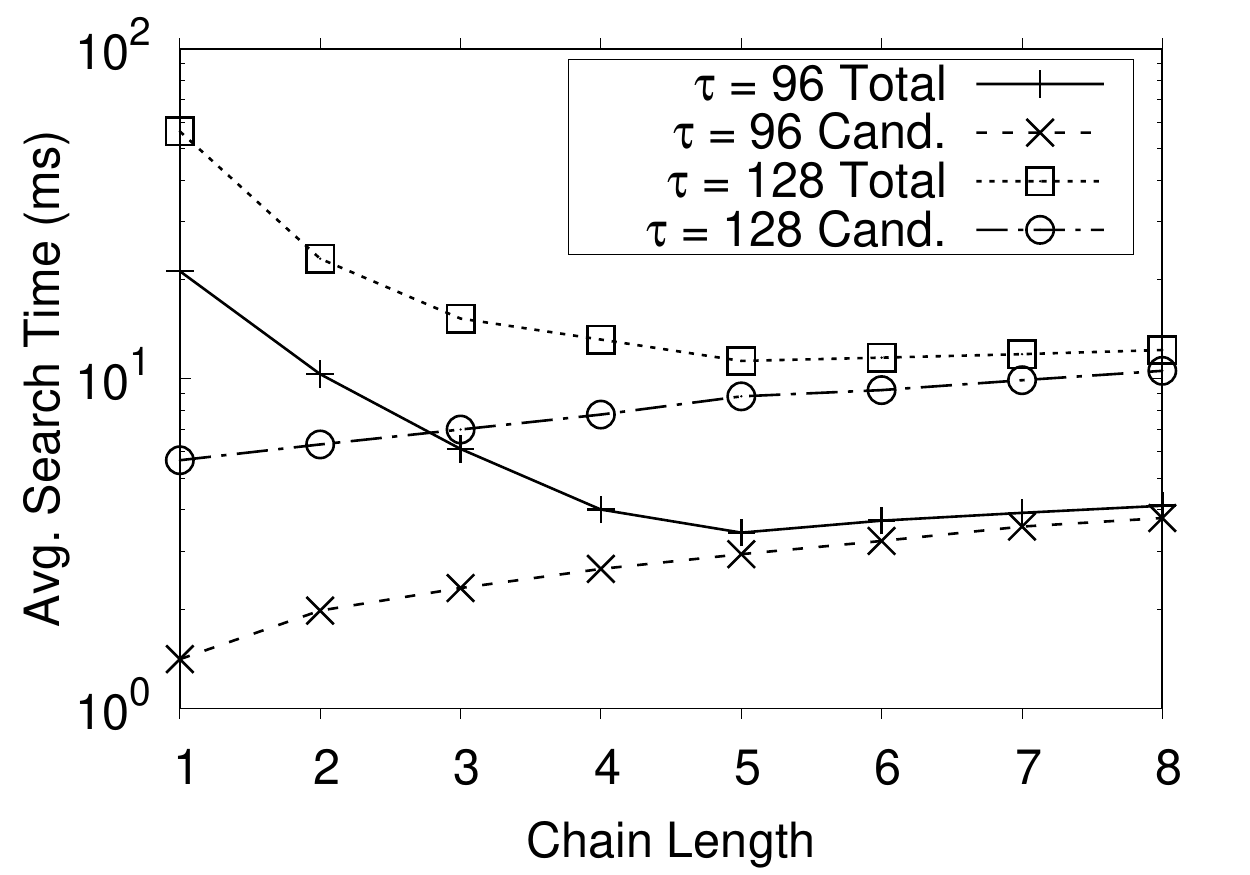}
    \label{fig:exp-chain-hamming-time-sift}
  }  
  \caption{Effect of chain length on Hamming distance search.}
  \label{fig:exp-chain-hamming}
\end{figure}


\begin{figure} 
  \centering
  \subfigure[Enron, Candidate]{
    \includegraphics[width=0.46\linewidth]{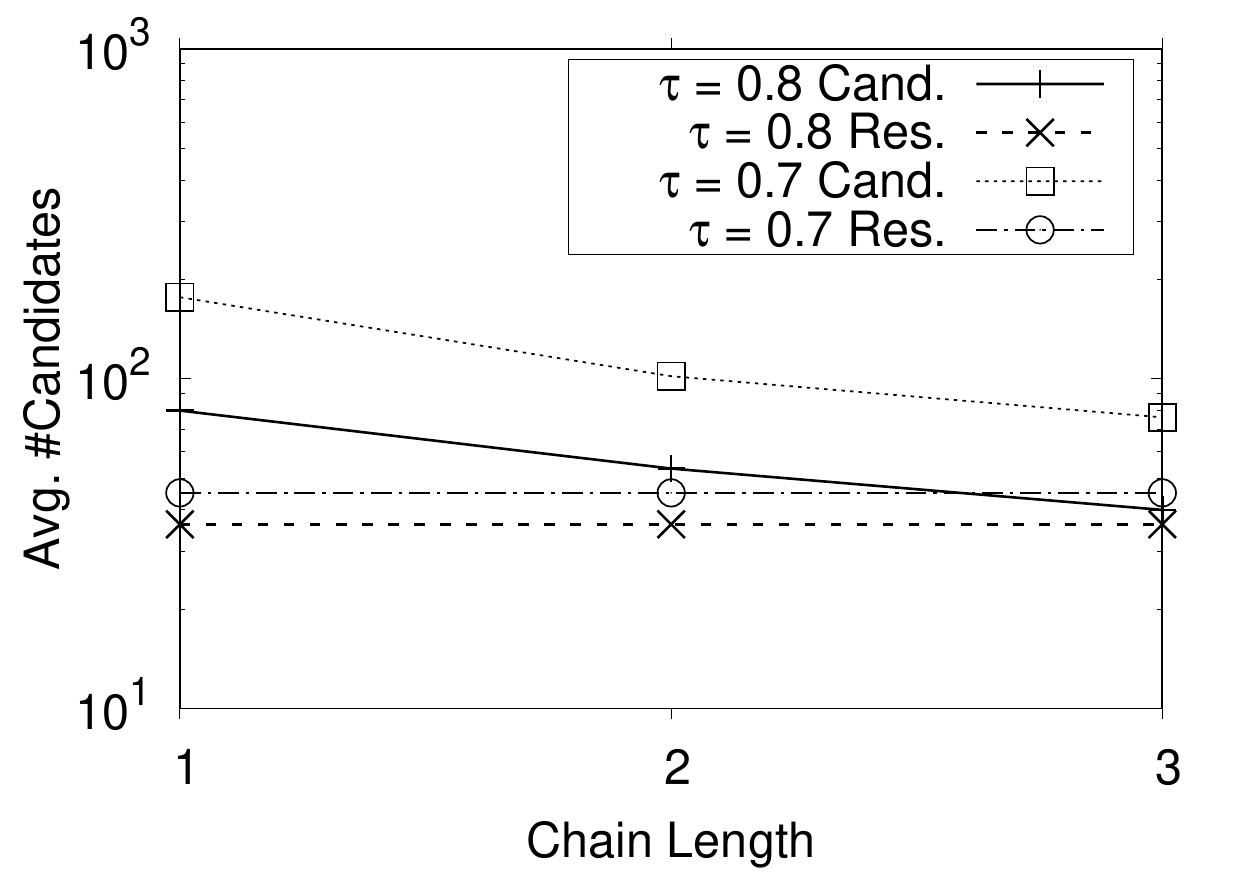}
    \label{fig:exp-chain-set-cand-enron}
  }
  \goodgap  
  \subfigure[Enron, Time]{
    \includegraphics[width=0.46\linewidth]{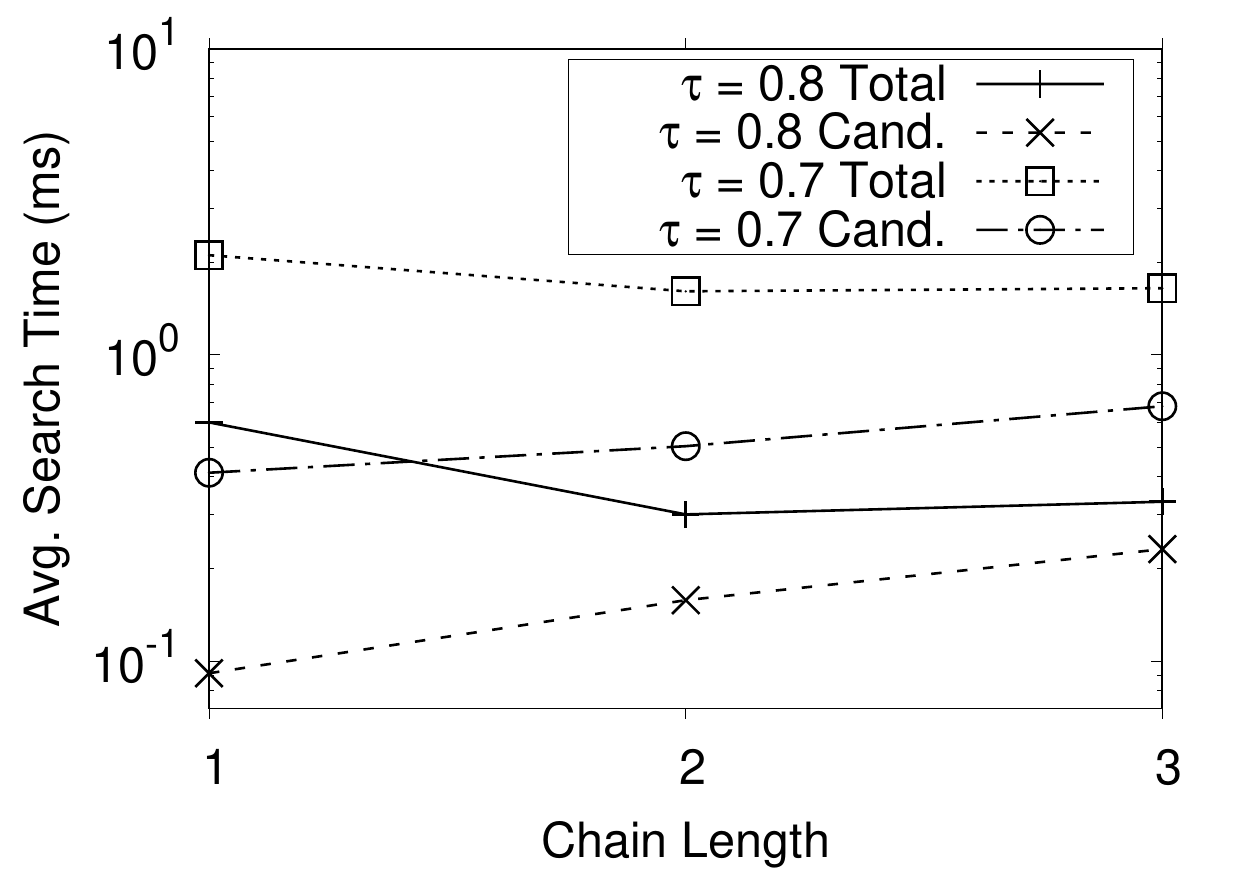}
    \label{fig:exp-chain-set-time-enron}
  }  
  \subfigure[DBLP, Candidate]{
    \includegraphics[width=0.46\linewidth]{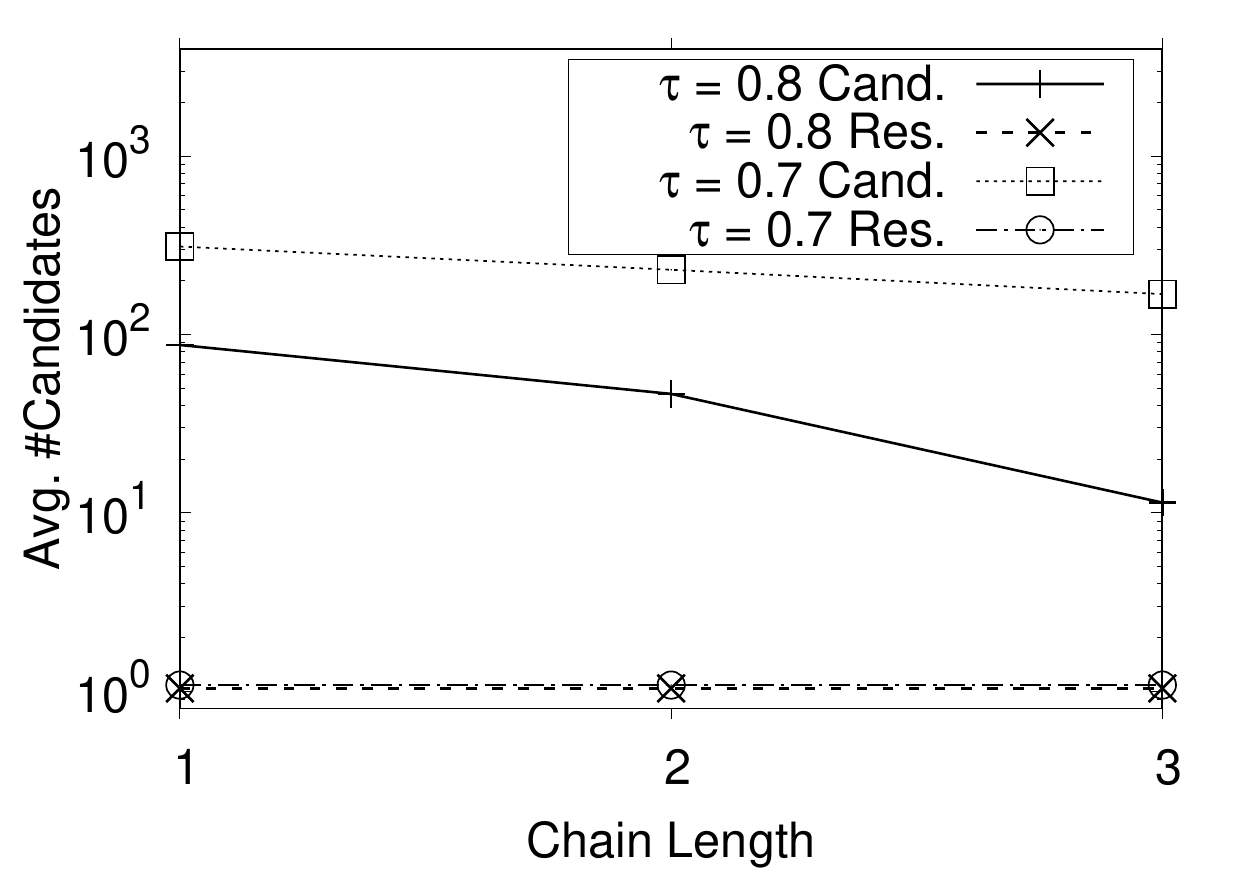}
    \label{fig:exp-chain-set-cand-dblp}
  }
  \goodgap  
  \subfigure[DBLP, Time]{
    \includegraphics[width=0.46\linewidth]{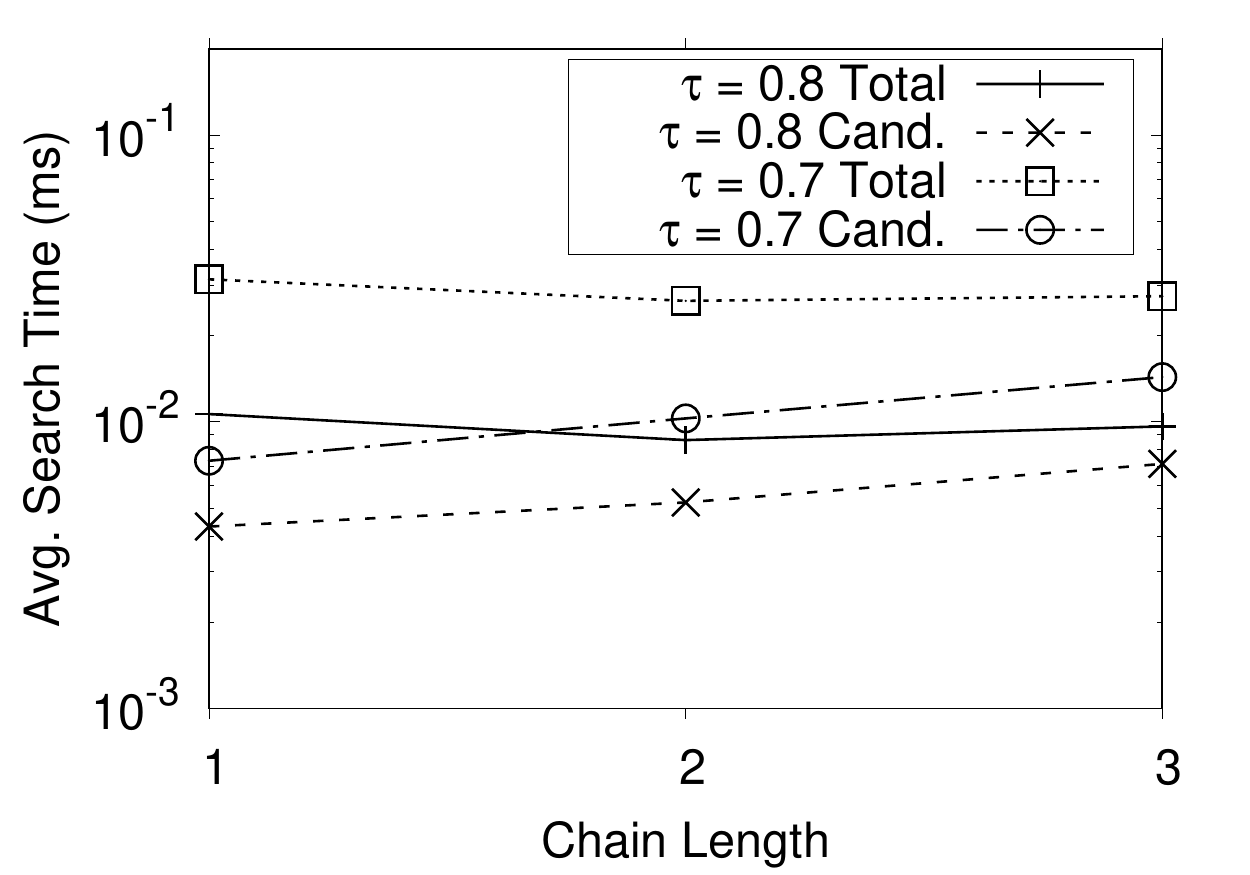}
    \label{fig:exp-chain-set-time-dblp}
  }
  \caption{Effect of chain length on set similarity search.}
  \label{fig:exp-chain-set}  
\end{figure}

\subsection{Effect of Chain Length} \label{sec:exp-chain-length}
We study how the performance of \ringalg changes with chain 
length $l$. 

The average numbers of candidates per query and the corresponding search 
times are plotted in Figures~\ref{fig:exp-chain-hamming} --~\ref{fig:exp-chain-ged} 
((a) and (c)). 
Two $\tau$ settings are shown for each dataset. We also 
plot the number of results. It can be observed the 
candidate numbers decrease with the growth of chain length. This is expected: 
when $l$ increases, because we look for prefix-viable chains from 
existing ones, the candidates are always reduced to a subset. 

\begin{figure} 
  \centering
  \subfigure[IMDB, Candidate]{
    \includegraphics[width=0.46\linewidth]{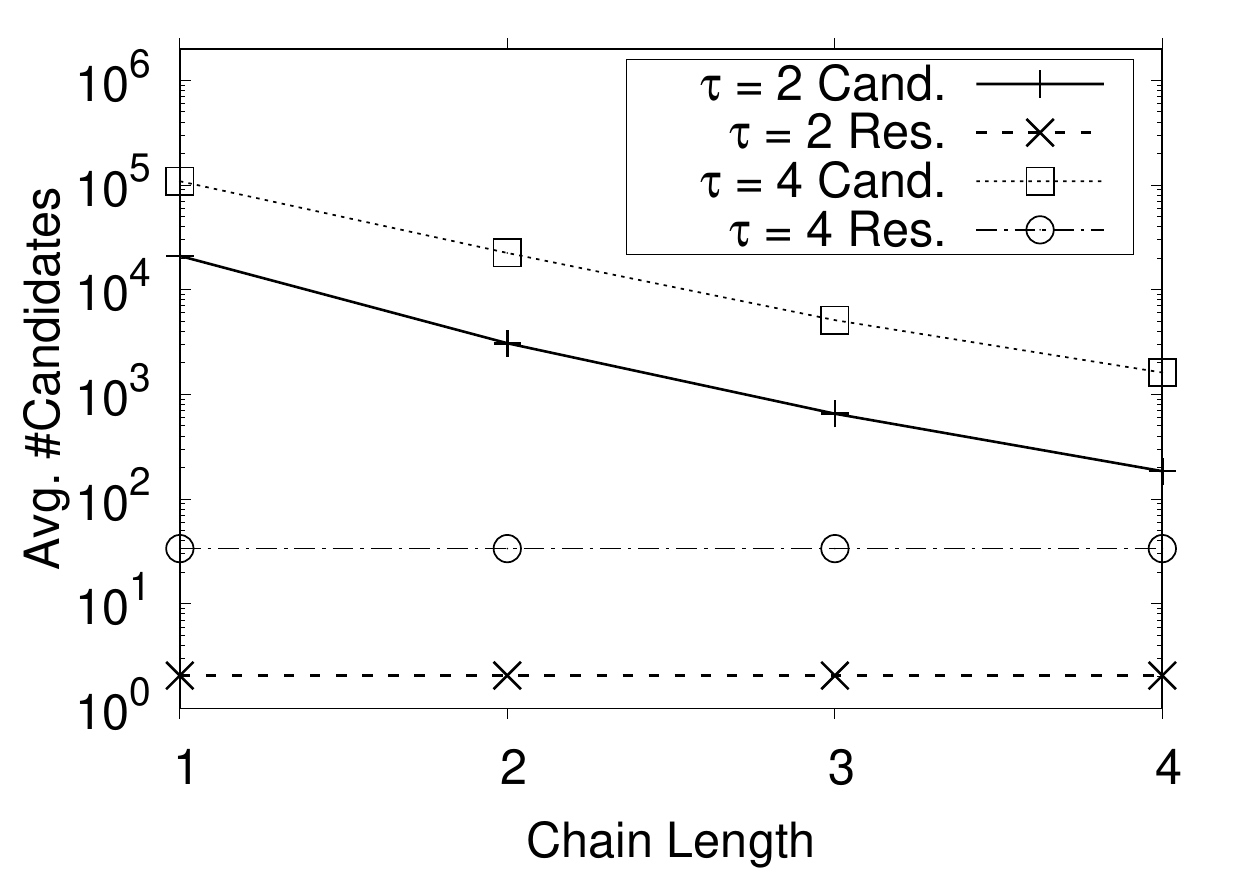}
    \label{fig:exp-chain-sed-cand-imdb}
  }
  \goodgap  
  \subfigure[IMDB, Time]{
    \includegraphics[width=0.46\linewidth]{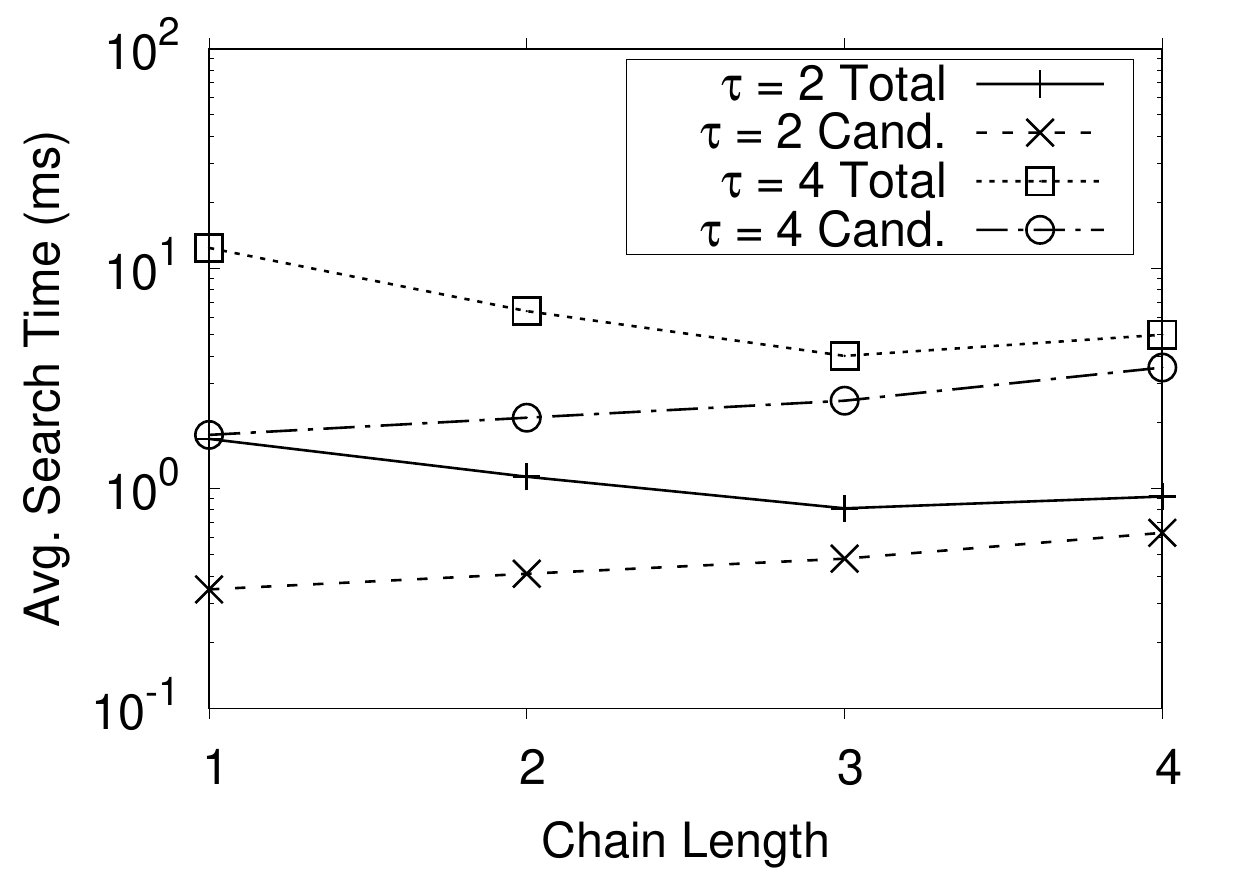}
    \label{fig:exp-chain-sed-time-imdb}
  }
  \subfigure[PubMed, Candidate]{
    \includegraphics[width=0.46\linewidth]{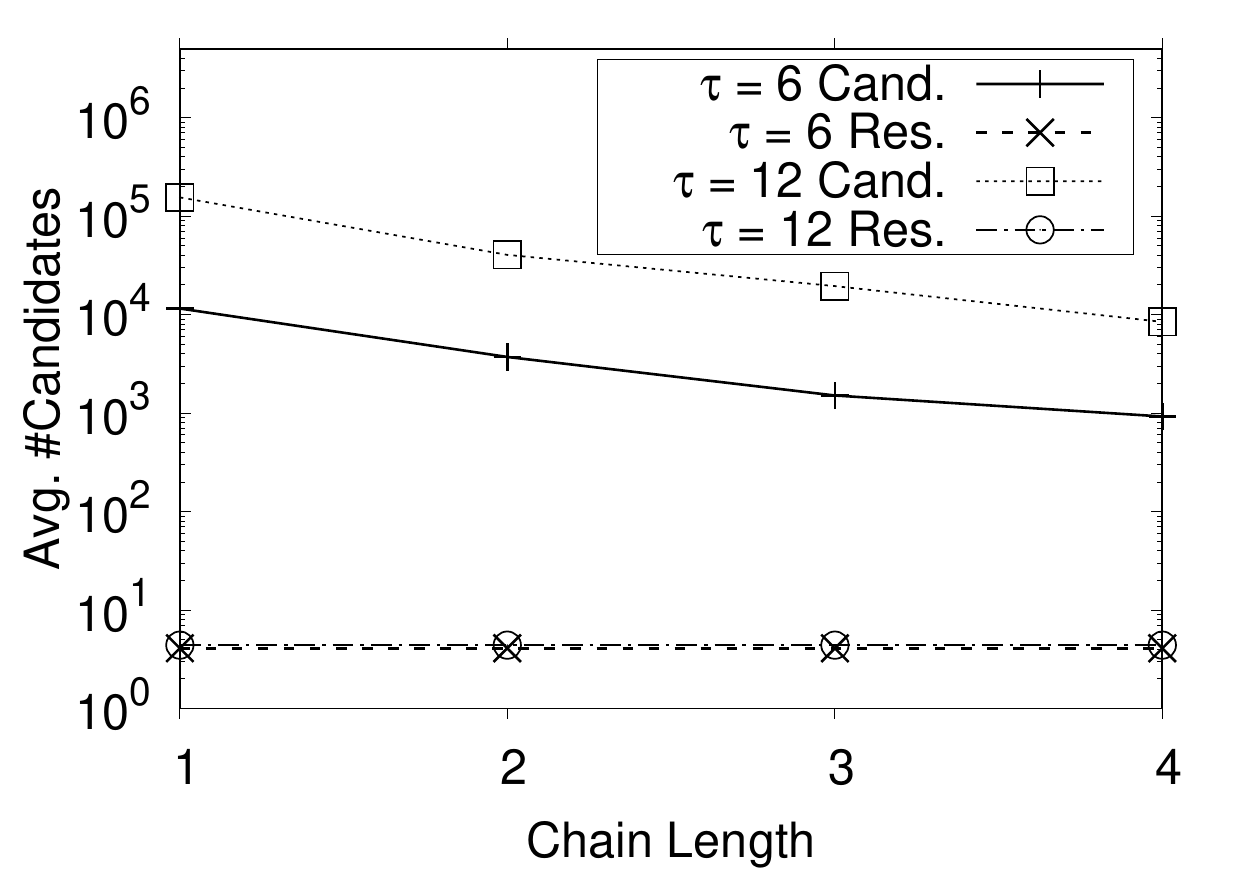}
    \label{fig:exp-chain-sed-cand-pubmed}
  }
  \goodgap  
  \subfigure[PubMed, Time]{
    \includegraphics[width=0.46\linewidth]{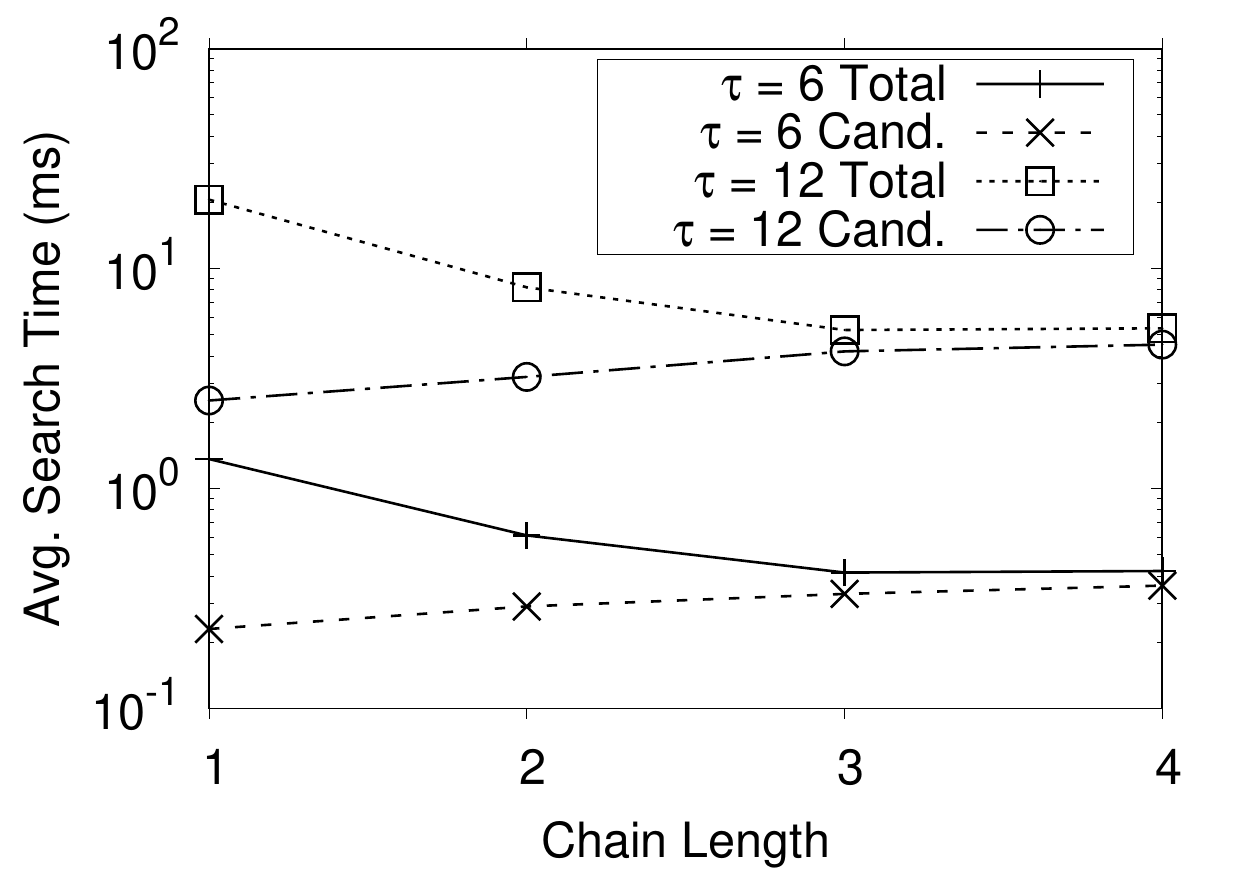}
    \label{fig:exp-chain-sed-time-pubmed}
  }  
  \caption{Effect of chain length on string edit distance search.}
  \label{fig:exp-chain-sed}  
\end{figure}

\begin{figure} 
  \subfigure[AIDS, Candidate]{
    \includegraphics[width=0.46\linewidth]{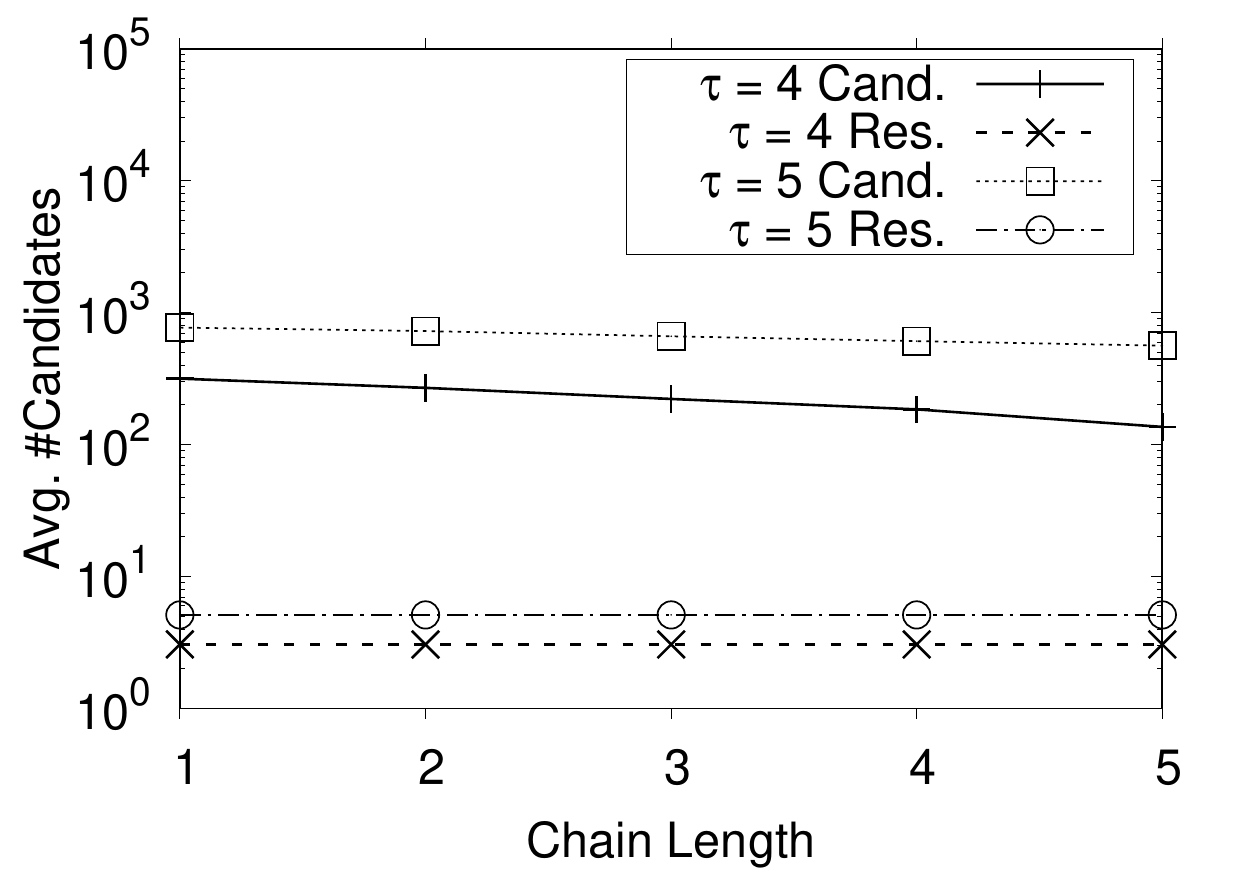}
    \label{fig:exp-chain-ged-cand-aids}
  }
  \goodgap  
  \subfigure[AIDS, Time]{
    \includegraphics[width=0.46\linewidth]{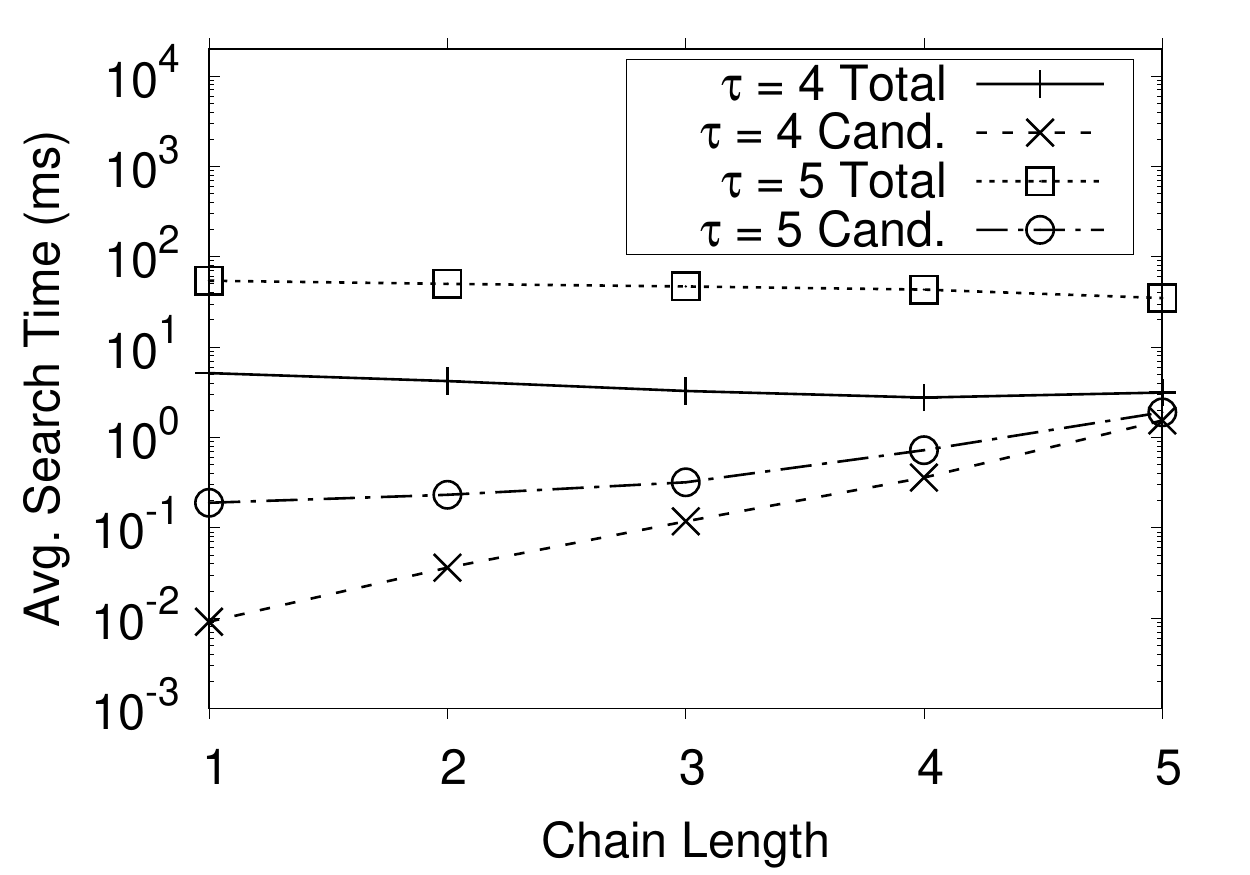}
    \label{fig:exp-chain-ged-time-aids}
  }  
  \subfigure[Protein, Candidate]{
    \includegraphics[width=0.46\linewidth]{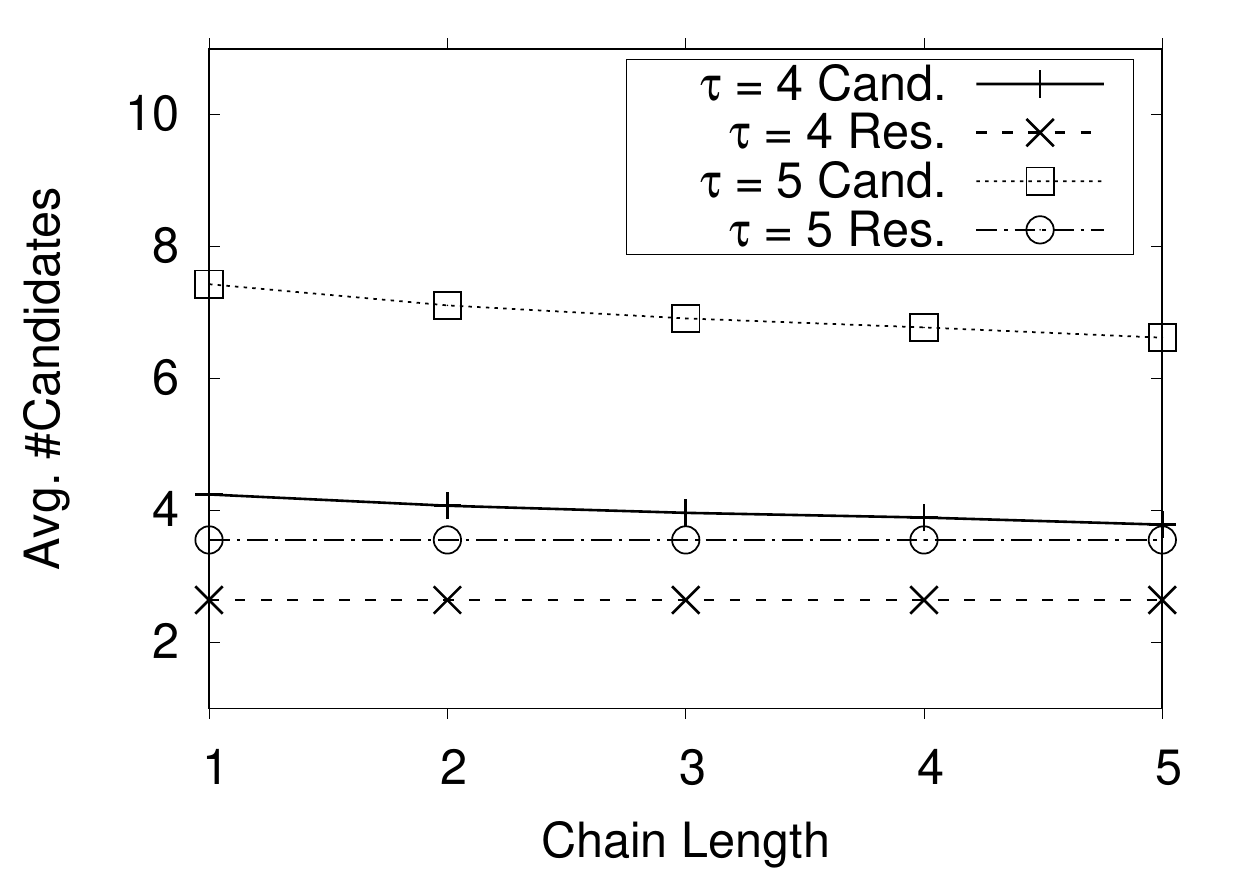}
    \label{fig:exp-chain-ged-cand-protein}
  }
  \goodgap  
  \subfigure[Protein, Time]{
    \includegraphics[width=0.46\linewidth]{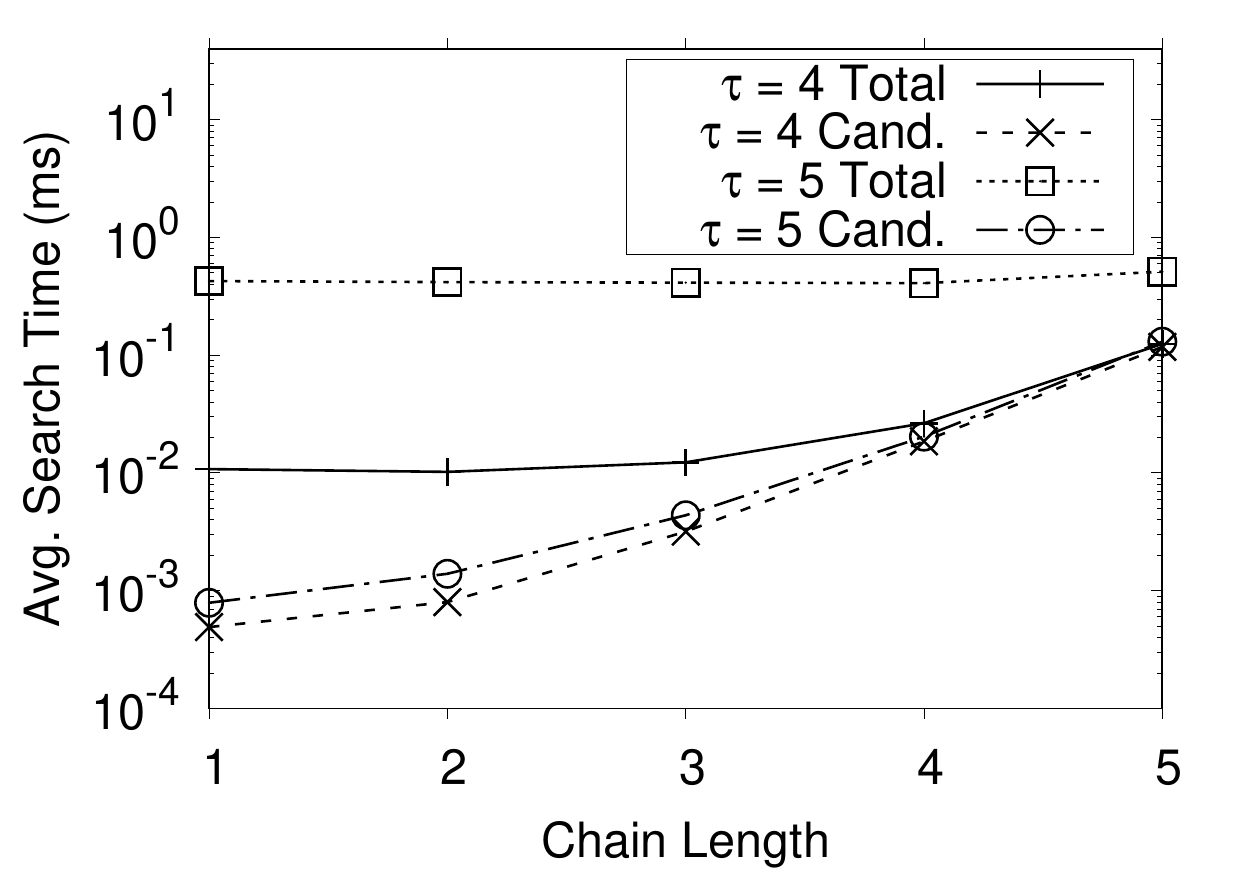}
    \label{fig:exp-chain-ged-time-protein}
  }    
  \caption{Effect of chain length on graph edit distance search.}
  \label{fig:exp-chain-ged}  
\end{figure}

In Figures~\ref{fig:exp-chain-hamming} --~\ref{fig:exp-chain-ged} ((b) and (d)), 
we plot the candidate generation time and the total search time. 
Their difference is the verification time. Feature extraction time is negligible 
and thus we make it subsumed by candidate generation. We observe: when the chain 
length increases, the candidate generation time keeps increasing, while the 
general trend of the total search time is to decrease and rebound. According to 
the analysis in Section~\ref{sec:index}, there is a tradeoff: with longer chains, 
we spend more time looking for prefix-viable chains, while the candidate number 
is reduced and the verification time is saved. The following settings achieve 
the overall best search time: 
\begin{inparaenum} [(1)]
  \item Hamming distance search: $l = 5$ or $6$. 
  \item Set similarity search: $l = 2$. 
  \item String edit distance search: $l = \min(3, \tau + 1)$. 
  \item Graph edit distance search: $l = [\tau - 2 \twoldots \tau]$. 
\end{inparaenum}
We use these settings in the rest of the experiments.


\begin{figure} 
  \centering
  \subfigure[Candidate, GIST]{
    \includegraphics[width=0.46\linewidth]{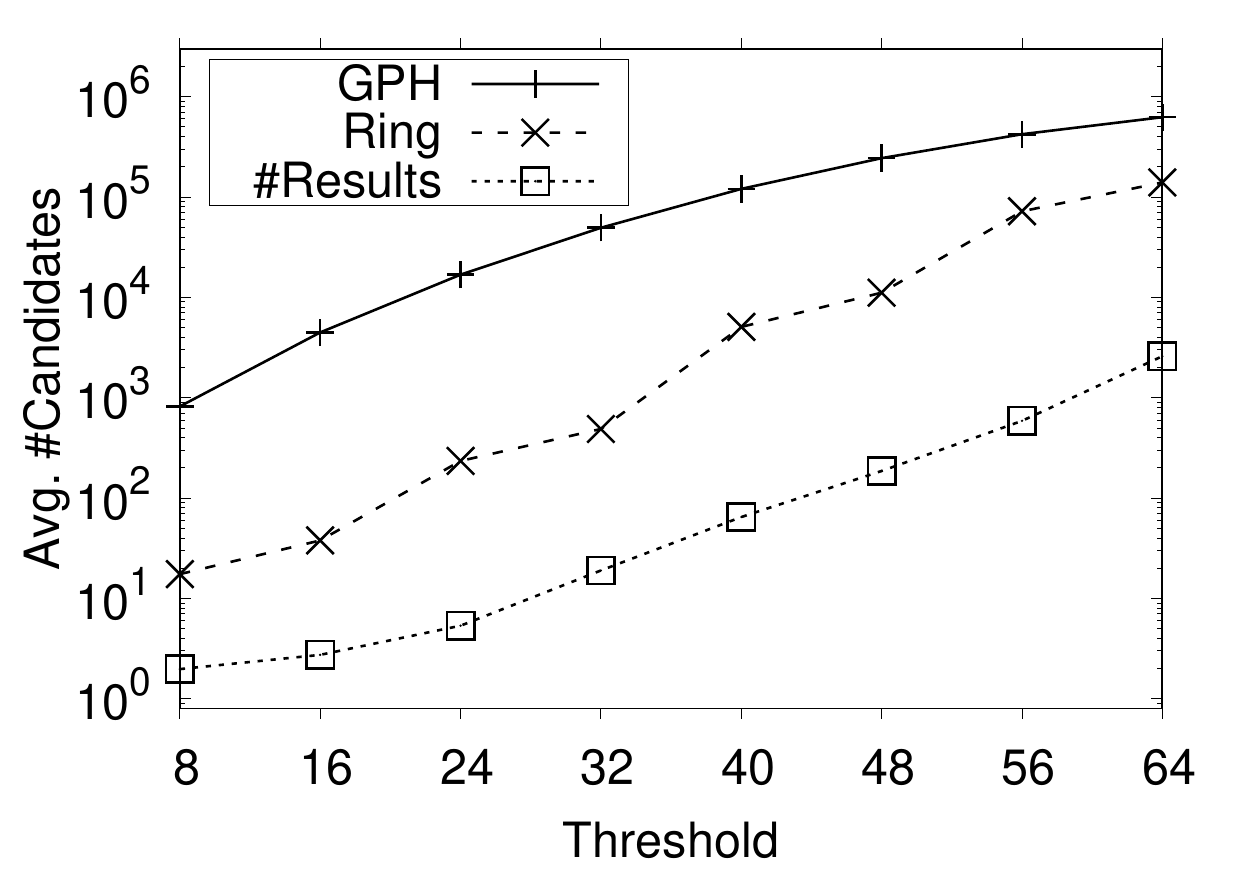}
    \label{fig:exp-compare-hamming-cand-gist}
  }
  \goodgap   
  \subfigure[Time, GIST]{
    \includegraphics[width=0.46\linewidth]{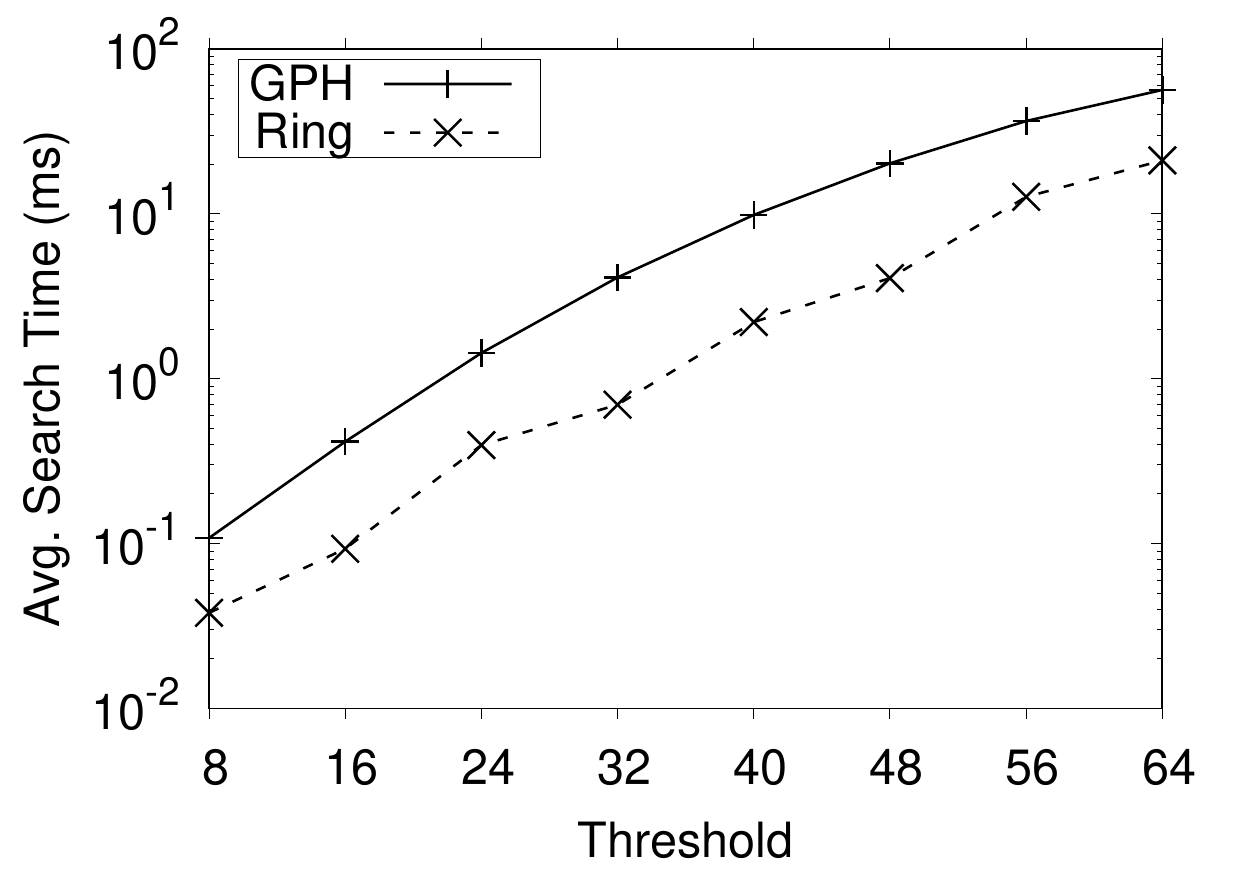}
    \label{fig:exp-compare-hamming-time-gist}
  }    
  \subfigure[Candidate, SIFT]{
    \includegraphics[width=0.46\linewidth]{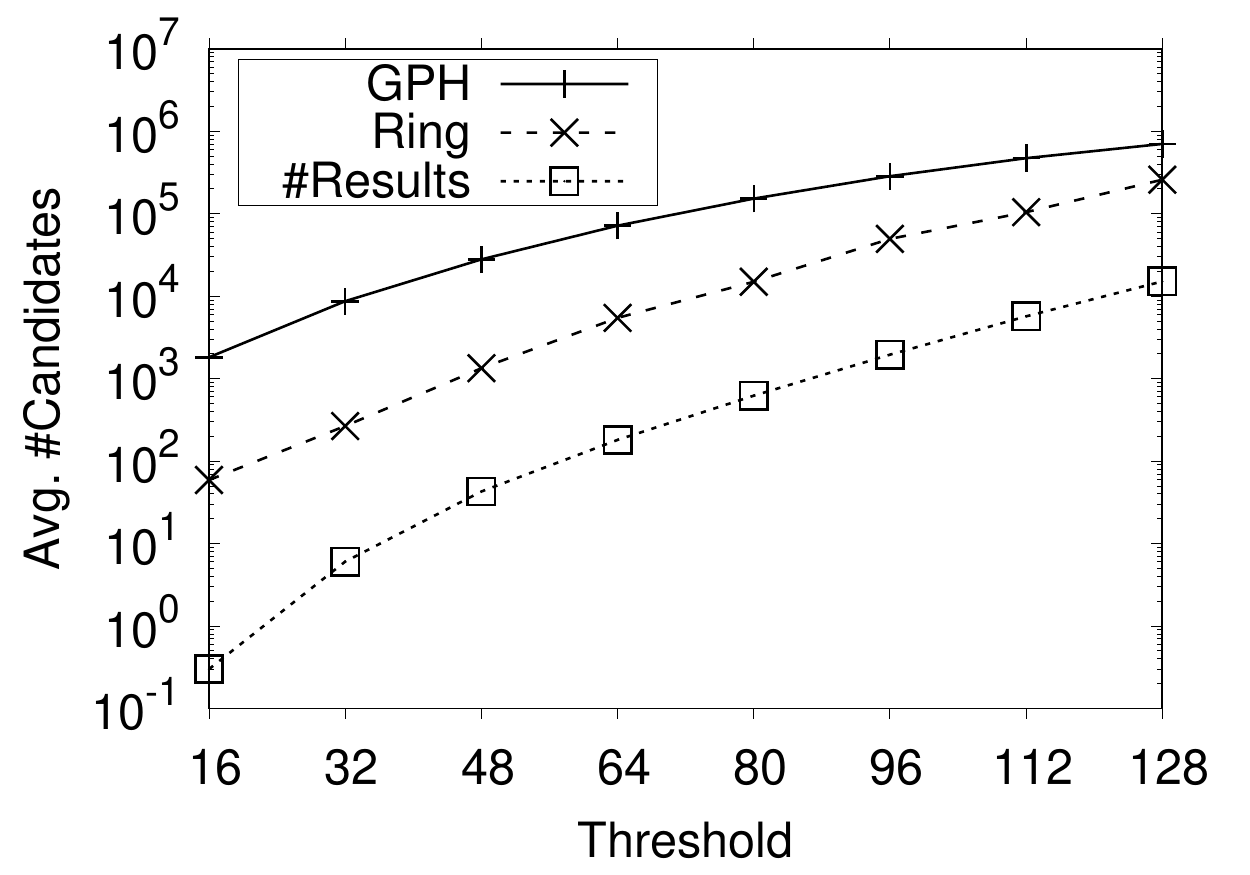}
    \label{fig:exp-compare-hamming-cand-sift}
  }
  \goodgap   
  \subfigure[Time, SIFT]{
    \includegraphics[width=0.46\linewidth]{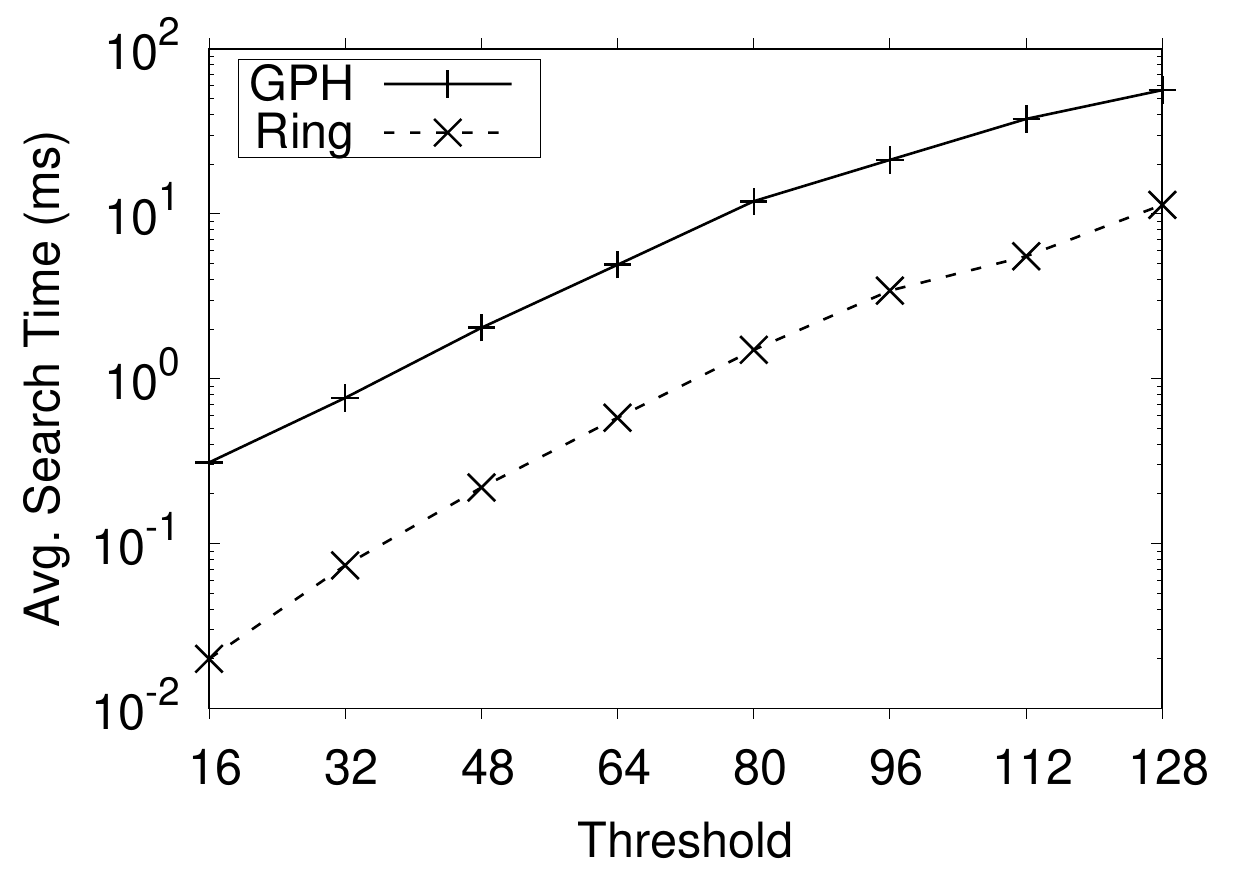}
    \label{fig:exp-compare-hamming-time-sift}
  }
  \caption{Comparison on Hamming distance search.}
\end{figure} 


\begin{figure} 
  \centering
  \subfigure[Candidate, Enron]{
    \includegraphics[width=0.46\linewidth]{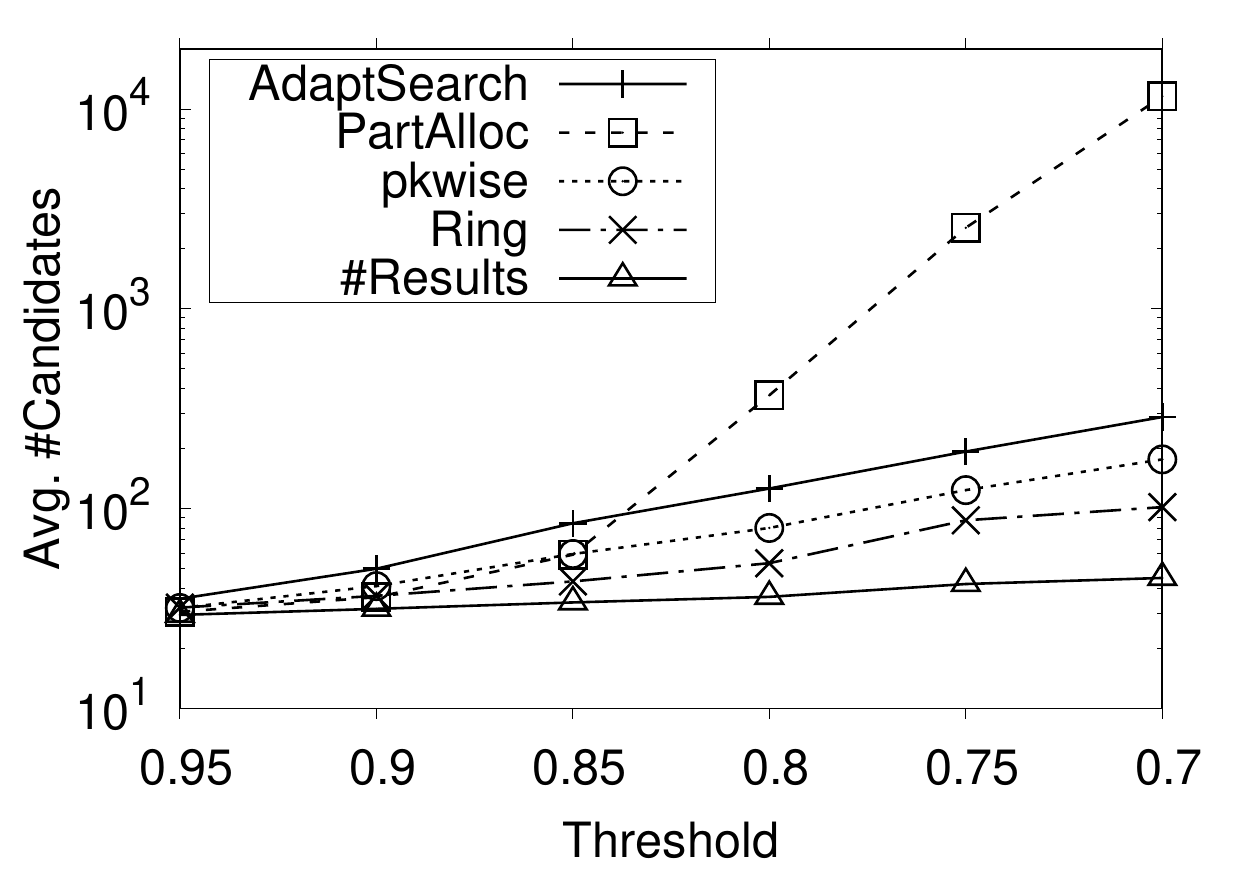}
    \label{fig:exp-compare-set-cand-enron}
  }
  \goodgap   
  \subfigure[Time, Enron]{
    \includegraphics[width=0.46\linewidth]{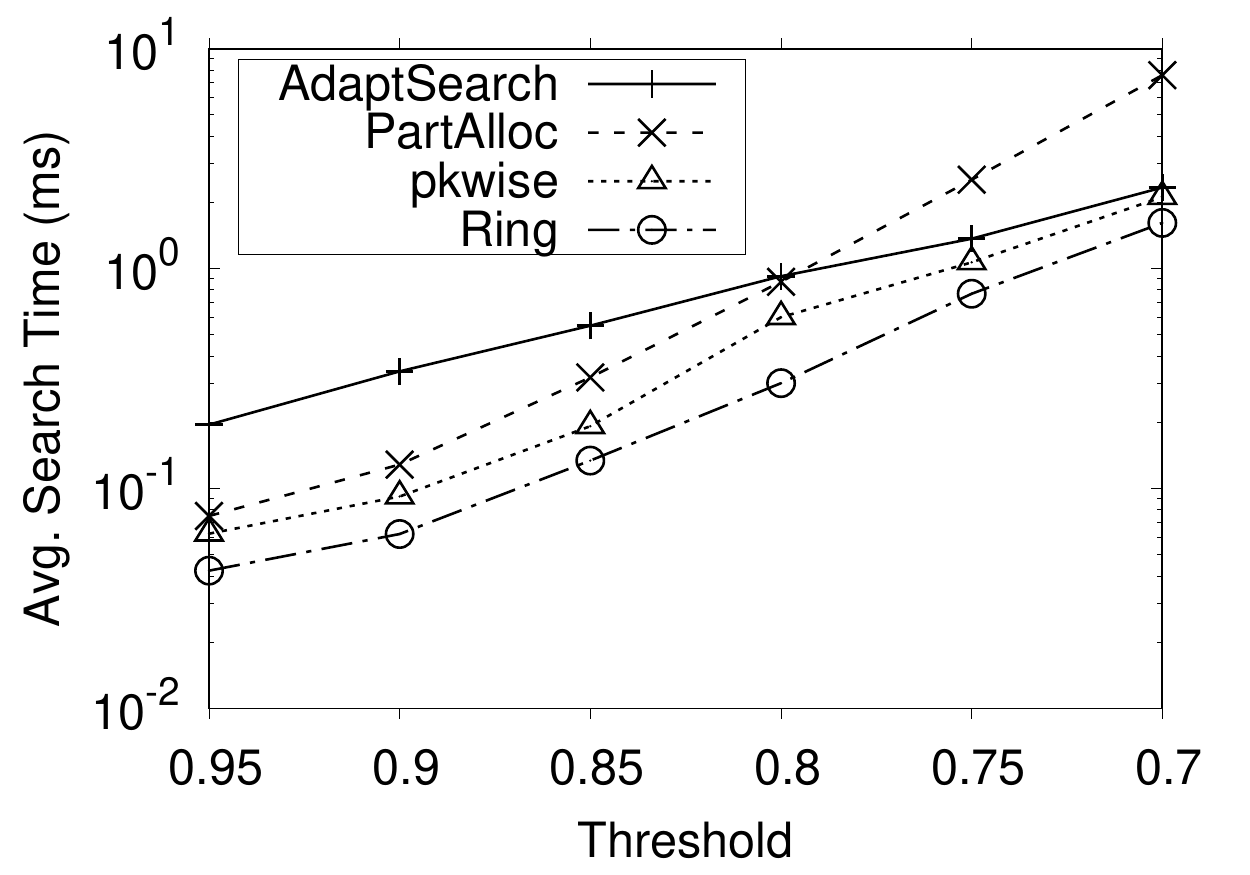}
    \label{fig:exp-compare-set-time-enron}
  }    
  \subfigure[Candidate, DBLP]{
    \includegraphics[width=0.46\linewidth]{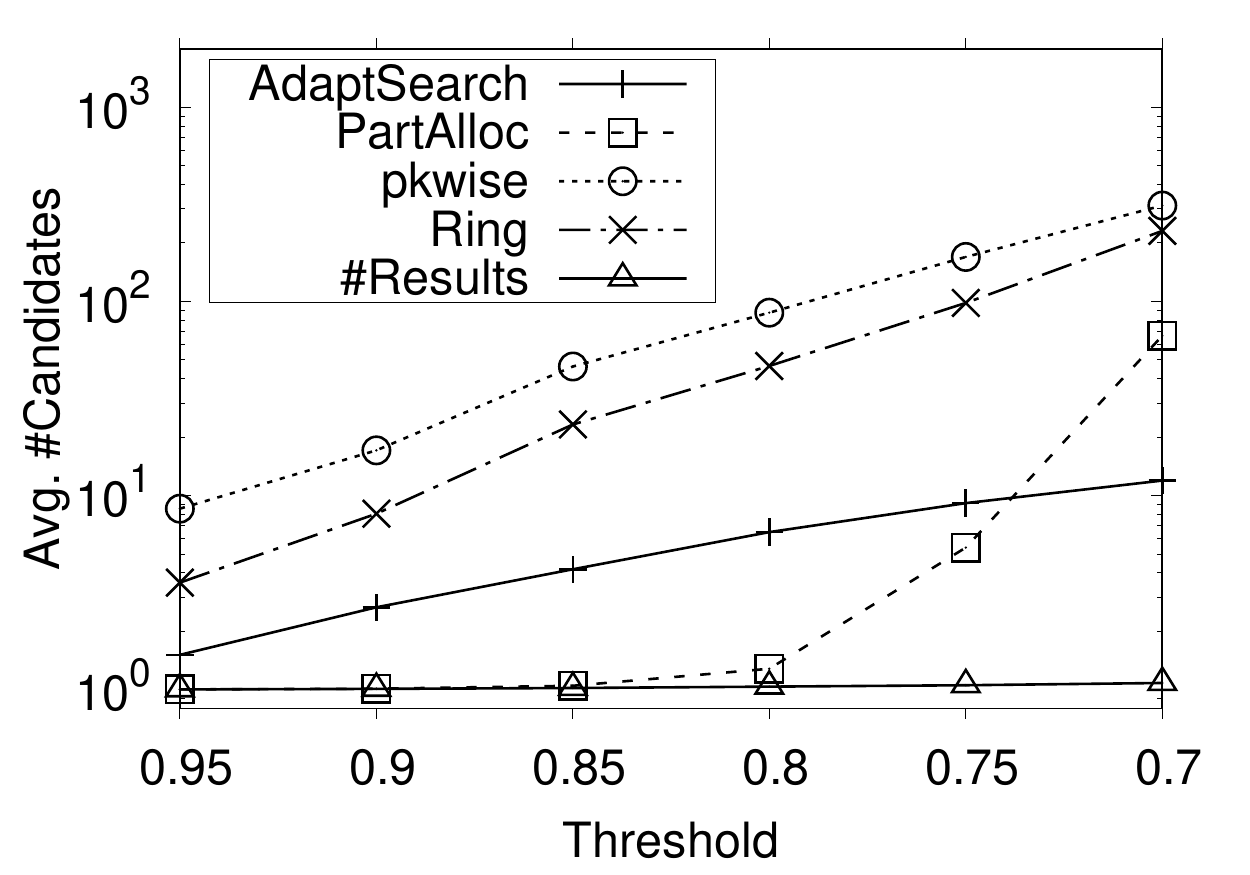}
    \label{fig:exp-compare-set-cand-dblp}
  }
  \goodgap   
  \subfigure[Time, DBLP]{
    \includegraphics[width=0.46\linewidth]{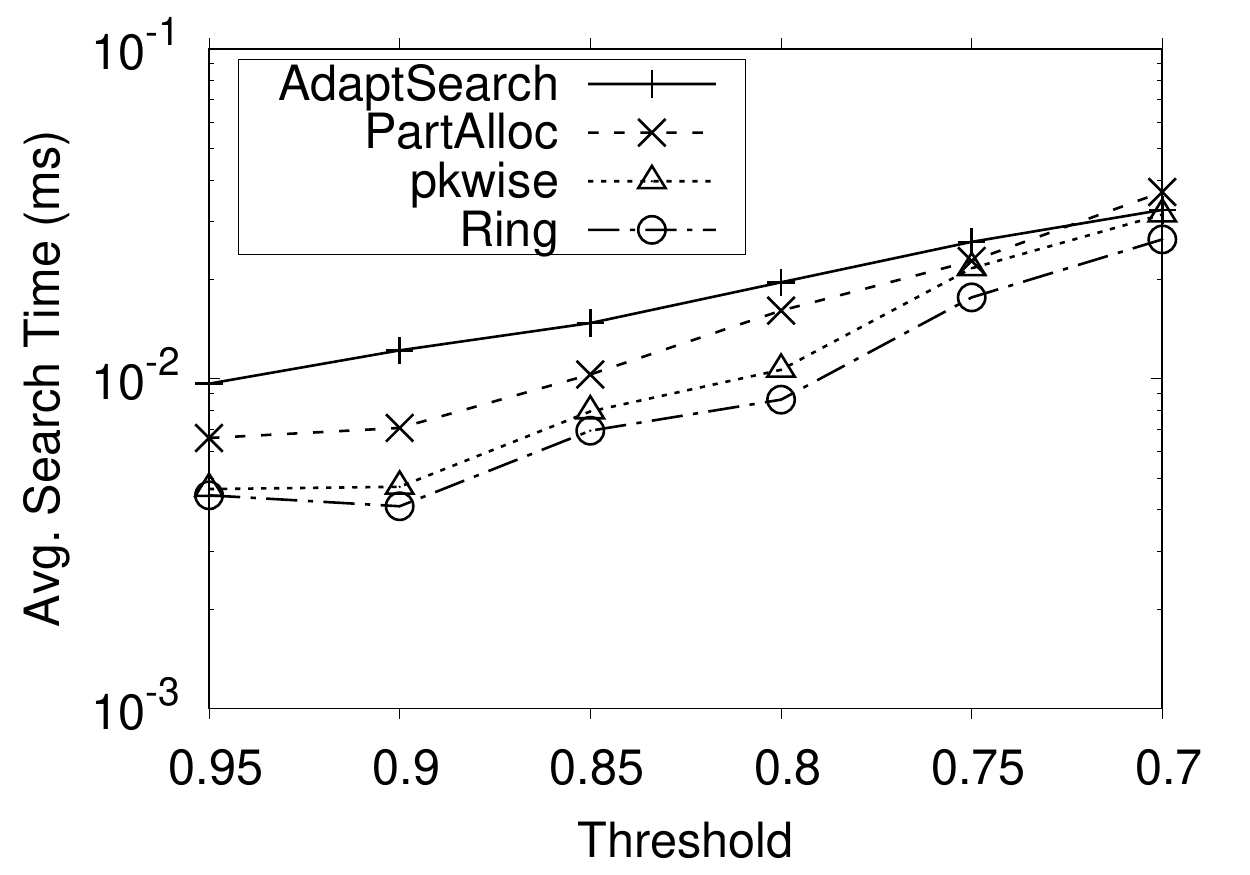}
    \label{fig:exp-compare-set-time-dblp}
  }
  \caption{Comparison on set similarity search.}
\end{figure} 

\begin{figure} 
  \centering
  \subfigure[Candidate, IMDB]{
    \includegraphics[width=0.46\linewidth]{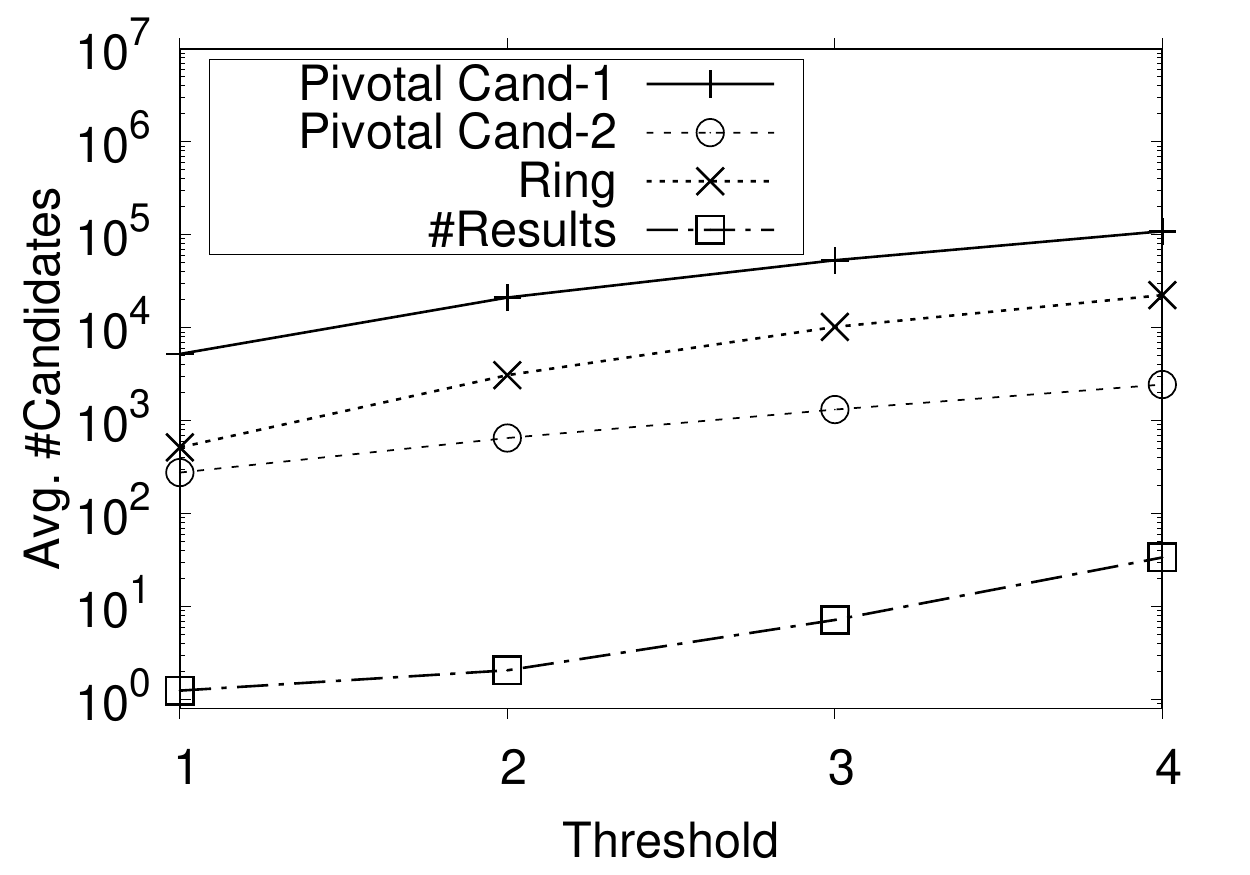}
    \label{fig:exp-compare-string-cand-imdb}
  }
  \goodgap   
  \subfigure[Time, IMDB]{
    \includegraphics[width=0.46\linewidth]{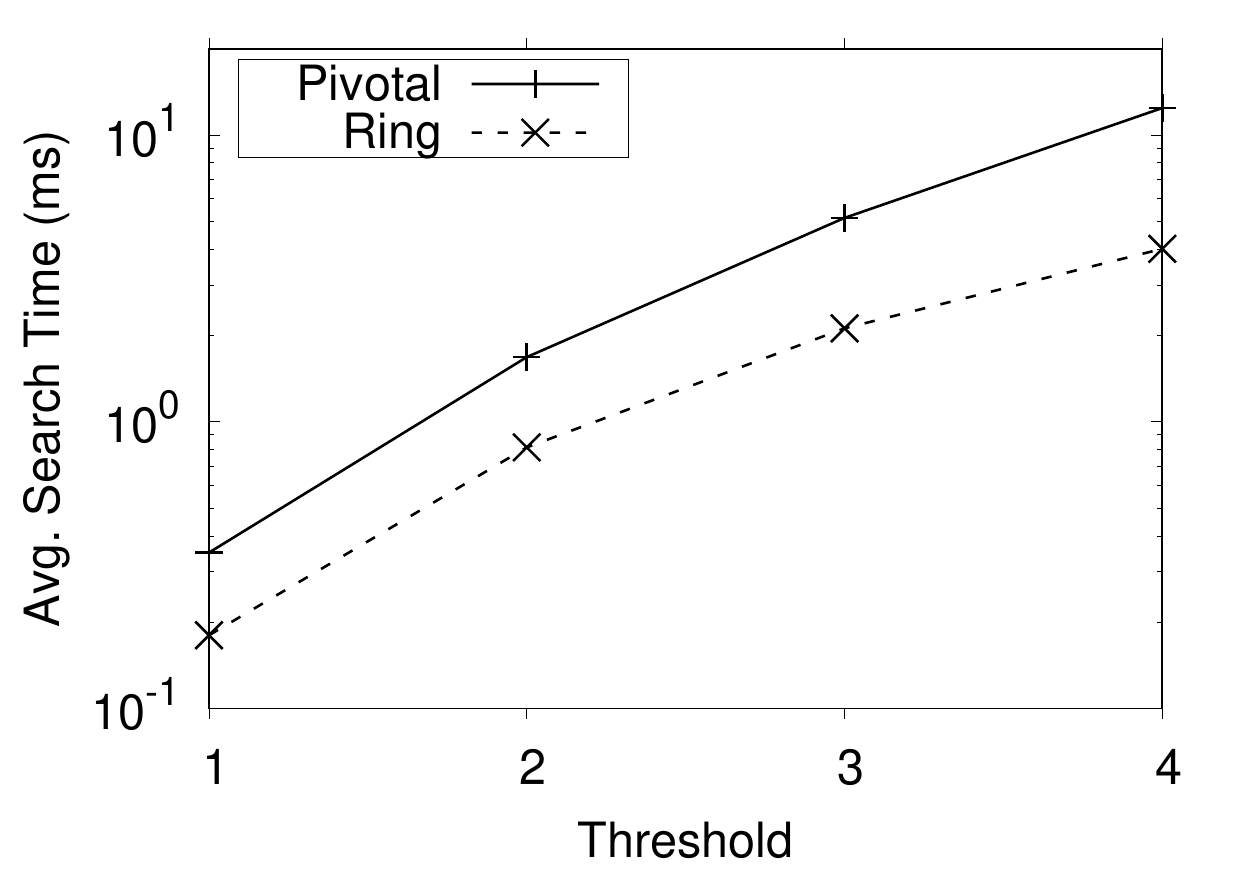}
    \label{fig:exp-compare-string-time-imdb}
  }
  \subfigure[Candidate, PubMed]{
    \includegraphics[width=0.46\linewidth]{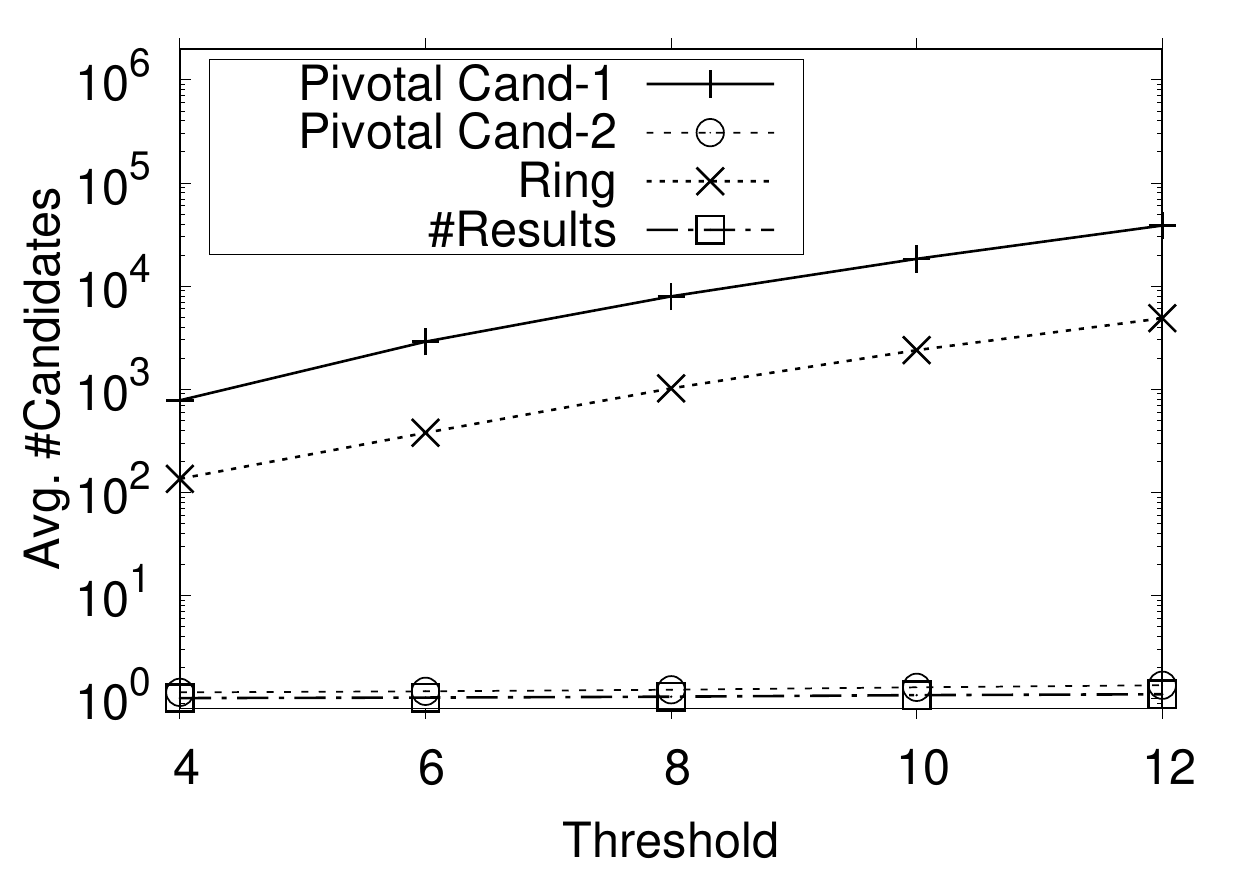}
    \label{fig:exp-compare-string-cand-pubmed}
  }
  \goodgap   
  \subfigure[Time, PubMed]{
    \includegraphics[width=0.46\linewidth]{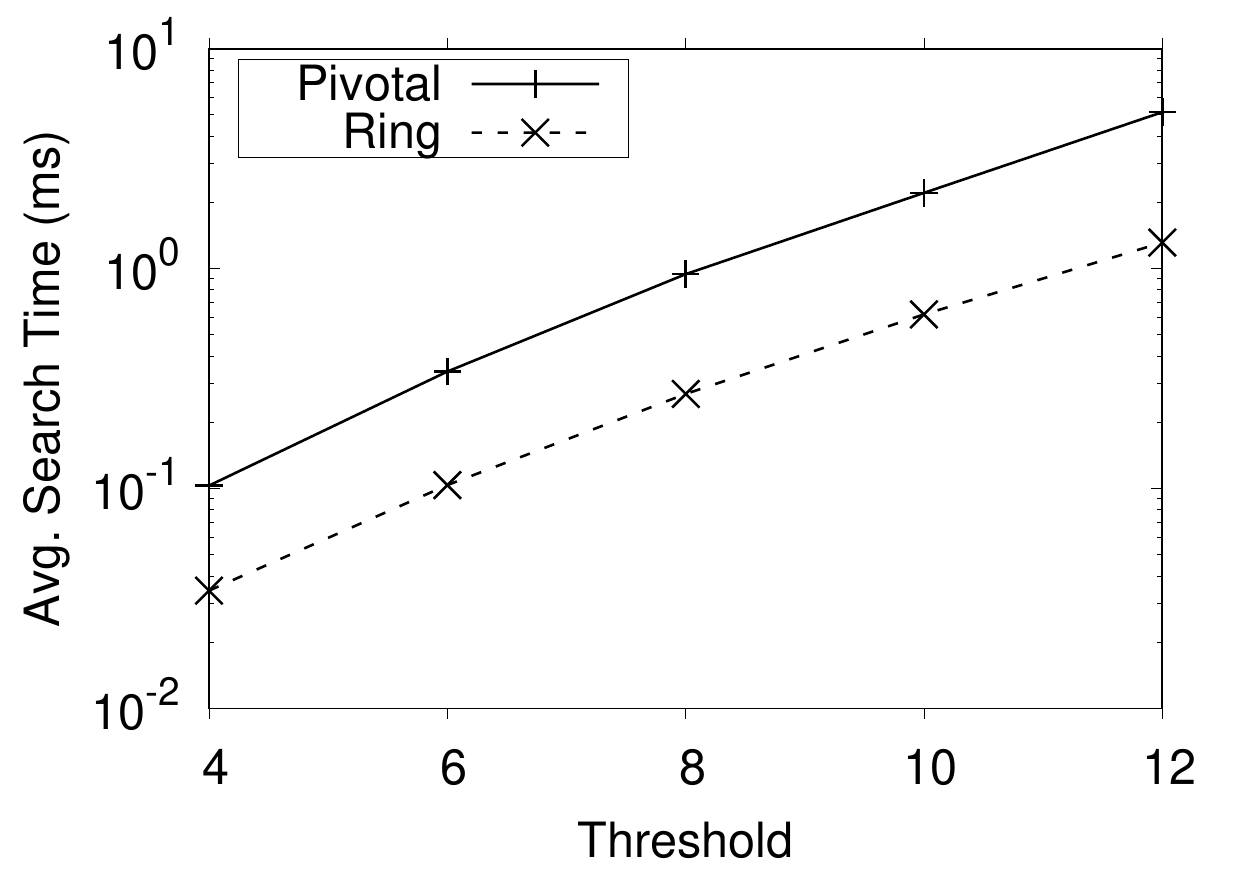}
    \label{fig:exp-compare-string-time-pubmed}
  }  
  \caption{Comparison on string edit distance search.}
\end{figure} 

\begin{figure} 
  \centering
  \subfigure[Candidate, AIDS]{
    \includegraphics[width=0.46\linewidth]{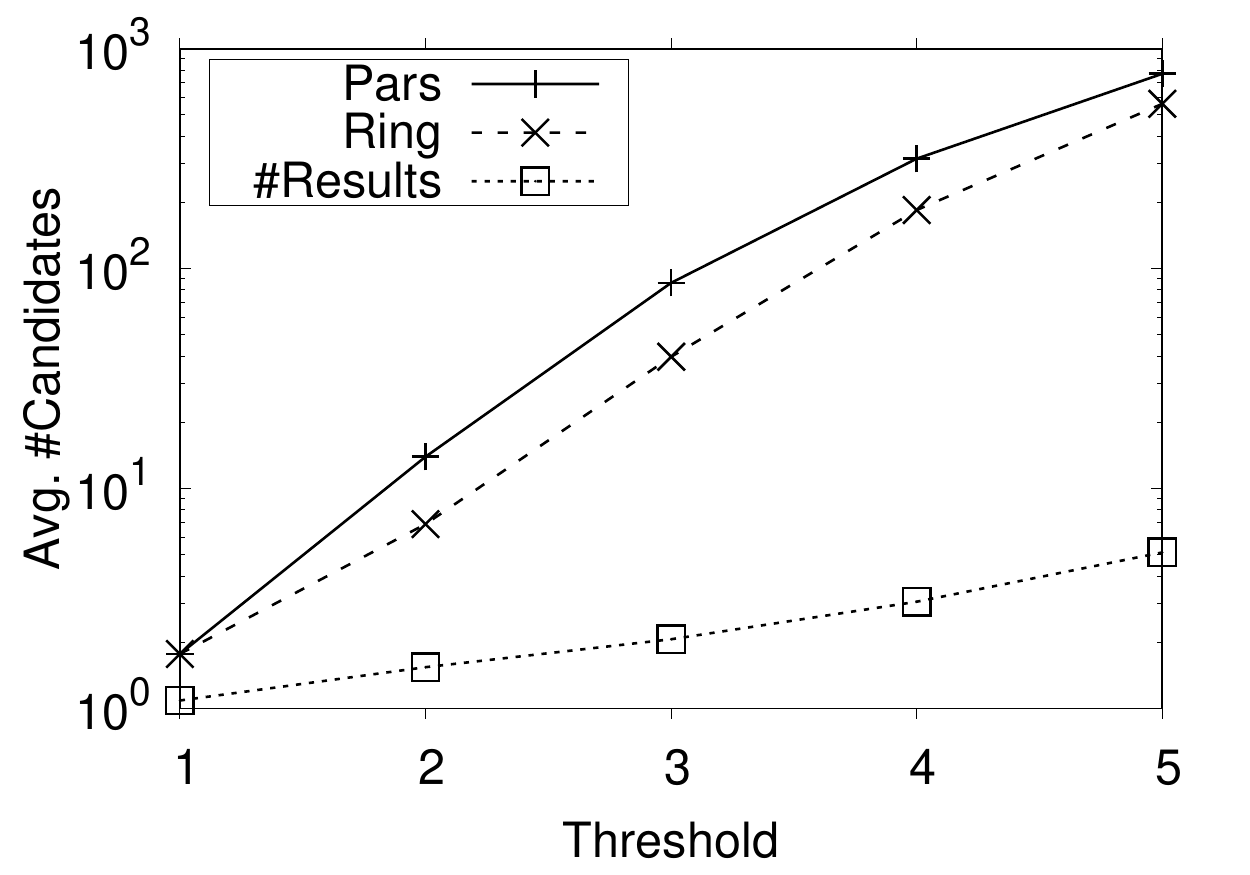}
    \label{fig:exp-compare-graph-cand-aids}
  }
  \goodgap   
  \subfigure[Time, AIDS]{
    \includegraphics[width=0.46\linewidth]{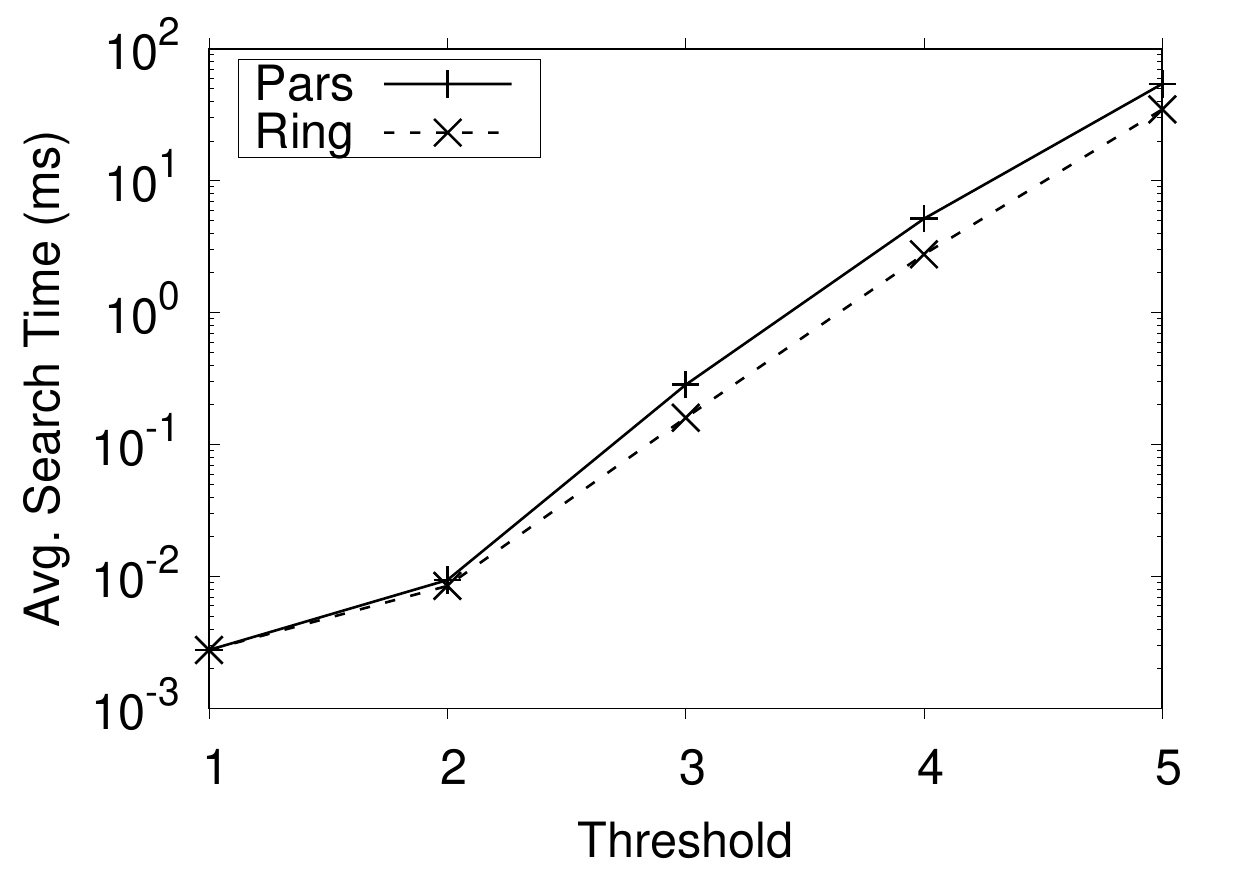}
    \label{fig:exp-compare-graph-time-aids}
  }
  \subfigure[Candidate, Protein]{
    \includegraphics[width=0.46\linewidth]{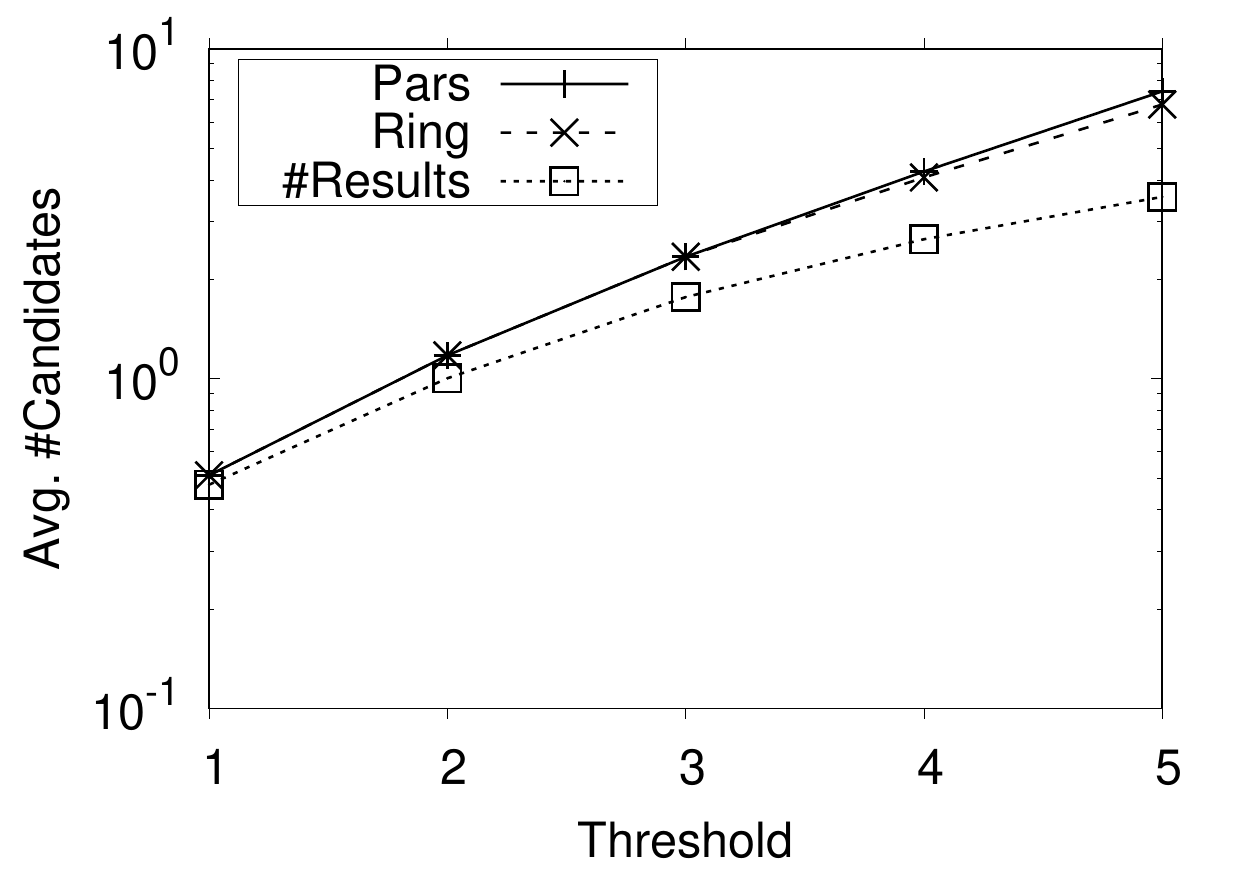}
    \label{fig:exp-compare-graph-cand-protein}
  }
  \goodgap   
  \subfigure[Time, Protein]{
    \includegraphics[width=0.46\linewidth]{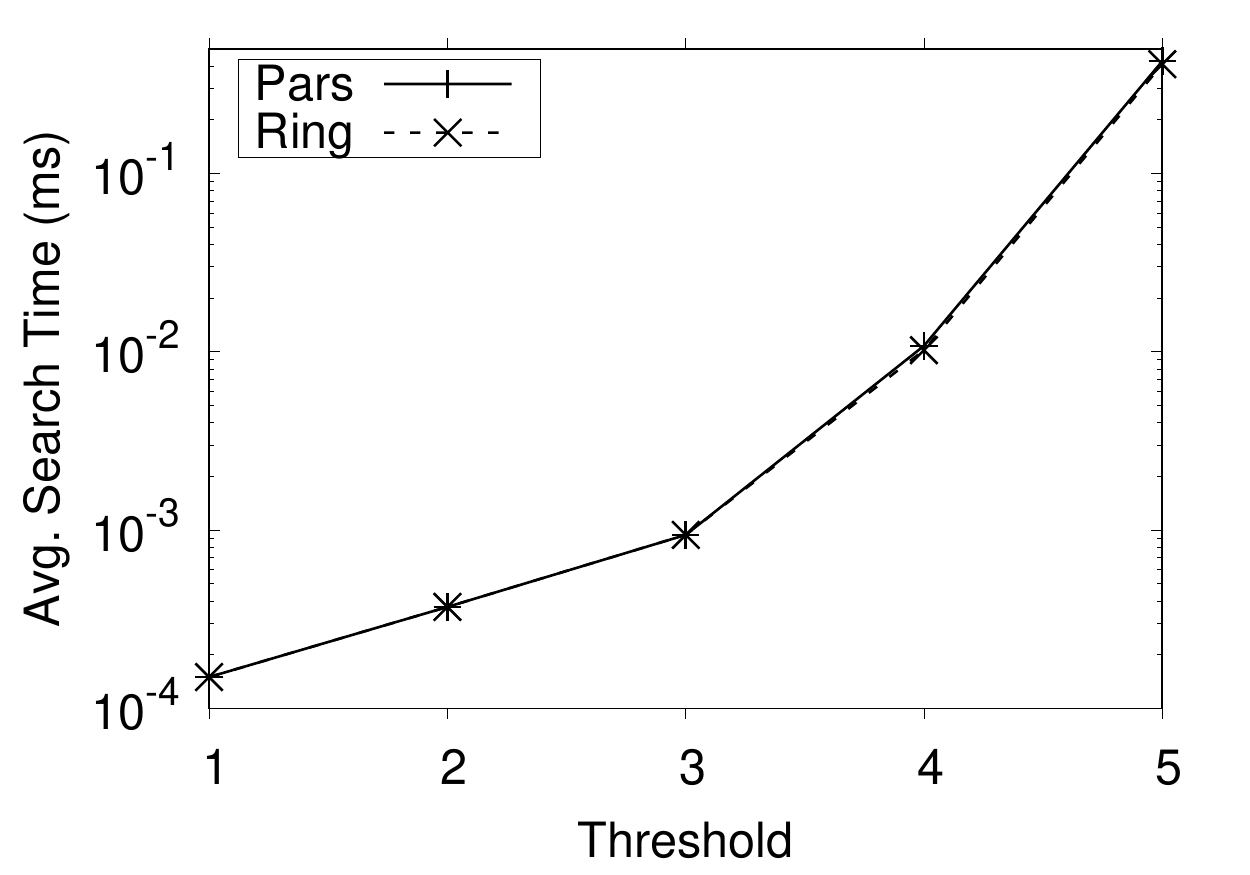}
    \label{fig:exp-compare-graph-time-protein}
  }  
  \caption{Comparison on graph edit distance search.}
\end{figure} 

\subsection{Comparison with Alternative Methods} \label{sec:exp-compare-alternative}

Figures~\ref{fig:exp-compare-hamming-cand-gist} 
--~\ref{fig:exp-compare-hamming-time-sift} show the average candidate number and 
search time on the two datasets for Hamming distance search. With the new principle, 
candidates and search time are significantly reduced. The speedup over \gph is up to 
5.9 times on GIST and 15.5 times on SIFT. SIFT's dimensionality is twice as much as 
GIST's. This results in more expensive verification per candidate on SIFT. The 
reduction in candidates is thus more converted to search time. We also notice that 
the speedup on the two datasets comes from not only Hamming distance computation but 
also the union of candidate sets before verification (e.g., both $b_1$ and $b_4$ 
produce $x^1$ as a candidate in Example~\ref{ex:hamming-pigeonhole}, and a union is 
required to avoid duplicate verification). This is attributed to two factors: 
\begin{inparaenum} [(1)]
  \item The size of the input to the union is reduced by the new principle. 
  \item The prefix-viable check for Hamming distance search is faster than the hash 
  table lookup used for the union. 
\end{inparaenum} 


The results for set similarity search are plotted in Figures~\ref{fig:exp-compare-set-cand-enron} 
--~\ref{fig:exp-compare-set-time-dblp}. Note that this is a $f(x, q) \geq \tau$ case. 
The smaller the threshold is, the looser the constraint we have. The fastest competitor 
is \ringalg, followed by \pkwise. Although \partalloc has small candidate number (especially 
on DBLP), it spends too much time on candidate generation and thus become less efficient. 
It finds candidates by selecting signatures with a cost model. 
Due to the fast verification~\cite{DBLP:journals/pvldb/MannAB16} on all the competitors, 
\partalloc's advantage on candidate number is compromised. This is in accord with the 
results of a recent study on set similarity join~\cite{DBLP:journals/pvldb/WangQLZC17}, 
suggesting that we need not only small candidate number but also light-weight filtering. 
Both \pkwise and \adaptprefix extend prefix lengths to find objects that share multiple 
tokens in prefixes. 
\pkwise is faster because 
\begin{inparaenum} [(1)]
  \item \pkwise uses token combinations to check the number of shared tokens, as 
  opposed to \adaptprefix's merging long lists; and 
  \item \adaptprefix computes prefix lengths by a cost model, which incurs considerable 
  overhead, despite reporting a smaller candidate number in a few cases. 
\end{inparaenum}  
\ringalg exploits the advantage of \pkwise and successfully reduces candidates from 
\pkwise at a tiny additional cost by counting the overlap of same class tokens in 
prefixes (merging two very short lists), thereby becoming the fastest. The speedup 
over the runner-up, \pkwise, is up to 2.0 times on Enron and 1.2 times on DBLP. 

We provide the results for string edit distance search in Figures~\ref{fig:exp-compare-string-cand-imdb} 
--~\ref{fig:exp-compare-string-time-pubmed}. We divide \pivotal's candidate number  
into two parts: the candidates that pass the pivotal prefix filter (denoted by 
Cand-1s) and the Cand-1s that pass the alignment filter (denoted by Cand-2s). 
\ringalg reduces candidates on the basis of \pivotal's Cand-1. 
By the alignment filter, \pivotal's Cand-2 number becomes less than \ringalg's 
candidate number, and even close to the result number on PubMed. However, since 
the filter involves expensive edit distance computation between \qgrams and 
substrings, the small Cand-2 number does not always pay off. 
\ringalg is always faster than \pivotal, by up to 3.1 times on IMDB and 3.9 times 
on PubMed. The reasons are: 
\begin{inparaenum} [(1)]
  \item \ringalg is able to early stop whenever the prefix-viable check fails at 
  some length $l' \leq l \leq m$, whereas the alignment filter has to check $m$ 
  boxes. 
  \item Instead of computing the exact edit distance between a \qgram and a 
  substring, \ringalg obtains a lower bound using bit vectors. This achieves good 
  filtering power at the cost of only a few bit operations. 
\end{inparaenum}
The speedup of \ringalg is more significant on PubMed, where long \qgrams are chosen 
to filter. This is in accord with the reduction in the time complexity of a box check 
from $O(\kappa^2 + \kappa\tau)$ to $O(\kappa + \tau)$. 

For graph edit distance search, Figures~\ref{fig:exp-compare-graph-cand-aids} 
--~\ref{fig:exp-compare-graph-time-protein} show the results on the two datasets. 
The reduction in candidate number and search time is not as significant as on the 
other problems. The main reason is, for the other problems, boxes are exclusive or  
almost exclusive, e.g., disjoint parts for Hamming distance search and disjoint  
token classes for set similarity search. For graph edit distance search, though 
the feature subgraphs are disjoint, their vertex mappings in the query graph 
via subgraph isomorphism may highly overlap. This fact showcases the 
hardness of complex structures like graphs. Nonetheless, \ringalg outperforms \pars 
by up to 1.9 times on AIDS. Their performances on Protein are close. \ringalg wins 
by a small margin of 1.04 times speedup. There are two factors for why the gap is 
more remarkable on AIDS: 
\begin{inparaenum} [(1)]
  \item There is still plenty of room between the numbers of candidates and results 
  on AIDS. \ringalg is able to reduce candidates by more than 40\% 
  and thus have remarkable gain in search time. On Protein, since the two numbers 
  are already close, the room for speedup is small. 
  \item Protein has much fewer labels than AIDS. This makes feature subgraphs less 
  selective and more likely to be contained by the query graph, meaning the data 
  graphs are more likely to pass the pigeonring principle-based filter. 
\end{inparaenum}

\section{Related Work} \label{sec:related}
\myparagraph{Pigeonhole principle} The pigeonhole principle is a theorem in 
combinatorics. It has several forms in which the numbers of items 
and boxes differ~\cite{brualdi2017introductory}. The simple form discusses the 
case of $(n + 1)$ items in $n$ boxes. The strong form discusses the case of 
$(\sum_{i=1}^{m}q_i - m + 1)$ items in $m$ boxes, where $q_i$ are positive 
integers. 
It is easy to extend these forms to real numbers. In set theory, 
it is formulated by Dirichlet drawer principle~\cite{daepp2003reading} using  
functions on finite sets. It can be also applied to infinite sets where $n$ 
and $m$ are described by cardinal numbers. 
Apart from these formulations, 
it has applications in various fields of mathematics. E.g., in number theory, 
Dirichlet's approximation theorem is a consequence of the pigeonhole principle~\cite{apostol1997modular}. 
It has also been used to bound the gaps between primes~\cite{terencetao2015tutorial}, 
which are steadily improved over the years towards proving the twin prime conjecture. 
The principle is also studied in theoretical computer science, 
especially for its provability and proof complexity~\cite{DBLP:journals/tcs/Haken85,buss1987polynomial,paris1988provability,DBLP:journals/cc/PitassiBI93,DBLP:journals/combinatorica/Ajtai94a,DBLP:journals/rsa/KrajicekPW95,DBLP:journals/cc/BussIPRS97,DBLP:conf/dlt/Razborov01,DBLP:journals/jacm/Raz04}. 
In the area of databases, the principle has been extensively utilized to solve 
thresholded similarity searches~\cite{DBLP:conf/icde/LiuST11,DBLP:conf/ssdbm/ZhangQWSL13,DBLP:conf/cvpr/NorouziPF12,DBLP:conf/icde/QinWXWLI18,DBLP:conf/vldb/ArasuGK06,DBLP:journals/pvldb/DengLWF15,DBLP:journals/pvldb/LiDWF12,DBLP:journals/tkde/WangQXLS13,DBLP:journals/tods/Qin0XLLW13,DBLP:conf/sigmod/DengLF14,DBLP:conf/icde/LiangZ17,DBLP:journals/vldb/ZhaoXLZW18} 
which can be formalized as $\tau$-selection problems, as well as 
other important problems such as association rule mining~\cite{DBLP:conf/vldb/SavasereON95}. 

\myparagraph{$\tau$-selection problem} The study on $\tau$-selection problems 
has received much attention in the last few decades. A multitude of solutions have 
been devised to handle different representations of objects and selection functions.
A common scenario is to deal with objects in multi-dimensional space. Efficient 
solutions were proposed for binary vectors and Hamming distance~\cite{DBLP:journals/jmlr/TabeiUST10,DBLP:conf/icde/LiuST11,DBLP:conf/cvpr/NorouziPF12,DBLP:conf/ssdbm/ZhangQWSL13,DBLP:conf/cvpr/OngB16,DBLP:conf/icde/QinWXWLI18}. More 
investigations were towards vectors with real-valued dimensions and $L^p$ distance. 
Notable approaches are tree-based  indexing~\cite{DBLP:conf/vldb/CiacciaPZ97,DBLP:conf/icml/BeygelzimerKL06}, 
lower bounding~\cite{DBLP:conf/cvpr/HwangHA12}, transformation (including dimension reduction)~\cite{DBLP:conf/sigmod/BerchtoldBK98,DBLP:conf/pods/YuOB00,DBLP:conf/icde/ZhangOT04,DBLP:journals/tods/JagadishOTYZ05,DBLP:conf/nips/WeissTF08,DBLP:journals/pvldb/SunWQZL14}, and 
locality sensitive hashing (LSH)~\cite{DBLP:conf/vldb/GionisIM99,DBLP:conf/compgeom/DatarIIM04,DBLP:conf/vldb/LvJWCL07,DBLP:journals/tods/TaoYSK10,DBLP:conf/sigmod/GanFFN12}. 
Some of them targeted $k$-NN queries rather than thresholded 
queries. However, the pigeonhole principle is barely utilized for $L^p$ distance, 
and approximate solutions are more popular than exact ones. 
We refer readers to a book on multi-dimensional indexes~\cite{samet2006foundations}, 
a survey on dimension reduction~\cite{DBLP:journals/cacm/AilonC10}, and a survey 
on the widely studied hashing-based approaches~\cite{DBLP:journals/corr/WangSSJ14}. 
The searches with other similarity measures such as 
Bregman divergence~\cite{DBLP:journals/pvldb/ZhangOPT09} and earth mover's distance~\cite{DBLP:journals/pvldb/XuZTY10,DBLP:journals/pvldb/TangUCMC13} have 
also been investigated. Recently, much work was devoted to set similarity 
search and its variant of batch processing (similarity join). 
Most solutions were developed for overlap, Jaccard, or cosine similarities. 
Prevalent exact approaches are based on prefix  filter~\cite{DBLP:conf/icde/ChaudhuriGK06,DBLP:conf/www/BayardoMS07,DBLP:journals/tods/XiaoWLYW11,DBLP:journals/is/RibeiroH11,DBLP:journals/pvldb/BourosGM12,DBLP:conf/sigmod/WangLF12,DBLP:conf/icde/AnastasiuK14,DBLP:conf/gvd/MannA14,DBLP:journals/pvldb/WangQLZC17}. 
Experimental evaluation can be found for set similarity join~\cite{DBLP:journals/pvldb/MannAB16}. 
Other exact approaches include partition filter~\cite{DBLP:conf/vldb/ArasuGK06,DBLP:journals/pvldb/DengLWF15}, 
enumeration~\cite{DBLP:conf/sigmod/DengT018}, 
tree indexing~\cite{DBLP:conf/icde/ZhangLWZXY17}, 
and postings list merge~\cite{DBLP:conf/sigmod/Sarawagi04,DBLP:conf/icde/HadjieleftheriouCKS08}. 
Approximate approaches have also been developed, such as minhash and other LSH~\cite{DBLP:conf/seqs/Broder97,DBLP:conf/sigmod/ZhaiLG11,DBLP:journals/pvldb/SatuluriP12,DBLP:conf/nips/AndoniILRS15,DBLP:conf/stoc/ChristianiP17,DBLP:journals/corr/ChristianiPS17}. 
The research on string similarity search (join) received tantamount attention. 
Most work adopted edit distance constraints. The methods are based on 
overlapping substrings (\qgrams)~\cite{DBLP:conf/vldb/GravanoIJKMS01,DBLP:conf/icde/LiLL08,DBLP:journals/pvldb/XiaoWL08,DBLP:journals/pvldb/WangDTZ13,DBLP:journals/tkde/WeiYL18}, non-overlapping substrings~\cite{DBLP:conf/vldb/LiWY07,DBLP:conf/sigmod/YangWL08,DBLP:journals/tkde/WangQXLS13,DBLP:journals/tods/LiDF13,DBLP:journals/tods/Qin0XLLW13,DBLP:conf/icde/YangWLWX13,DBLP:conf/sigmod/DengLF14,DBLP:conf/sigmod/YangWWW15}, or tree indexes~\cite{DBLP:conf/sigmod/ZhangHOS10,DBLP:conf/ssdbm/FenzLRNL12,DBLP:journals/vldb/FengWL12,DBLP:conf/icde/DengLFL13,DBLP:journals/tkde/LuDHO14,DBLP:journals/vldb/YuWLZDF17}. 
Experimental evaluation for the join case was reported in \cite{DBLP:journals/pvldb/JiangLFL14}. 
We also recommend a survey~\cite{DBLP:journals/fcsc/YuLDF16}. 
Some work proposed to use fuzzy match on tokens~\cite{DBLP:conf/sigmod/ChaudhuriGGM03,DBLP:journals/tods/WangLF14,DBLP:journals/pvldb/DengKMS17}. 
A recent study targeted Jaro-Winkler distance~\cite{DBLP:conf/wise/WangQW17}. 
Another line of methods cope with biosequence alignment, including  BLAST~\cite{FERVVAC:journals/jmb/AltschulGMML90}, the Smith-Waterman 
algorithm~\cite{smith81}, the BWT improvement~\cite{DBLP:journals/bioinformatics/LamSTWY08}, 
and those from the database area~\cite{DBLP:conf/vldb/MeekPK03,DBLP:journals/sigmod/CaoLOT04,DBLP:journals/pvldb/PapapetrouAKG09,DBLP:journals/pvldb/YangLW12}. 
For complex data types such as graphs, solutions have been developed for 
maximum common subgraph~\cite{DBLP:conf/sigmod/YanYH05,DBLP:conf/icde/ShangZLZI10,DBLP:conf/sigmod/ShangLZYW10,DBLP:conf/icde/JinBCZ12} and graph edit distance~\cite{DBLP:journals/pvldb/ZengTWFZ09,DBLP:journals/tkde/WangWYY12,DBLP:conf/icde/WangDTYJ12,DBLP:journals/vldb/ZhaoXL0I13,DBLP:journals/tkde/ZhengZLWZ15,DBLP:conf/icde/LiangZ17,DBLP:journals/vldb/ZhaoXLZW18} constraints. 
Another common data type is time series, including 
trajectories. Existing studies considered dynamic 
time warping~\cite{DBLP:conf/icde/YiJF98,DBLP:conf/icde/KimPC01,DBLP:journals/tkde/ChanFY03,DBLP:conf/sigmod/ZhuS03,DBLP:conf/pods/SakuraiYF05,DBLP:journals/kais/KeoghR05,DBLP:journals/vldb/FuKLRW08,DBLP:journals/vldb/KeoghWXVLP09,DBLP:conf/kdd/RakthanmanonCMBWZZK12,DBLP:conf/gis/YingPFA16}, 
edit distance~\cite{DBLP:conf/vldb/ChenN04,DBLP:conf/sigmod/ChenOO05,DBLP:conf/icde/RanuPTDR15,DBLP:journals/pvldb/NeamtuARS16}, 
longest common subsequence~\cite{DBLP:conf/icde/VlachosGK02}, 
other similarity measures~\cite{DBLP:journals/tkde/ChanFY03,DBLP:conf/icde/FrentzosGT07,DBLP:journals/jss/TiakasPNMSD09,DBLP:conf/medi/TiakasR15,DBLP:conf/sigmod/PengWLG16,DBLP:journals/pvldb/ShangCWJ0K17,DBLP:journals/pvldb/XieLP17,DBLP:journals/tkde/TaLXLHF17,DBLP:conf/sigir/0007BCXLQ18}, 
and systems for multiple similarity measures~\cite{DBLP:conf/sigmod/Shang0B18}. 
An experimental evaluation appeared in \cite{DBLP:journals/datamine/WangMDTSK13}.


\section{Conclusion} \label{sec:con}
In this paper, we proposed the pigeonring principle, an extension of 
the pigeonhole principle with stronger constraints. We utilized the 
pigeonring principle to develop filtering methods for $\tau$-selection 
problems. We showed that the resulting filtering condition always 
produces less or equal number of candidates than the pigeonhole 
principle does. Thus, all the pigeonhole principle-based solutions are 
possible to be accelerated by the new principle. A filtering 
framework was proposed to cover the pigeonring principle-based 
solutions. Based on the framework, we showed case studies for several 
common $\tau$-selection problems. The pigeonring principle-based 
algorithms were implemented on top of existing pigeonhole 
principle-based solutions to these problems with minor modifications. 
The superiority of the pigeonring principle-based algorithms were 
demonstrated through experiments on real datasets. 


\section*{Acknowledgments}
We thank Wenfei Fan (the University of Edinburgh) for his kind and valuable advice on 
paper writing. 


\balance

{
  \small
  \bibliographystyle{abbrv}
  \bibliography{all}
}

\fullversion{
\appendix
\section{Geometric Interpretation} \label{sec:geometric-interpretation}
We show a geometric interpretation which may help understand the strong 
form of the pigeonring principle. 

$m$ is a positive integer. 
Let $g(x)$ denote a discrete function defined on an integer
interval $[0 \twoldots 2m - 1]$. 
\begin{align*}
  g(x) =
  \begin{cases}
    0 & \text{, if } x = 0; \\
    \sum_{i = 0}^{x - 1} b_i & \text{, if } 0 < x \leq 2m - 1.
  \end{cases}
\end{align*}
Intuitively, $g(x)$ is the sum of boxes $b_0, \ldots, b_{x - 1}$.
For any $x \in [0 \twoldots m - 1]$ and $y \in [x \twoldots x + m]$,
$g(y) - g(x)$ is the sum of boxes $b_{x}, \ldots, b_{y - 1}$, i.e.,
$\sumv{c_x^{y-x}}$. Thus, $\sumv{B} = g(y) - g(x)$, if $y - x = m$.

We plot $g(x)$ by a set of points in Figure~\ref{fig:geometric-interpretation}. 
For every $x \in [0 \twoldots m - 1]$, we may draw a line through 
$(x, g(x))$ and $(x + m, g(x + m))$. The slopes of these lines are 
all equal to $\sumv{B} / m$, and thus smaller or equal to $n / m$ when 
$\sumv{B} \leq n$. We pick the line with the greatest $y$-intercept 
and break ties arbitrarily. Let $L$ denote this line, and suppose it 
goes through $(i, g(i))$ and $(i + m, g(i + m))$. Then we pick any $j \in [i + 1 \twoldots i + m]$,
and draw a line through $(i, g(i))$ and $(j, g(j))$, denoted by $L'$ 
(a few examples are shown in Figure~\ref{fig:geometric-interpretation} 
by dashed lines). Let $l = j - i$. By the definition of $g(x)$, the 
slope of $L'$ is $\sumv{c_i^l} / l$. It can be seen that no matter 
how we choose $j \in [i + 1 \twoldots i + m]$, the slope of $L'$ never 
exceeds the slope of $L$; i.e., $\forall l \in [1 \twoldots m]$, 
$\sumv{c_i^l} / l \leq \sumv{B} / m \leq n / m$. Thus, $\forall l \in [1 \twoldots m]$,
$\sumv{c_i^l} \leq l \cdot n / m$, meaning that $c_i^l$ is a prefix-viable 
chain. Because $L$ always exists, there always exists a prefix-viable 
chain for any length $l$. This is the same as Theorem~\ref{thm:pigeonring-principle-prefix}. 
By the geometric interpretation, we can also prove the theorem in 
an easier and more intuitive (not as formal, though) way than the 
proof in Section~\ref{sec:pigeonring-principle}.

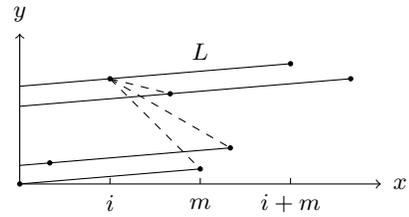
\begin{figure} [htbp]
  \centering
  \begin{tikzpicture} [scale = 0.4, minimum size = 5mm]

  \coordinate (o) at (0,0);
  \coordinate (y) at (0,5);
  \coordinate (x) at (12,0);
  \draw[<->] (y) node[above] {$y$} -- (0,0) --  (x) node[right] {$x$};

  \filldraw [black] (0, 0) coordinate (g_0) circle (2pt);
  \filldraw [black] (1, 0.7) coordinate (g_1) circle (2pt);
  \filldraw [black] (3, 3.5) coordinate (g_2) circle (2pt);
  \filldraw [black] (5, 3) coordinate (g_3) circle (2pt);
  \filldraw [black] (6, 0.5) coordinate (g_4) circle (2pt);
  \filldraw [black] (7, 1.2) coordinate (g_5) circle (2pt);
  \filldraw [black] (9, 4) coordinate (g_6) circle (2pt);
  \filldraw [black] (11, 3.5) coordinate (g_7) circle (2pt);

  \draw (g_0) -- (g_4);
  \draw (g_1) -- (g_5);
  \draw (g_2) -- (g_6);
  \draw (g_3) -- (g_7);
  \coordinate (g_1_y) at (intersection of g_1--g_5 and o--y);
  \coordinate (g_2_y) at (intersection of g_2--g_6 and o--y);
  \coordinate (g_3_y) at (intersection of g_3--g_7 and o--y);
  \draw (g_1_y) -- (g_1);
  \draw (g_2_y) -- (g_2);
  \draw (g_3_y) -- (g_3);
  
  \node[below] at (3,0) {$i$}; 
  \node[below] at (6,0) {$m$}; 
  \node[below] at (9,0) {$i+m$}; 
  \draw (3,0) -- (3,0.2);
  \draw (6,0) -- (6,0.2);
  \draw (9,0) -- (9,0.2);
  
  \node[above] at (6,3.75) {$L$};
  
  \draw[dashed] (g_2) -- (g_3); 
  \draw[dashed] (g_2) -- (g_4); 
  \draw[dashed] (g_2) -- (g_5); 
  
\end{tikzpicture}

  \caption{Geometric interpretation of the pigeonring principle.}
  \label{fig:geometric-interpretation}  
\end{figure}

\section{Integral Form}
We may extend the principles to the continuous case. We first 
consider the pigeonhole principle. 
Suppose there are infinite number of boxes in $[u, u + m]$, and their 
values are given by a function $b$. We have the following theorem. 
\begin{theorem} 
  $b$ is a Riemann-integrable function over $(-\infty, +\infty)$. 
  For any interval $[u, u + m]$, if $\int_{u}^{u + m} b(x)dx \leq n$, 
  then $\exists x \in [u, u + m]$ such that $b(x) \leq n / m$.
\end{theorem}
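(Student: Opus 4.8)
The plan is to argue by contradiction, mirroring the classical proof of the discrete pigeonhole principle (Theorem~\ref{thm:pigeonhole-principle}) but replacing the sum over boxes with the integral of $b$. Suppose the conclusion fails, so that $b(x) > n/m$ for every $x \in [u, u+m]$. The goal is then to derive $\int_u^{u+m} b(x)\,dx > n$, contradicting the hypothesis $\int_u^{u+m} b(x)\,dx \le n$ and thereby establishing the existence of an $x$ with $b(x) \le n/m$.

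First I would set $h(x) = b(x) - n/m$, which is Riemann-integrable on $[u, u+m]$ as the difference of an integrable function and a constant, and which by the contradiction hypothesis satisfies $h(x) > 0$ for every $x$ in the interval. Since $\int_u^{u+m} b(x)\,dx = n + \int_u^{u+m} h(x)\,dx$, it suffices to show $\int_u^{u+m} h(x)\,dx > 0$; that is, a strictly positive Riemann-integrable function on a nondegenerate interval has a strictly positive integral.

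The main obstacle is precisely this last step: from $h > 0$ pointwise one only obtains $\int_u^{u+m} h \ge 0$ by naive monotonicity, because the infimum of $h$ over a subinterval can equal $0$ without being attained, so strictness is not immediate. To secure the strict inequality I would invoke the Lebesgue criterion for Riemann integrability: $h$ is bounded and its set of discontinuities has measure zero, so $h$ is continuous at some point $x_0 \in [u, u+m]$. Continuity at $x_0$ together with $h(x_0) > 0$ yields a $\delta > 0$ and a subinterval $I \subseteq [u, u+m]$ of positive length on which $h(x) > h(x_0)/2$. Monotonicity of the integral then gives $\int_u^{u+m} h \ge \int_I h \ge \tfrac{1}{2} h(x_0)\cdot |I| > 0$, the required contradiction. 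As a concluding remark I would note that, just as in the discrete case, the proof adapts verbatim to the ``$\ge$'' version by reversing the inequality, and to variable thresholds by replacing the constant $n/m$ with an integrable threshold function whose integral over $[u, u+m]$ equals $n$; this integral form is intended to furnish the base case ($l = 1$) for a continuous analogue of the pigeonring principle.
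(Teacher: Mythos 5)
Your proof is correct; note, however, that the paper itself offers no proof to compare against --- this theorem sits in the appendix bare, accompanied only by the remark that it is ``akin to the mean value theorem, but defined on integrals rather than derivatives.'' Your argument supplies exactly what that remark glosses over. If $b$ were continuous, the mean value theorem for integrals would finish immediately; under mere Riemann integrability the crux is the step you correctly isolate: from $h = b - n/m > 0$ pointwise, monotonicity alone gives only $\int_u^{u+m} h(x)\,dx \geq 0$, and the strict inequality needs a genuine idea. Your appeal to the Lebesgue criterion --- $h$ is bounded on $[u, u+m]$ (being properly Riemann integrable there) and its discontinuities form a null set, so it is continuous at some $x_0$ with $h(x_0) > 0$, hence $h > h(x_0)/2$ on a subinterval $I$ of positive length, whence $\int_u^{u+m} h(x)\,dx \geq \int_I h(x)\,dx \geq \tfrac{1}{2} h(x_0) \, |I| > 0$ --- is sound, and the first inequality in that chain is legitimate precisely because $h > 0$ on the complement of $I$ under the contradiction hypothesis. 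If you wished to avoid measure theory, the same strict-positivity lemma also follows from Darboux sums and nested intervals: if $h \geq 0$ and $\int h = 0$, then every partition must contain a subinterval on which the supremum of $h$ is small (otherwise the upper sum stays bounded away from $0$), and iterating this produces nested closed intervals whose common point $x^*$ satisfies $h(x^*) = 0$, contradicting strict positivity. Either route establishes the theorem; yours is the standard and cleaner one, and its proof-by-contradiction structure also mirrors the style the paper uses for its discrete variants (e.g., the integer-reduction form of the pigeonhole principle).
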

The theorem is akin to the mean value theorem, but defined on integrals 
rather than derivatives. 

For the pigeonring principle, due to the placement of boxes in a ring, 
the theorem is applied to a periodic function.
\begin{theorem} 
  $b$ is a Riemann-integrable function over $(-\infty, +\infty)$ with 
  period $m$. For any interval $[u, u + m]$, if $\int_{u}^{u + m} b(x)dx \leq n$,
  then $\exists x_1 \in [u, u + m]$ such that $\forall x_2 \in [x_1, x_1 + m]$,  
  $\int_{x_1}^{x_2} b(x)dx \leq (x_2 - x_1) \cdot n / m$.
\end{theorem}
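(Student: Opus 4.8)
The plan is to transport the geometric interpretation of the strong form (Section~\ref{sec:geometric-interpretation}) into the continuous setting, using the cumulative integral of $b$ as the analogue of the discrete prefix sums $g(x)$. First I would define $g(x) = \int_{u}^{x} b(t)\,dt$; since $b$ is Riemann-integrable, $g$ is continuous on $\mathbb{R}$. Write $S = \int_{u}^{u+m} b(t)\,dt = g(u+m) - g(u)$, so that $S \leq n$ by hypothesis. Because $b$ has period $m$, the integral over every window of length $m$ equals $S$, hence $g(x+m) - g(x) = S$ for all $x$. This is the continuous counterpart of $\sumv{c_i^m} = \sumv{B}$ and of the fact that all the ``chords'' of length $m$ have the common slope $S/m \leq n/m$.

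Next I would introduce the intercept function $h(x) = g(x) - \tfrac{S}{m}\,x$, which plays the role of the $y$-intercept of the line of slope $S/m$ through $(x, g(x))$. A one-line computation gives $h(x+m) = g(x+m) - \tfrac{S}{m}(x+m) = g(x) + S - \tfrac{S}{m}x - S = h(x)$, so $h$ is continuous and periodic with period $m$. By the extreme value theorem $h$ attains a maximum on the compact interval $[u, u+m]$ at some point $x_1$; by periodicity this maximum is in fact global, so $h(x) \leq h(x_1)$ for every $x \in \mathbb{R}$. Choosing $x_1$ this way is the continuous analogue of selecting the chord of greatest intercept in the discrete proof, and it is what forces a single starting point to work for all lengths at once.

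Finally, for any $x_2 \in [x_1, x_1+m]$ I would write $\int_{x_1}^{x_2} b(t)\,dt = g(x_2) - g(x_1) = \bigl(h(x_2) - h(x_1)\bigr) + \tfrac{S}{m}(x_2 - x_1)$. Since $h(x_2) \leq h(x_1)$ and $x_2 - x_1 \geq 0$, this is at most $\tfrac{S}{m}(x_2 - x_1) \leq \tfrac{n}{m}(x_2 - x_1)$, exactly the claimed bound, and the prefix-viability for every $x_2$ follows uniformly from the single point $x_1$, paralleling how prefix-viability holds for all $l$ from one box.

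I do not expect a serious obstacle here; the argument is shorter and cleaner than the inductive discrete proof. The two points requiring care are the continuity of $g$, which is what licenses the extreme value theorem, and the observation that the window $[x_1, x_1+m]$ can extend past the original interval $[u, u+m]$. The latter is precisely what the periodicity of $h$ resolves: because $h$ attains its maximum over one full period at $x_1$, that maximum dominates $h$ on all of $\mathbb{R}$, so the comparison $h(x_2) \leq h(x_1)$ remains valid even when $x_2 > u+m$. Keeping this periodicity bookkeeping straight is the only subtlety worth emphasizing.
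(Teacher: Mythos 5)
Your proof is correct, and it takes essentially the approach the paper itself supplies: the paper states this integral form without proof, but its geometric interpretation of the discrete strong form (Appendix~\ref{sec:geometric-interpretation}) --- choosing among the parallel chords of slope $\sumv{B}/m$ the one with greatest $y$-intercept --- is exactly what your intercept function $h(x) = g(x) - \tfrac{S}{m}x$ and its maximizer $x_1$ reproduce in the continuous setting, with the extreme value theorem plus periodicity of $h$ standing in for the finite maximum over $i \in [0 \twoldots m-1]$. The argument is complete as written; the two technical points you flag (continuity of $g$, and periodicity letting the maximum on $[u, u+m]$ dominate on all of $\mathbb{R}$ so that $x_2 > u + m$ causes no trouble) are indeed the only ones requiring care, and you handle both.
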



}


\end{document}